\newcommand*{\diff}[1]{\mathop{}\!\mathrm{d}#1} 
\newcolumntype{L}[1]{>{\raggedright\let\newline\\\arraybackslash\hspace{0pt}}m{#1}}
\newcolumntype{C}[1]{>{\centering\let\newline\\\arraybackslash\hspace{0pt}}m{#1}}
\newcolumntype{R}[1]{>{\raggedleft\let\newline\\\arraybackslash\hspace{0pt}}m{#1}}
\theoremstyle{plain}
\newtheorem{theorem}{Theorem}[section]
\newtheorem{lemma}[theorem]{Lemma}
\newtheorem{corollary}[theorem]{Corollary}
\newtheorem{openq}[theorem]{Open Question}
\newtheorem{definition}[theorem]{Definition}
\newtheorem{example}{Example}
\DeclareMathOperator*{\argmin}{arg\,min}
\DeclareMathOperator{\PAV}{PAV}
\DeclareMathOperator{\CCAV}{CCAV}
\DeclareMathOperator{\EJR}{EJR}
\DeclareMathOperator{\EJRP}{EJR\text{+}}
\DeclareMathOperator{\JR}{JR}
\title{Reconfiguring Proportional Committees}
\author{
Chris Dong$^1$
\and
Fabian Frank$^1$\and
Jannik Peters$^2$\and
Warut Suksompong$^2$\\
\affiliations
$^1$Technical University of Munich\\
$^2$National University of Singapore
}
\begin{document}
\maketitle
\begin{abstract}
An important desideratum in approval-based multiwinner voting is proportionality. We study the problem of reconfiguring proportional committees: given two proportional committees, is there a transition path that consists only of proportional committees, where each transition involves replacing one candidate with another candidate?
We show that the set of committees satisfying the proportionality axiom of \emph{justified representation} (JR) is not always connected, and it is PSPACE-complete to decide whether two such committees are connected. On the other hand, we prove that any two JR committees can be connected by committees satisfying a $2$-approximation of JR. We also obtain similar results for the stronger axiom of \emph{extended justified representation} (EJR). In addition, we demonstrate that the committees produced by several well-known voting rules are connected or at least not isolated, and investigate the reconfiguration problem in restricted preference domains.
\end{abstract}

\section{Introduction}
Multiwinner voting is one of the core topics in computational social choice \citep{FSST17a}, with applications ranging from Q\&A platforms \citep{IsBr24a} to civic participation systems \citep{FGP+23a} to participatory budgeting \citep{PPS21a} and clustering \citep{KePe23a}. 
In particular, a substantial body of literature has focused on \emph{approval-based} multiwinner voting, where voters submit ballots containing the set of candidates that they approve \citep{LaSk23a}.
A common objective in this literature is to identify rules that produce committees which proportionally represent the voters \citep{ABC+16a,PeSk20a, BrPe23a}.

Traditionally, multiwinner voting is a static setting: based on the ballots submitted by voters, a single committee is selected. 
However, this is insufficient for modeling several applications,
especially when a temporal aspect is present.\footnote{We survey related prior work in \Cref{sec:related-work}.\label{fn:related-work}}
Recently, \citet{CHS24a} introduced the \emph{reconfiguration} perspective to multiwinner voting: given an initial committee and a target committee, can we reach the target committee from the initial committee by replacing one candidate at a time? 
For example, these committees could represent catalogs of products displayed by streaming service providers or online stores---changing these catalogs too abruptly might result in a suboptimal customer experience.

If no restrictions are placed on the intermediate committees, the answer to this question is obviously positive, as one can simply replace an excess candidate with a missing one in each step.
The question becomes interesting, however, when the intermediate committees are required to maintain a quality comparable to that of the two given committees. 
\citet{CHS24a} measured ``quality'' using scoring rules, which assign a score to each committee based on the number of approvals of the voters.
Committees that optimize certain scores are known to offer desirable properties; for instance, those maximizing the \emph{Proportional Approval Voting} (PAV) score satisfy a proportionality axiom called \emph{extended justified representation} (EJR) \citep{ABC+16a}. 
Nevertheless, committees with \emph{approximately} optimal scores need \emph{not} satisfy approximate proportionality notions.\footnote{As a simple example, suppose that the voters are split into two halves, where each half unanimously approves $k$ candidates but the two candidate sets are disjoint. 
Selecting a committee with all $k$ candidates from the same half yields a $2$-approximation of PAV. 
However, for any $\alpha\ge 1$, it is possible to choose $k$ and the number of voters so that the committee violates $\alpha$-EJR.}
Thus, the intermediate committees selected by Chen \emph{et al.}~do not necessarily retain the proportionality guarantees of the initial and target committees.
This inspires us to study the following problem: 
\begin{quote}
\emph{Given two committees $W$ and $W'$ satisfying a proportionality axiom, does there exist a transition from $W$ to $W'$ such that each step in the transition involves replacing a candidate by another candidate, and every intermediate committee also (approximately) satisfies this proportionality axiom?}  
\end{quote}
Such problems are typical in the field of reconfiguration \citep{Nish18a},\footref{fn:related-work} and we will examine them for different proportionality notions in approval-based multiwinner voting.
Aside from their intrinsic interest, investigating these problems will also yield insights into the structure of proportional committees, for example, whether the space of such committees is always connected or whether a proportional committee can be isolated in this space.

\subsection{Our Results}

In \Cref{sec:JR}, we begin by examining the set of committees satisfying the axiom of \emph{justified representation} (JR).
We provide an example of JR committees that are not connected by any transition path containing only JR committees. 
In fact, these disconnected JR committees can be heavily isolated: it is possible that for a JR committee, no other JR committee is within distance $k-2$ of it (and this bound is tight), where $k$ denotes the size of the committees.
We also prove that deciding whether two JR committees are connected is PSPACE-complete.
On the other hand, by relaxing the proportionality requirement for the intermediate committees, we show that any two JR committees can be connected via a path containing only ``$2$-approximate'' JR committees, and the factor of $2$ cannot be improved.
Furthermore, we examine the stronger axiom of \emph{extended justified representation} (EJR). 
While the existence of isolated JR committees transfers to EJR, any two EJR committees are connected via a path containing only $4$-approximate EJR committees. 

Next, in \Cref{sec:specific-rules}, we turn to proportional voting rules and present a general positive result: we show that the committees selected by several popular rules---including PAV, the Method of Equal Shares, and sequential-Phragmén---are all reachable from one another via committees satisfying JR. 
To prove this result, we show that each committee selected by these rules is connected to some committee containing an ``affordable'' JR subcommittee, and that all such affordable subcommittees are connected among themselves.
This result stands in stark contrast to our earlier finding that there exist committees which are heavily isolated despite satisfying  JR. 

Finally, in \Cref{sec:restricted-domains}, we study restricted preference domains and show that for the well-studied \emph{voter interval} (VI) and \emph{candidate interval} (CI) domains, the set of JR committees is always connected, that is, any JR committee can reach any other JR committee through a path containing only JR committees. 
In fact, on the CI domain we can always transition between two JR committees without involving auxiliary candidates that do not belong to the initial or target committee. 
By contrast, on the VI domain, this is true only if the committees contain no ``Pareto-dominated'' candidates. 

\subsection{Related Work}
\label{sec:related-work}

Besides the work of \citet{CHS24a} that we discussed earlier, our work is related to recent papers considering dynamic and online aspects of multiwinner voting.
\citet{DoPe24a} studied a model in which the candidate set changes dynamically over time, and each chosen committee should satisfy proportionality without differing too much from the previous committee.
A major difference between our work and theirs is that the set of candidates is static in our model, and the initial and target committees are given.
\citet{DHLS22a} investigated an online setting where candidates arrive over time and each candidate must be irrevocably chosen or rejected.
Several authors examined models that involve selecting a series of committees \citep{BKN20a,BFK22a,DFB23a,ZBET24a}, with some allowing for changes in voter preferences over time; however, these authors did not assume predetermined initial or target committees and also did not consider proportionality.
For a broader overview of research on temporal multiwinner voting, we refer to the survey by \citet{EOT24a}.

Reconfiguration has long been studied for various algorithmic problems such as minimum spanning tree \citep{IDH+11a} and graph coloring \citep{JKK+16a}, and recently gained increasing attention in social choice theory.
\citet{IKSY24a} and \citet{CNR25a} explored it in the context of fair division, whereas \citet{IIK+22a} addressed a similar problem on envy-free matchings.
\citet{OEFS13a,OEF20a} considered the connectivity and convexity of elections that select the same winner under certain voting rules, where two elections are considered adjacent if they can be obtained from each other by swapping adjacent candidates in the ranking of a single voter.
In addition, reconfiguration is related to the framework of ``one profile, successive solutions'' proposed by \citet{BoNi21a} for computational social choice.

\section{Preliminaries and Notation}
\label{sec:prelims}

In this section, we recap the standard setting of approval-based multiwinner voting, discuss a selection of known rules, and introduce terminology for the connectedness of (sets of) committees.
For any positive integer $t$, let $[t] = \{1,2,\dots,t\}$.

\paragraph{Approval-Based Multiwinner Voting}
We assume that we are given a set of $n$ \emph{voters} $N = \{v_1,\dots,v_n\}$
and a set of $m$ \emph{candidates} $C = \{c_1, \dots, c_m\}$. 
An \emph{approval profile} $A = (A_v)_{v\in N}$ is a collection of \emph{approval ballots} $A_v\subseteq C$, where each voter $v \in N$ submits the set of candidates that she approves. When a voter $v_i$ is given with index, we write $A_i$ instead of $A_{v_i}$ for convenience. Further, $N_{c} \coloneqq \{v\in N\mid c\in A_v\}$ denotes the \textit{support} of a candidate $c\in C$, and $N_W \coloneqq \{v\in N\mid W\cap A_v\neq \emptyset\} $ denotes the support of  a subset of candidates $W\subseteq C$.
Given a \emph{target committee size} $k\le m$, we call a subset of candidates $W\subseteq C$ a \emph{committee} if $\lvert W \rvert = k$ and a \emph{subcommittee} if $\lvert W \rvert \le k$.
An \emph{instance} consists of an approval profile $A$ and a committee size $k$.
A \emph{voting rule} $f$ takes as input an instance $(A,k)$ and outputs a non-empty set of committees $f(A,k)$. 

In this paper, we study rules and committees that proportionally represent the electorate. To define proportionality, we follow the \emph{justified representation} approach initiated by \citet{ABC+16a}. For $\ell \ge 0$, we say that a group $N'\subseteq N$  of voters is \emph{$\ell$-large} if $\lvert N'\rvert \ge \ell \cdot\frac nk$. We further say that $N'$ is $\ell$-\emph{cohesive} if $N'$ is $\ell$-large and $\lvert \bigcap_{v\in N'} A_v\rvert \ge \ell$, i.e., these voters approve at least $\ell$ candidates in common.

\paragraph{Proportionality Axioms} A (sub)committee $W$ satisfies \emph{justified representation} (JR) if for every $1$-cohesive group of voters $N'$, there exists a voter $v\in N'$ with $A_v\cap W \neq \emptyset$. In other words, for every group of voters of size at least $\frac{n}{k}$ who approve at least one common candidate, at least one of these voters must approve some candidate in the (sub)committee.
Similarly, a (sub)committee $W$ satisfies \emph{extended justified representation} (EJR) if for every $\ell \in \mathbb{N}$ and every $\ell$-cohesive group of voters $N'$, there exists a voter $v\in N'$ with $\lvert A_v\cap W \rvert \ge \ell$.
Further, a (sub)committee $W$ satisfies the strengthening \emph{EJR+} if for every $\ell \in \mathbb{N}$ and every $\ell$-large group of voters $N'$, either $\bigcap_{v\in N'} A_v \subseteq W$ or there exists a voter $v\in N'$ with $\lvert A_v\cap W \rvert \ge \ell$. 
We will also study approximate versions of these proportionality axioms. For $\alpha \ge 1$, we say that a (sub)committee satisfies $\alpha$-JR (resp., $\alpha$-EJR) if for every $\alpha$-cohesive group of voters $N'$ (resp., $(\alpha \ell)$-cohesive group), there exists $v\in N'$ such that $\lvert A_v\cap W\rvert \geq 1$ (resp., $\lvert A_v\cap W\rvert \geq \ell$ for each $\ell\in\mathbb{N}$). 
Observe that for $\alpha = 1$, this is equivalent to JR and EJR, respectively.
A \textit{witness} of a proportionality violation is a tuple of candidate, voter group, and parameter $\ell\in [k]$ such that the respective proportionality notion is violated; sometimes, we refer to only a candidate as a witness.  
For example, if there is a candidate $c\notin W$ and an $\ell$-large group of voters $N' \subseteq N_c$ such that $\lvert A_v\cap W \rvert < \ell$ for all $v\in N'$, then $(c,N',\ell)$ (or just $c$) is called a witness of an EJR+ violation for $W$. 
For JR, which only considers $\ell =1$, we can omit the parameter $\ell$, whereas for EJR, any candidate in the set of $\ell$ candidates is a witness.
Throughout the paper, we let $\JR(A,k)$ (resp., $\EJR(A,k)$, $\EJRP(A,k)$) denote the set of all committees satisfying JR (resp., EJR, EJR+) for a given instance. 
Similarly, we define $\alpha$-$\JR(A,k)$ and $\alpha$-$\EJR(A,k)$ for any $\alpha \ge 1$.

We say that a voting rule satisfies a given axiom if for every instance $(A,k)$ and every $W\in f(A,k)$, the committee $W$ satisfies the axiom for this instance. 

\paragraph{Common Rules} Next, we recap approval-based multiwinner voting rules that are known to satisfy JR or EJR(+), and introduce the relevant terminology.
For more detailed discussions, we refer to the comprehensive book by \citet{LaSk23a}.
Note that according to our definitions of GJCR, MES, and GreedyEJR, it can happen that these rules do not return committees but only subcommittees---in this case, we extend the subcommittees to committees arbitrarily.

\underline{Greedy Justified Candidate Rule (GJCR):}
This rule iteratively detects the largest current EJR+ violation and adds the responsible candidate to the committee.
Under GJCR, the addition of a candidate incurs a cost of $1$, which is uniformly split among all voters involved in the violation.
For a formal definition, see Algorithm \ref{alg:gjcr_w_prices}. 

\begin{algorithm}[tbh]
\caption{GJCR by \citet{BrPe23a}}
\label{alg:gjcr_w_prices}

 $W \gets \emptyset$\;
 $p(v) \gets 0$ for all $v\in N$\;
\For{$\ell \in \{k, k-1, \dots, 1\}$}{
$N(c) \gets \{v \in N_c \mid \lvert A_v \cap W \rvert < \ell\}$ for $c\notin W$\; \label{line:GJCR-violation}
\While{there is $c\notin W$ such that $\lvert N(c) \rvert \ge \ell \cdot\frac{n}{k}$ \label{line:GJCR-large-set} }{
 Add candidate $c$ maximizing $\lvert N(c)\rvert$ to $W$\;
 $p(v)\gets p(v) + \frac{1}{\lvert N(c)\rvert}$ for all $v\in N(c)$\;
}
}
 return $W$\;
\end{algorithm}

\citet{BrPe23a} have shown that each voter spends at most $p(v)\le \frac kn$ in total, so any (sub)committee returned by GJCR contains at most $k$ candidates.

\underline{Method of Equal Shares (MES):}
This rule starts with a budget of $\frac kn$ per voter and an empty committee. Adding a candidate to the committee incurs a cost of $1$. In each step, MES selects a candidate that minimizes the maximum amount of budget any voter has to pay. A detailed description can be found in Algorithm \ref{alg:mes}.

\begin{algorithm}[tbh]
\caption{MES by \citet{PeSk20a}}
\label{alg:mes}

 $W \gets \emptyset$\;
 $b(v) \gets \frac kn$ for all $v\in N$\;
 $D\gets \{c\in C\setminus W\mid \sum_{v\in N_c} b(v) \ge 1\}$\;
 
\While{$D\neq \emptyset$}{
    \For{$c\in D$}{
        $\widehat{q}(c)\gets \min \{q \ge 0 \mid  \sum_{v\in N_c} \min \{b(v),q \} \ge 1\}$\;
    }
    $c^*\gets \arg\min_{c\in D} \widehat{q}(c)$\;
    $W\gets W\cup \{c^*\}$\;
    \For{$v\in N_{c^*}$}{
        $b(v)\gets b(v) - \min \{b(v),\widehat{q}(c^*)\}$\;
    }
    $D\gets \{c\in C\setminus W\mid \sum_{v\in N_c} b(v) \ge 1\}$\;
}
 return $W$\;
\end{algorithm}

We will use the following notation for output committees $W$ of both MES and GJCR. For $v\in N$, we denote the set of candidates that $v$ pays for as $W(v)$, and for each $c\in W$, we denote the set of voters who pay for $v$ as $N(c)$. Note that $c\in W(v)$ if and only if $v\in N(c)$. 

\underline{Proportional Approval Voting (PAV):}
The PAV score of a subcommittee $W$ with respect to a profile $A$ is defined as $\mathrm{PAV}(W) \coloneqq \sum_{i \in N} H(\lvert A_i\cap W\rvert)$,
where $H(x) \coloneqq \sum_{y \in [x]} \frac 1y$ is the $x$-th harmonic number.
Given a profile $A$, 
the PAV rule outputs all committees 
maximizing the PAV score. 
When studying this rule, we will sometimes consider the marginal contribution of a candidate $c$ with respect to $W$, which is defined as $\Delta(W,c) \coloneqq \PAV(W) - \PAV(W \setminus \{c\})$.

\underline{CCAV and seqCCAV:}
The Chamberlin--Courant score of a committee is $\mathrm{CC}(W) \coloneqq \lvert \{v \in N \mid A_v \cap W \neq \emptyset\}\rvert$. The Chamberlin--Courant rule (CCAV) selects the committees $W$ maximizing $\mathrm{CC}(W)$. The sequential-CCAV rule (seqCCAV) starts with an empty subcommittee $W$ and iteratively adds an unselected candidate $c \notin W$ that maximizes $\mathrm{CC}(W \cup \{c\})$ to $W$.

\underline{GreedyEJR:}
The GreedyEJR rule starts with an empty subcommittee $W$. 
While $W$ does not satisfy EJR, it takes the largest $\ell \le k$ such that there exists an $\ell$-cohesive group $N' \subseteq N$ of voters, adds $\ell$ candidates from $\bigcap_{v \in N'} A_v$ to $W$, and deletes 
the voters in $N'$ from the instance.

\underline{seqPhragmén:}
The seqPhragmén rule begins by assigning each voter $v \in N$ a budget $b(v)$, which starts at $0$. 
It then increases these budgets at the same speed for all voters; assume without loss of generality that each voter receives one unit of budget per one unit of time. 
Whenever an unselected candidate $c$ has $\sum_{v \in N_c} b(v) = 1$, this candidate is added to the committee, and $b(v)$ is reset to $0$ for all $v \in N_c$.

GJCR, MES, and PAV satisfy EJR+, GreedyEJR satisfies EJR but not EJR+, while CCAV, seqCCAV, and seqPhragmén satisfy JR but not EJR \citep{BrPe23a,BrPe24a,LaSk23a}.

\paragraph{Distance, Connectedness, and Isolation} To measure the distance between two committees $W,W'$, we consider the size of the symmetric difference between them and define $d(W,W') = \lvert W\setminus W'\rvert = \lvert W'\setminus W\rvert$. 

Two committees $W,W'$ are called \textit{connected} in some set $\overline {\mathcal S}$ if there exists a sequence of committees $W_1,\dots, W_x \in \overline{\mathcal S}$ with $W_1 = W$ and $W_x = W'$ such that $d(W_{i}, W_{i+1}) = 1$ for all $i\in[x-1]$. 
Two subcommittees 
are called \textit{connected} in $\overline {\mathcal S}$ if all pairs of committees containing them are connected in $\overline {\mathcal S}$.
A set of subcommittees $\mathcal S$ is called \emph{connected} in a set $\overline{\mathcal S}$ if each pair of subcommittees $W,W'\in \mathcal S$ is connected in $\overline{\mathcal S}$.
Often, when $\overline{\mathcal S} = \mathcal S$, we simply say that $\mathcal S$ is 
connected.

In contrast, for $r\ge 1$, a committee $W$ is called $r$-\emph{isolated} in  $\mathcal S$ if $\lvert \mathcal S\rvert \ge 2$,  $W\in \mathcal S$, and for all committees $W'\ne W$ with $d(W,W') \leq r$ it holds that $W'\notin \mathcal S$. 
If a committee is $1$-isolated, we call it \emph{isolated} for short.

As an example, let $A=\{c_1,c_2,c_3,c_4\}$ and $k= 2$. \Cref{fig:committee_graph}
 illustrates all committees as nodes of a graph, with an edge between two committees if the distance between them is one. The committee $\{c_1,c_2\}$ is ($1$-)isolated in $\mathcal S = \{ \{ c_1,c_2\}, \{ c_3,c_4\}\}$, while the set $\{ \{ c_1,c_2\}, \{ c_3,c_4\}, \{ c_1,c_4\}\}$ is 
 connected.

\begin{figure}
    \centering
    \begin{tikzpicture}[scale=1.5, every node/.style={circle, draw, fill=white, minimum size=8mm, inner sep=0}]
    
    \node (1) at (90:1) {\small $\{c_1,c_2\}$};
    \node (2) at (30:1) {\small $\{c_1,c_3\}$};
    \node (3) at (330:1) {\small $\{c_2,c_3\}$};
    \node (4) at (270:1) {\small $\{c_3,c_4\}$};
    \node (5) at (210:1) {\small $\{c_2,c_4\}$};
    \node (6) at (150:1) {\small $\{c_1,c_4\}$};

    \draw (1) -- (2);
    \draw (2) -- (3);
    \draw (3) -- (4);
    \draw (4) -- (5);
    \draw (5) -- (6);
    \draw (6) -- (1);
    \draw (1) -- (3);
    \draw (1) -- (5);
    \draw (2) -- (4);
    \draw (2) -- (6);
    \draw (3) -- (5);
    \draw (4) -- (6);
\end{tikzpicture}
\caption{Each node represents a committee of size two. An edge is drawn between two committees if the distance between them is one.}
\label{fig:committee_graph}
\end{figure}
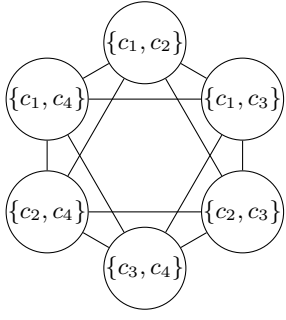

\begin{example}
    Consider the following instance with $n = 9 $ and $k = 3$ (so $\frac{n}{k} = 3$), and the approval profile
    \begin{align*}
        1 \times  \{c_1\},\,\,
        2 \times \{c_1,c_2\},\,\,
        4 \times \{c_3,c_4,c_5\},\,\,
        2 \times  \{c_3,c_4,c_6\}.
    \end{align*}
    Two possible JR committees are $\{c_1,c_3,c_4\}$ and $\{c_2,c_5,c_6\}$. The former committee additionally satisfies EJR, while the latter committee does not---the violation is caused by the last six voters, who would be underrepresented. 
    The two committees are connected in the set of JR committees, e.g., via the following path:
    \[
    \{c_1,c_3,c_4\} \to \{c_2,c_3,c_4\} \to \{c_2,c_5,c_4\} \to \{c_2,c_5,c_6\}.
    \]
    However,
     \[
    \{c_1,c_3,c_4\} \to \{c_5,c_3,c_4\} \to \{c_2,c_5,c_4\} \to \{c_2,c_5,c_6\}.
    \]
    is not a valid path, as $\{c_5,c_3,c_4\}$ does not satisfy JR.
\end{example}

\section{(Extended) Justified Representation}
\label{sec:JR}

Our definitions from \Cref{sec:prelims} immediately lead us to our first question: is the set $\JR(A,k)$ of committees satisfying JR always connected? 
We provide a strong negative answer to this question.
In particular, we show that there exists a JR committee with the property that no other committee of distance at most $k-2$ from it satisfies JR. 
\begin{theorem}
For any $k \ge 3$, there exists an approval profile $A$ and a committee $W \in \JR(A,k)$ that is $(k-2)$-isolated in $\JR(A,k)$, that is, every other committee 
in $\JR(A,k)$
has distance at least $k-1$ to $W$.
\label{thm:isolated_jr}
\end{theorem}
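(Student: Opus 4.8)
The plan is to build an explicit profile in which $W=\{c_1,\dots,c_k\}$ is JR but so fragile that deleting even a single member simultaneously opens $k-1$ violations that can only be closed one at a time. I introduce $k-1$ auxiliary ``trap'' candidates $z_1,\dots,z_{k-1}$, take $n:=k(k-1)^2$ voters (so the quota is $t:=n/k=(k-1)^2$), and use two voter types: for each $j\in[k-1]$ a \emph{pool} of $t-1$ voters approving only $\{z_j\}$, and for each $i\in[k]$ and $j\in[k-1]$ one \emph{link} voter approving $\{c_i,z_j\}$. A short count confirms $(k-1)(t-1)+k(k-1)=k(k-1)^2=n$. Under $W$ the only unrepresented voters are the pool voters, so each $z_j$ is approved by exactly $t-1<t$ unrepresented voters and each $c_i$ by none; hence $W\in\JR(A,k)$. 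Intuitively, each trap sits exactly one voter below quota, held there only because its link voters are represented through their $c_i$.

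For the isolation claim, take any $W'\ne W$ with $d(W,W')=r\le k-2$ and write $W'=(W\setminus S)\cup T$ with $S\subseteq W$, $T\cap W=\emptyset$, and $|S|=|T|=r\ge 1$; since the only non-$W$ candidates are traps, $T\subseteq\{z_1,\dots,z_{k-1}\}$. The key observation is that removing the members in $S$ pushes every trap over quota: for any $z_j\notin T$, its unrepresented approvers are the $t-1$ pool voters together with the $r$ link voters $\{c_i,z_j\}$ with $c_i\in S$, totalling $t-1+r\ge t$, so $z_j$ witnesses a JR violation unless $z_j\in T$. As this holds for all $k-1$ traps, avoiding every violation forces $T\supseteq\{z_1,\dots,z_{k-1}\}$ and thus $|T|\ge k-1>r$, a contradiction. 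Hence each $W'$ within distance $k-2$ fails JR.

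To meet the definition of isolation (which requires $|\JR(A,k)|\ge 2$) and to show the bound $k-2$ is tight, I exhibit the committee $W^*:=\{c_k,z_1,\dots,z_{k-1}\}$ at distance $k-1$: now every pool and link voter approves a selected trap, so no voter is unrepresented and $W^*\in\JR(A,k)$. Consequently the nearest JR committee other than $W$ lies at distance exactly $k-1$, which is precisely the claimed $(k-2)$-isolation.

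The step I expect to be the main obstacle is getting the parameters to close: every voter I add also raises the quota $t=n/k$, so priming each trap to exactly $t-1$ pool voters while still paying for the link voters is only possible for a specific number of traps and a specific pool size, which the count above pins down to $k-1$ traps and pools of size $(k-1)^2-1$; no smaller gadget makes the arithmetic balance. Abstractly, the construction encodes a bipartite member--trap incidence whose neighbourhood function satisfies $|N(S)|\ge|S|+1$ for all $1\le|S|\le k-2$ yet $|N(S)|\le|S|$ for some $|S|=k-1$, and the complete incidence on $k-1$ traps is the extremal realization. A final routine check -- that no $c_i$ can ever reach the quota -- confirms that traps are the only possible witnesses, so the violation argument above is exhaustive.
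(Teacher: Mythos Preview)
Your proof is correct and takes a genuinely different route from the paper's. The paper sets $n=k^3$, partitions voters into $N_1$ (covered exactly by $W=C_1$) and $N_2$, and then introduces an \emph{exponential} family $C_2$ of candidates, one for every pair $(v_i,N_2')$ with $v_i\in N_1$ and $N_2'\in\binom{N_2}{k^2-1}$. Its isolation argument is a covering count: any JR committee $W'\ne W$ must cover all but at most $k^2-2$ voters of $N_2$, and since each candidate covers at most $k^2-1$ of them, this forces at least $k-1$ members from $C_2$. Your construction is far more economical: only $2k-1$ candidates in total and $k(k-1)^2$ voters, with a direct combinatorial argument that removing any nonempty $S\subseteq W$ simultaneously activates all $k-1$ traps, so any swap set $T$ of size $\le k-2$ leaves at least one trap witnessing a violation. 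Both constructions have the additional property that no $2$-cohesive group exists (in yours, any two candidates share at most one approver), so your instance also yields the paper's EJR corollary. What the paper's approach buys is a template that it later adapts for the tightness of the $2$-JR bound (\Cref{lem:2isTight}); what your approach buys is a polynomial-size witness with a cleaner, essentially one-line isolation argument.
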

\begin{proof}
    Let the number of voters be $n = k^3$. We shall construct a profile and specify a committee $W$ satisfying JR, such that all other committees satisfying JR have a distance of at least $k-1$ to $W$ and, moreover, $W$ is not the only committee satisfying JR in this instance.

    Let $N = N_1 \cup N_2$ be the set of voters partitioned into two groups, $N_1 = \{v_1, \dots, v_{k^2}\}$ and $N_2 = \{v_{k^2+1}, \dots, v_n\}$. Furthermore, let $C = C_1 \cup C_2$ be the set of candidates, which we partition into a weak set ($C_1$) and a strong set $(C_2)$.
    We choose $C_1 = \{c_1,\dots, c_{k} \}$ such that $c_i$ is approved by consecutive voters $v_{ 1+ (i-1) k }$, $
    v_{2 + (i-1) k }$, $ \dots, v_{k + (i-1)k }$. Note that this covers all the voters in $N_1$ and means that each candidate in $C_1$ is approved by exactly $k$ voters.
    
    Additionally, for every $i\in [k^2]$ and every subset $N_2'\subseteq N_2$ with $\lvert N_2'\rvert = k^2-1$, we create a candidate $d_{i,N_2'}$  approved by the single voter $v_i$ and every voter in $N_2'$. Formally, we set $C_2 =\{d_{i,N_2'} \mid i\in [k^2],\, N_2'\in \binom{N_2}{k^2-1}\}$. Hence, every candidate in $C_2$ is approved by exactly $k^2 = \frac{n}{k}$ voters.

    We now claim that $W = C_1$ satisfies JR.
    Let $N'\subseteq N$ be a $1$-cohesive group with all voters in $N'$ commonly approving a candidate $c\in C\setminus W$.   
    Note that each candidate in $C_1$ is already in $W$, so $c$ must be from $C_2$. 
    Since $c$ is only approved by $\frac nk$ voters, the group $N'$ has to be exactly the support of $c$. 
    This means that $N'$ contains at least one voter $v_i\in N_1$, and there exists $c'\in C_1 = W$ with $c'\in A_{i}$, so $W$ satisfies JR.
    
    Let $W' \neq W$ be another committee satisfying JR.
    Note that every combination of a subset of $N_2$ of size $k^2-1$ together with one voter in $N_1$ forms a $1$-cohesive group. Since $W'$ satisfies JR, at least one of the following two statements holds: (i) every voter in $N_1$ approves at least one candidate from $W'$, or (ii) at most $k^2-2$ voters in $N_2$ approve no candidate from $W'$. Since each candidate in $C_2$ covers only one voter in $N_1$, (i) would imply that $W' = C_1 = W$, a contradiction. Thus, (ii) must hold. Since there exist $k^3-k^2$ voters in $N_2$ and each candidate is approved by at most $k^2-1$ voters in $N_2$, the number of candidates from $C_2$ that we need in $W'$ so that at most $k^2 - 2$ voters in $N_2$ are not covered is at least 
    \begin{align*}
    &\frac{k^3- k^2 -(k^2-2)}{k^2-1} = 
    \frac{k(k^2-1) -(2k^2 - 2) + k}{k^2-1}
    \\
    &= k - 2 + \frac{k}{k^2-1} > k -2. 
    \end{align*} 
    Since this number must be an integer, $W'$ contains at least $k - 1$ candidates from $C_2$; in particular, it has distance at least $k-1$ to $W = C_1$. Finally, one can check that any subset of $C_2$ that covers all voters in $N_2$ indeed satisfies JR, and that such a subset of size $k$ exists.
\end{proof}
Since the instance constructed in the previous proof does not contain $\ell$-cohesive groups for any $\ell > 1$, the statement also applies to EJR.
\begin{corollary}
For any $k \ge 3$, there exists an approval profile $A$ and a committee that is $(k-2)$-isolated in $\EJR(A,k)$.
\end{corollary}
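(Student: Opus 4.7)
The plan is to reuse the exact instance from \Cref{thm:isolated_jr} and argue that on that profile the EJR axiom collapses to JR, which transfers the isolation result verbatim. First, I would recall from the construction the support sizes of all candidates: every candidate in $C_1$ is approved by exactly $k$ voters, and every candidate in $C_2$ is approved by exactly $k^2 = n/k$ voters (one voter from $N_1$ together with the $k^2-1$ voters of the chosen subset of $N_2$).

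The key step is to check that no $\ell$-cohesive group exists for any $\ell \ge 2$. Such a group would need size at least $\ell \cdot n/k \ge 2k^2$ and at least $\ell \ge 2$ commonly approved candidates. However, since every candidate has support of size at most $k^2$, no set of $2k^2$ or more voters can share even a single commonly approved candidate, let alone two. Hence the only cohesive groups in this instance are $1$-cohesive, so the EJR condition for parameters $\ell \ge 2$ is vacuously satisfied by every committee.

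Consequently, a committee satisfies EJR on this instance if and only if it satisfies JR, i.e., $\EJR(A,k) = \JR(A,k)$. Applying \Cref{thm:isolated_jr}, the committee $W = C_1$ is $(k-2)$-isolated in $\JR(A,k)$, and by the above equality it is therefore $(k-2)$-isolated in $\EJR(A,k)$ as well. There is no real obstacle here beyond carefully recording the support sizes of the constructed candidates; once this bookkeeping is in place, the reduction to the JR case is immediate.
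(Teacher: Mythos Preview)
Your proposal is correct and matches the paper's own argument exactly: the paper simply remarks that the instance from \Cref{thm:isolated_jr} contains no $\ell$-cohesive groups for $\ell>1$, so EJR reduces to JR and the isolation transfers. Your write-up just spells out the support-size bookkeeping behind that one-line observation.
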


Next, we prove that the bound of $k-2$ in \Cref{thm:isolated_jr} is tight, that is, any JR committee has another JR committee of distance at most $k-1$ to it (provided that the former committee is not the unique JR committee).
Our proof is similar to the proof of \citet[Theorem~3.5]{EFI22b} on the number of committees satisfying JR.

\begin{restatable}{proposition}{prophigh}
\label{prop:isolated}
Suppose that in an instance $(A,k)$, a committee 
is $r$-isolated in $\JR(A,k)$. 
Then, $r\le k-2$.
\end{restatable}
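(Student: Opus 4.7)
The plan is to prove the contrapositive: if $|\JR(A,k)| \geq 2$ and $W \in \JR(A,k)$, then there exists another JR committee $W' \ne W$ with $d(W, W') \leq k-1$. Pick any $W^* \in \JR(A,k) \setminus \{W\}$. If $d(W, W^*) \leq k-1$, we are immediately done, so the interesting case is $d(W, W^*) = k$, meaning $W \cap W^* = \emptyset$. In this case, the goal is to construct a JR committee $W'$ with $|W \cap W'| = 1$, hence $d(W, W') = k-1$.

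Fix any $w \in W$ and consider, for each $c^* \in W^*$, the committee $W'_{c^*} := (W^* \setminus \{c^*\}) \cup \{w\}$; each has distance exactly $k-1$ from $W$, so it suffices to show that $W'_{c^*}$ is JR for some $c^*$. The easy case is when some $c^* \in W^*$ is \emph{non-critical}, meaning the size-$(k-1)$ subcommittee $W^* \setminus \{c^*\}$ already satisfies JR. Then $W'_{c^*}$ inherits JR, because adding the candidate $w$ to a JR (sub)committee only enlarges the set of 1-cohesive groups that have a representative.

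The hard case is when every $c^* \in W^*$ is \emph{critical}: removing any $c^*$ from $W^*$ uncovers some 1-cohesive group whose voters $v$ all satisfy $A_v \cap W^* \subseteq \{c^*\}$. Assume for contradiction that $W'_{c^*}$ fails JR for every $c^* \in W^*$. Then for each $c^* \in W^*$ there is a 1-cohesive group $N_{c^*}$ of size at least $n/k$ with $A_v \cap W^* \subseteq \{c^*\}$ and $w \notin A_v$ for every $v \in N_{c^*}$. Applying JR of $W$ to each $N_{c^*}$ yields a voter in $N_{c^*}$ approving some $\phi(c^*) \in W \setminus \{w\}$. Because $|W^*| = k > k-1 = |W \setminus \{w\}|$, the pigeonhole principle produces distinct $c^*_1, c^*_2 \in W^*$ with $\phi(c^*_1) = \phi(c^*_2) = w'$.

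The main obstacle is converting this collision into a genuine contradiction: the two collided voters both approve $w'$ while lying in 1-cohesive groups with essentially disjoint $W^*$-approvals. Following the splicing technique used in Elkind et al.~[Theorem~3.5], I would combine voters from $N_{c^*_1}$ and $N_{c^*_2}$, together with the structural constraint $A_v \cap W^* \subseteq \{c^*_i\}$, to produce a 1-cohesive subgroup of size at least $n/k$ whose common approved candidate has no representative in $W^*$ (or $W$), directly contradicting the JR property of the relevant committee. This contradiction forces some $W'_{c^*}$ to be JR, so $W$ admits a JR neighbor at distance $k-1$, yielding $r \leq k-2$.
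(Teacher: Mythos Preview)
Your reduction to the case $W\cap W^*=\emptyset$ and the pigeonhole setup are fine, but the final step is a genuine gap, not just a detail to fill in. The collision $\phi(c_1^*)=\phi(c_2^*)=w'$ does \emph{not} yield a contradiction, and no ``splicing'' of $N_{c_1^*}$ and $N_{c_2^*}$ will produce one in general.

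Here is a concrete instance where your argument breaks. Take $k=3$, $n=6$, $W=\{a_1,a_2,a_3\}$, $W^*=\{b_1,b_2,b_3\}$, and ballots
\[
A_1=A_2=\{a_1,b_1\},\quad A_3=A_4=\{a_1,b_2\},\quad A_5=A_6=\{a_2,b_3\}.
\]
Both $W$ and $W^*$ satisfy JR. Now fix $w=a_3$ (which your proof allows). Then for every $c^*\in W^*$ the committee $(W^*\setminus\{c^*\})\cup\{a_3\}$ fails JR, with witnessing groups $N_{b_1}=\{v_1,v_2\}$, $N_{b_2}=\{v_3,v_4\}$, $N_{b_3}=\{v_5,v_6\}$. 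Your map gives $\phi(b_1)=\phi(b_2)=a_1$ and $\phi(b_3)=a_2$, so you get the collision at $a_1$. But any $1$-cohesive group you build from $N_{b_1}\cup N_{b_2}=\{v_1,\dots,v_4\}$ (they commonly approve $a_1$) is represented in $W^*$ via $b_1$ and in $W$ via $a_1$; there is no violation of JR for either committee. Your argument simply stalls here. (The proposition is of course still true in this instance: $\{b_2,b_3,a_1\}$ is JR at distance $2$ from $W$, which you would have found had you picked $w=a_1$. The point is that your proof fixes $w$ arbitrarily and then claims a contradiction that does not exist.)

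The paper sidesteps this entirely by \emph{not} starting from an arbitrary second JR committee. Instead it runs seqCCAV for $k-1$ steps to obtain $W'$. Either $W'$ already satisfies JR (and can be completed to a committee at distance between $1$ and $k-1$ from $W$), or the greedy structure forces every candidate in the instance to have support at most $\frac{n}{k}$, so that every $1$-cohesive group is exactly the support of a single candidate. In that rigid regime, the completed committee $W''=W'\cup\{x\}$ covers all voters, and swapping one well-chosen candidate of $W''$ for one of $W$ preserves JR by a short direct argument. The key leverage is this global support bound, which your approach never obtains because an arbitrary $W^*$ carries no such structure.
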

\begin{proof}
Let $W$ be any committee satisfying JR, and assume that there is a JR committee different from $W$ in this instance.
Our goal is to show that there exists a JR committee different from $W$ that has distance at most $k-1$ from $W$.
Construct a subcommittee $W'$ of size $k-1$ by running seqCCAV for $k-1$ steps.
If $W'$ already satisfies JR, we may add another candidate so that the resulting committee $\widehat{W}$ neither coincides with nor is disjoint from $W$, and therefore $1\le d(W, \widehat{W}) \le k-1$.
Assume from now on that $W'$ does not satisfy JR.
This means that each of the $k-1$ candidates in $W'$ covers exactly $\frac{n}{k}$ voters, and the remaining $\frac{n}{k}$ voters are covered by another candidate $x$.
Let $W'' = W'\cup\{x\}$; clearly, $W''$ satisfies JR.
If $1 \le d(W, W'') \le k-1$, we are done.
Otherwise, $W''$ either coincides with or is disjoint from $W$.

\textbf{Case 1:} $W'' = W$.
Since there is a JR committee different from $W = W''$, there exists a candidate $y\not\in W''$ approved by some voter.
Let $z$ be the candidate in $W''$ that covers this voter.
Then, the committee $(W''\setminus\{z\})\cup\{y\}$ satisfies JR and has distance one from $W'' = W$.

\textbf{Case 2:} $W''$ and $W$ are disjoint.
Since $W$ satisfies JR and there exist $1$-cohesive groups in this instance, some candidate $a\in W$ is approved by some voter.
Let $b$ be the candidate in $W''$ that covers this voter.
Then, the committee $(W''\setminus\{b\})\cup\{a\}$ satisfies JR and has distance $k-1$ from $W$.
\end{proof}

On the complexity front, we show that checking whether two committees satisfying JR are connected within the set of JR committees is PSPACE-complete.
To this end, we reduce from the SAT-Reconfiguration problem, which is known to be PSPACE-hard \citep{GKMP09a}. 

\begin{restatable}{theorem}{pspace}\label{thm:PSPACE}
Given an instance $(A,k)$ and two JR committees, deciding whether the two committees are connected in $\JR (A,k)$ is PSPACE-complete.
\end{restatable}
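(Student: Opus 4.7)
The plan is to establish PSPACE-membership and PSPACE-hardness separately. Membership in PSPACE follows from a standard Savitch-style argument: one sets up a nondeterministic procedure that starts at $W$ and iteratively guesses a single-candidate swap, keeping in memory only the current committee and a binary counter of length $O(k \log m)$ that bounds the path by $\binom{m}{k}$. Each intermediate committee can be checked for membership in $\JR(A,k)$ in polynomial time: for every candidate $c \notin W$, compute $N_c$ and verify that either $|N_c| < n/k$ or some voter in $N_c$ approves a candidate already in $W$. Accepting if $W'$ is reached within the counter bound places the problem in NPSPACE, and PSPACE-membership then follows by Savitch's theorem.

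For PSPACE-hardness, I would reduce from SAT-Reconfiguration. Given a 3-CNF formula $\phi$ with variables $x_1, \dots, x_n$, clauses $C_1, \dots, C_m$, and two satisfying assignments $\alpha, \alpha'$, the plan is to construct an instance $(A, k)$ together with two JR committees $W_\alpha, W_{\alpha'}$ such that $\alpha, \alpha'$ are connected in SAT-R if and only if $W_\alpha, W_{\alpha'}$ are connected in $\JR(A,k)$. The construction would use three kinds of gadgets. For each variable $x_i$, I introduce two candidates $t_i, f_i$ together with a cohesive group of $n/k$ voters approving exactly $\{t_i, f_i\}$, forcing every JR committee to contain at least one of the two. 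For each clause $C_j$, I introduce a cohesive group of $n/k$ voters approving exactly the three literals of $C_j$, forcing at least one satisfying literal of $C_j$ into every JR committee. Finally, I add dummy candidates, each approved only by its own dedicated singleton cohesive group, chosen to calibrate the committee size against the total voter count so that the slot accounting is tight and the resulting committees encode exactly the assignments of $\phi$.

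The main obstacle is to rule out spurious JR committees that do not correspond to any assignment of $\phi$, and to ensure that single candidate swaps in $\JR(A,k)$ correspond precisely to single variable flips in SAT-R. Since JR only imposes lower bounds on coverage, a naive construction would admit JR committees containing both $t_i$ and $f_i$ for the same variable, which would create shortcuts in the reconfiguration graph and collapse SAT-R into JR-reconfiguration. I would address this by choosing the number of dummies and voter counts so that the $k$ committee slots are exactly consumed by the forced dummies and one representative per variable pair, leaving no slack for extraneous candidates. Verifying this tight accounting, and then showing a step-by-step bijection between JR-paths and SAT-R-paths, is where I expect the bulk of the technical work to lie; once that correspondence is in place, the bijection between JR committees and satisfying assignments of $\phi$ immediately yields the claimed PSPACE-hardness.
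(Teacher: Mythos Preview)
Your high-level plan matches the paper: membership via Savitch, hardness via a reduction from SAT-Reconfiguration using variable gadgets, clause gadgets, and filler to make the slot count tight. You also correctly identify the main danger, namely spurious JR committees that pick both $t_i$ and $f_i$. However, the concrete construction you sketch cannot be made to work as stated, and the obstruction is not the one you anticipate.

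The problem is arithmetic, not combinatorial. Your gadgets use pairwise \emph{disjoint} $1$-cohesive groups: one of size $n/k$ per variable, one of size $n/k$ per clause, and one per dummy. But $k$ is, by design, only $n_v + n_d$ (one slot per variable plus one per dummy), while the number of disjoint $1$-cohesive groups is $n_v + n_c + n_d$. Since any collection of disjoint $1$-cohesive groups has total size at least $(\text{number of groups})\cdot n/k$, and this total cannot exceed $n$, you get $n_v + n_c + n_d \le k = n_v + n_d$, i.e.\ $n_c = 0$. In other words, once each clause gets its own dedicated block of $n/k$ voters, those extra voters inflate $n/k$ beyond the size of every other block, and none of your variable or dummy groups remain $1$-cohesive. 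Your ``singleton cohesive group'' for each dummy already hints at this: a singleton can only be $1$-cohesive if $k \ge n$, which is incompatible with having $n/k$ voters in each of the other gadgets. No choice of $n_d$ repairs this; the calibration you hope to do is impossible with disjoint groups.

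The paper sidesteps the obstruction by \emph{sharing} voters across the clause constraints. Each clause gets only a single dedicated voter (approving its literal candidates and a private clause candidate), together with nine ``greedy'' voters who approve \emph{all} clause candidates. A violated clause then yields a $1$-cohesive group of size ten (one clause voter plus the nine shared greedy voters), but the greedy voters are reused across all clauses, so adding clauses adds only one voter each rather than $n/k$. An analogous sharing trick handles the dummy blocks. This reuse is precisely what lets the number of enforced JR constraints exceed $k$ while keeping all the relevant groups $1$-cohesive; without it, the tight-slot accounting you rely on cannot be set up.
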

\begin{proof}
    First, we show membership in PSPACE. 
    As PSPACE = NPSPACE \citep{Savi70a}, it is sufficient to solve the problem using a non-deterministic polynomial-space Turing machine. 
    To this end, we simply guess the path between the two committees.
    As the total number of committees is at most ${m \choose k} \le m^k$, if there exists a path between the two given committees that consists only of JR committees, then there exists one with length at most $m^k$.
    Such a path can be verified in polynomial space (i.e., using a polynomial number of bits).

    To establish hardness, we reduce from the PSPACE-hard SAT-Reconfiguration problem \citep{GKMP09a}.
    In this problem, we are given a set $\mathcal{A}$ of $a$ clauses and a set $\mathcal{B}$ of $b$ literals. 
    Each literal can appear positively or negatively. 
    Further, there are two given satisfying assignments $\varphi_1$ and $\varphi_2$, and the goal is to decide whether there exists a path from $\varphi_1$ to $\varphi_2$ consisting only of satisfying assignments, where two assignments are considered neighboring if they differ by only a single bit flip. 
    We can assume without loss of generality that $a + 27$ is divisible by $8$, since we can add copies of existing clauses otherwise. 

    Our JR-Reconfiguration instance consists of the following voters:
    \begin{itemize}
        \item for each clause $A \in \mathcal{A}$, a clause voter $v_A$;
        \item for each literal $x \in \mathcal{B}$, $10$ literal voters $v_x^1, \dots, v_x^{10}$;
        \item $9$ greedy voters $v_{g_1},\dots, v_{g_9}$; 
        \item $18$ special dummy voters $v_{t_1}, \dots v_{t_9}, v_{s_1}, \dots v_{s_9} $,  and $\frac{a + 27}{4}$ non-special dummy voters $v^1_{d_1}, \dots, v^1_{d_{(a + 27)/8}},$ $v^2_{d_1}, \dots, v^2_{d_{(a + 27)/8}}$.
    \end{itemize}
    The total number of voters is $n = a + 10b + 9 + 18 + \frac{a+27}{4}$.
    Further, we have the following candidates:
    \begin{itemize}
        \item for each clause $A \in \mathcal{A}$, a clause candidate $c_A$;
        \item for each literal $x \in \mathcal{B}$, two literal candidates $c_x$ and $c_{\overline{x}}$;
        \item $2$ special dummy candidates $c_1$ and $c_2$, and $\frac{a + 27}{8}$ non-special dummy candidates $c_{d_1}, \dots, c_{d_{(a + 27)/8}}$.
    \end{itemize}
    The approval sets are defined as follows:
    \begin{itemize}
        \item For each clause $A$, the clause voter $v_A$ approves the literal candidates corresponding to the literals in $A$, where the positivity or negativity is taken into account.
        This voter further approves the clause candidate $c_A$.
        \item Each literal voter approves both corresponding literal candidates.
        \item The greedy voters approve all clause candidates.
        \item The special dummy voters $v_{t_1}, \dots v_{t_9}$ and non-special dummy voters $v^1_{d_1}, \dots, v^1_{d_{(a + 27)/8}}$ approve the special dummy candidate $c_1$.
        Similarly, $v_{s_1}, \dots v_{s_9}$ and $v^2_{d_1}, \dots, v^2_{d_{(a + 27)/8}}$ approve $c_2$.
        \item Additionally, for each $j \in [\frac{a+27}{8}]$, the non-special dummy voters $v^1_{d_j}$ and $v^2_{d_j}$ approve the non-special dummy candidate $c_{d_j}$.
    \end{itemize}
    We set $k = b + \frac{a + 27}{8}$. Thus, 
    \begin{align*}
        \frac{n}{k} = \frac{a + 10b + 9 + 18 + \frac{a+27}{4}}{b + \frac{a+27}{8}} = 10.
    \end{align*}
    For the given satisfying assignments $\varphi_1$ and $\varphi_2$, we create corresponding committees consisting of all corresponding literal candidates and all non-special dummy candidates; note that each committee has size $b + \frac{a+27}{8} = k$. 
    Clearly, this reduction can be done in polynomial time.
    To see that both committees satisfy JR, observe that each clause, literal, and non-special dummy voter approves at least one candidate in each committee and therefore cannot be part of any witness of a JR violation. 
    The greedy voters need at least one clause voter to form a $1$-cohesive group, while the special dummy voters need at least one non-special dummy voter. 
    Thus, both committees satisfy JR. 

    If the assignments $\varphi_1$ and $\varphi_2$ are connected via a path of satisfying assignments, then by the same reasoning, the committees corresponding to the assignments on the path also satisfy JR.
    Moreover, each pair of adjacent committees differ only by one literal candidate being swapped for the ``opposite'' literal candidate.

    Conversely, suppose that the two committees are connected via a path of JR committees.
    Observe that to reach another JR committee from one of the two committees, we cannot swap out any non-special dummy candidate, as otherwise one of the two non-special dummy voters approving this candidate together with the $9$ special dummy voters approving the same special dummy candidate as this non-special dummy voter would witness a JR violation. 
    Further, if a literal candidate $c_{x}$ (resp., $c_{\overline{x}}$) is swapped out, it can only be swapped with $c_{\overline{x}}$ (resp., $c_{x}$), as otherwise the $10$ corresponding literal voters would be unsatisfied. 
    Finally, after such a swap, each clause voter must still approve at least one literal candidate, as otherwise the $9$ greedy voters together with this clause candidate would witness a JR violation. 
    Thus, after a swap, the new committee corresponds to a satisfying assignment differing only by the swap of a literal. 
    Therefore, since the two committees are connected, the original assignments are also connected in the SAT-Reconfiguration instance.
\end{proof}

The results so far paint a rather negative picture: proportional committees may not be connected, and determining whether they are connected is computationally intractable. 
Nevertheless, we can obtain positive results if we 
only demand that the intermediate committees satisfy \textit{approximate} proportionality notions.

We first show that any two JR committees are connected in the space of $2$-JR committees. 
To this end, we will need two lemmas. 
The first lemma is an application of the pigeonhole principle---it implies that, given any committee $W$ approved by $|N_W|$ voters and any $s$, we can always remove $s$ candidates so that the resulting subcommittee is still approved by at least $\lvert N_W\rvert - s \cdot \frac{n}{k}$ voters.

\begin{restatable}{lemma}{greedyremove}\label{lem:greedymove}
Let $W$ be any committee. There exists an order $c_1, \dots, c_k$ of the candidates in $W$ such that  $\lvert N_{W\setminus \{c_1,\dots, c_s\}}\rvert \ge \lvert N_W\rvert - s\cdot\frac{n}{k}$ for all $s \in \{0,1,\dots, k\}$.
    \label{claim1}
\end{restatable}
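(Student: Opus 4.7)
The plan is to build the order $c_1, \dots, c_k$ \emph{backwards} by greedily maximizing the coverage function $g(S) := \lvert N_S \rvert$ over subsets of $W$. I would first set $c_k := \argmax_{c \in W} \lvert N_c \rvert$, and then for $j = 2, \dots, k$ choose $c_{k-j+1} \in W \setminus \{c_{k-j+2}, \dots, c_k\}$ to maximize $g(S_j)$, where $S_j := \{c_{k-j+1}, \dots, c_k\}$. This produces a chain $\emptyset = S_0 \subsetneq S_1 \subsetneq \dots \subsetneq S_k = W$ with $\lvert S_j\rvert = j$, and crucially $W \setminus \{c_1, \dots, c_s\} = S_{k-s}$ for every $s \in \{0, 1, \dots, k\}$.

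The key structural fact is that the marginal gains $\delta_j := g(S_j) - g(S_{j-1})$ form a non-increasing sequence $\delta_1 \ge \delta_2 \ge \dots \ge \delta_k$. This is the standard property that marginal gains of the greedy algorithm on a monotone submodular function are non-increasing: $g$ is a union-of-sets coverage function, hence monotone and submodular, and combining submodularity with the greedy selection rule yields
\[
\delta_{j+1} \;\le\; g(S_{j-1} \cup \{c_{k-j}\}) - g(S_{j-1}) \;\le\; \delta_j,
\]
where the first inequality uses submodularity applied to $S_{j-1} \subseteq S_j$ and the element $c_{k-j}$ (which is the candidate added at step $j+1$), and the second uses that at step $j$ the candidate $c_{k-j+1}$ was chosen to maximize the marginal contribution to $S_{j-1}$.

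Given the chain and the non-increasing marginals, the lemma follows from a simple averaging argument, which is the ``pigeonhole'' flavor hinted at in the text. Since $\sum_{j=1}^k \delta_j = g(W) = \lvert N_W\rvert$, we obtain
\[
\lvert N_{W \setminus \{c_1,\dots,c_s\}}\rvert \;=\; \sum_{j=1}^{k-s} \delta_j \;=\; \lvert N_W\rvert - \sum_{j=k-s+1}^{k} \delta_j.
\]
The last $s$ terms of the non-increasing sequence $\delta_1, \dots, \delta_k$ are its $s$ smallest, so their average is at most the overall average $\lvert N_W\rvert / k \le n/k$. Hence $\sum_{j=k-s+1}^k \delta_j \le s \cdot n/k$, which gives the desired lower bound. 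I do not anticipate a serious obstacle here; the only point demanding care is getting the indexing right in the non-increasing-marginals inequality, after which the averaging step is completely routine.
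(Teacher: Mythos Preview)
Your proof is correct, but the route differs from the paper's. The paper constructs the order \emph{forward by removal}: at step $s$ it picks $c_s\in W_{s-1}:=W\setminus\{c_1,\dots,c_{s-1}\}$ whose unique coverage in $W_{s-1}$ is smallest, using a local pigeonhole (``some candidate is uniquely approved by at most $\tfrac{1}{k-s+1}\lvert N_{W_{s-1}}\rvert$ voters''). It then carries the resulting multiplicative factor $\tfrac{k-s}{k-s+1}$ through an explicit algebraic chain to obtain the additive bound $\lvert N_W\rvert - s\cdot\tfrac{n}{k}$. By contrast, you build the order \emph{backward by addition} via standard greedy on the coverage function, invoke the well-known fact that greedy marginals on a monotone submodular function are non-increasing, and finish with a single global averaging: the last $s$ marginals are the $s$ smallest, so their sum is at most $\tfrac{s}{k}\lvert N_W\rvert\le s\cdot\tfrac{n}{k}$. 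Your argument is cleaner and more conceptual (the averaging step replaces all of the paper's algebra), at the cost of appealing to the submodularity toolkit; the paper's version is slightly more self-contained but heavier on calculation. Note that the two procedures need not produce the same order, but both witness the existential claim.
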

\begin{proof}
We prove the statement by induction on $s$. The claim trivially holds for the induction base $s = 0$. To give a better understanding, we also show the statement for $s=1$. 
Note that there exists a candidate $c_1 \in W$ such that $\lvert N_{W \setminus \{c_1\}} \rvert \geq \lvert N_{W} \rvert -\frac{n}{k}$; otherwise, each of the $k$ candidates in $W$ is uniquely approved by more than $\frac{n}{k}$ voters, contradicting the fact that there are only $ n   \geq \lvert N_{W} \rvert$ voters in total. 
Therefore, the statement also holds for $s=1$.

For the inductive step, let $W_{s-1}$ be the subcommittee $W \setminus \{c_1, \dots, c_{s-1}\}$ for some $s > 1$. 
Then, there exists a candidate $c_s\in W_{s-1}$ such that at least $\frac{k-s}{k-s+1} \cdot \vert N_{W_{s-1}}\rvert$ voters approve at least one candidate in $W_{s} \coloneqq W_{s-1} \setminus \{c_s\}$. 
Indeed, otherwise each of the $k-s+1$ candidates in $W_{s-1}$ is uniquely approved by more than $\frac{1}{k-s+1}\cdot\vert N_{W_{s-1}}\rvert$ voters in $N_{W_{s-1}}$, which implies that $\vert N_{W_{s-1}}\rvert > \vert N_{W_{s-1}}\rvert$, a contradiction. 

By the induction hypothesis, we have $\vert N_{W_{s-1}}\rvert \geq \lvert N_{W} \rvert - (s-1)\cdot\frac{n}{k}$. Thus,

    \begin{align*}
     \lvert N_{W_s} \rvert &\geq \frac{k-s}{k-s + 1}\cdot\left(\lvert N_{W} \rvert - (s-1)\cdot\frac{n}{k}\right) \\ 
     &= 
    \frac{k-s}{k-s + 1}\cdot\left(\lvert N_{W} \rvert - n + n -  (s-1)\cdot\frac{n}{k}\right) \\
    &= \frac{k-s}{k-s+1}\cdot\left(\lvert N_{W} \rvert - n  +  (k- s + 1)\cdot\frac{n}{k}\right) \\
    &= \frac{k-s}{k-s+1}\cdot\left(\lvert N_{W} \rvert - n\right)  + (k-s)\cdot\frac{n}{k} \\ &= \frac{k-s}{k-s+1}\cdot(\lvert N_{W} \rvert - n) + n  -s\cdot\frac{n}{k}
    \\
    &= \frac{k-s}{k-s+1}\cdot\lvert N_{W} \rvert +  \frac{1}{k-s+1}\cdot n  -s\cdot\frac{n}{k} \\
    & \geq \lvert N_{W} \rvert - s\cdot\frac{n}{k},
    \end{align*}
    where the last inequality holds because $n \ge |N_W|$.
    This concludes the proof.
\end{proof}

The second lemma observes that JR ensures 
better representation for groups of voters of size 
at least $2 \cdot \frac{n}{k}$.
\begin{restatable}{lemma}{twojroverlap}\label{lem:twoJRoverlap} For any $W \in \JR(A,k)$ and every $2$-large and $1$-cohesive group $N' \subseteq N$, it holds that $\lvert N'\cap N_W \vert> \frac nk$.
\label{claim2}
\end{restatable}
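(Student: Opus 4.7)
The plan is a short contradiction argument directly from the definition of JR. Suppose for contradiction that $\lvert N' \cap N_W\rvert \le \frac{n}{k}$. Since $N'$ is $1$-cohesive, fix a candidate $c$ approved by every voter in $N'$.

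First, I would dispose of the easy case $c \in W$: then every voter in $N'$ already approves the member $c$ of $W$, so $N' \subseteq N_W$, giving $\lvert N' \cap N_W\rvert = \lvert N'\rvert \ge 2\cdot \frac{n}{k} > \frac{n}{k}$, a contradiction. So we may assume $c \notin W$.

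Next, consider the subgroup $N'' \coloneqq N' \setminus N_W$. Using $2$-largeness of $N'$ together with the assumption $\lvert N' \cap N_W\rvert \le \frac{n}{k}$, I obtain
\[
\lvert N''\rvert \;\ge\; 2\cdot\frac{n}{k} - \frac{n}{k} \;=\; \frac{n}{k},
\]
so $N''$ is $1$-large. Moreover, every voter in $N'' \subseteq N'$ approves the common candidate $c$, so $N''$ is $1$-cohesive. However, by definition of $N_W$, no voter in $N''$ approves any candidate in $W$, which contradicts the assumption $W \in \JR(A,k)$. Hence $\lvert N' \cap N_W\rvert > \frac{n}{k}$, as desired.

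There is no real obstacle here; the whole content is that JR, applied to the ``uncovered'' part $N' \setminus N_W$ of a $2$-large $1$-cohesive group, forces at least one extra voter of $N'$ to be covered on top of the $\frac{n}{k}$ we allotted, and the strict inequality follows by pushing the $1$-large threshold through.
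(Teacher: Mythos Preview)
Your proof is correct and follows essentially the same idea as the paper's: both hinge on the observation that $N'\setminus N_W$ is itself $1$-cohesive (inheriting the common candidate from $N'$), so JR forces $\lvert N'\setminus N_W\rvert < \frac{n}{k}$, which combined with $\lvert N'\rvert \ge 2\cdot\frac{n}{k}$ gives the strict inequality. The paper phrases this as a two-line direct argument rather than by contradiction, and it does not separate out the case $c\in W$; that case is in fact subsumed by your main argument (if $c\in W$ then $N'' = N'\setminus N_W = \emptyset$, immediately contradicting $\lvert N''\rvert \ge \frac{n}{k}$), so the distinction is harmless but unnecessary.
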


\begin{proof}
Since $W$ satisfies JR and $N'$ is a $1$-cohesive group, we have $\lvert N'\setminus N_W\rvert < \frac nk$. This implies that $\lvert N'\cap N_W\rvert =\lvert N'\rvert - \lvert N'\setminus N_W\rvert > 2  \cdot\frac nk - \frac nk = \frac nk$, as desired. 
\end{proof}

Using these two lemmas, we will 
show that 
any committee satisfying JR can reach any other committee satisfying JR via a path of $2$-JR committees. 
The high-level idea of the proof is that, by applying \Cref{claim1}, we obtain an 
order in which we 
remove candidates from both committees. With each removed candidate, we decrease the number of voters who approve at least one candidate by at most $\frac{n}{k}$ on average. 
On the other hand, \Cref{claim2} guarantees that any $2$-large, $1$-cohesive 
group has at least $\frac{n}{k}$ represented voters.
Thus, intuitively, the 
added candidates due to $2$-JR violations balance out against the candidates removed from the committees. Since the added candidates represent several voters, this enables us to connect the two resulting committees.
\begin{theorem}
    For any instance $(A,k)$, the set  $\JR(A,k)$ is connected 
    in
    $2$-$\JR(A,k)$. 
    Furthermore, any two JR committees can be connected via a sequence of at most $2k$ committees each satisfying $2$-JR. 
    \label{Theorem:2JR}
    \label{2ConnectivenessTheorem}
\end{theorem}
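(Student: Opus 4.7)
The plan is to construct an explicit transition path from $W$ to $W'$ through at most $2k$ committees, each satisfying $2$-JR, driven by the two orderings supplied by Lemma~\ref{lem:greedymove} and verified via Lemma~\ref{lem:twoJRoverlap}.

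First, I apply Lemma~\ref{lem:greedymove} separately to $W$ and to $W'$ to obtain orderings $w_1,\dots,w_k$ and $w'_1,\dots,w'_k$ respectively, with the property that removing the first $s$ candidates of either ordering reduces the voter coverage $|N_{(\cdot)}|$ by at most $s\cdot n/k$. Intuitively, the initial segment of each ordering consists of ``low-coverage'' candidates that are safe to remove first.

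Second, I construct the path by interleaving removals from $W$ (in the order $w_1,w_2,\dots$) with additions from $W'$ (taken in reverse order $w'_k,w'_{k-1},\dots$, so that ``high-coverage'' $W'$-candidates enter first). Concretely, I would set
\[
W_i \;=\; \{w_{i+1},\dots,w_k\} \;\cup\; \{w'_k,w'_{k-1},\dots,w'_{k-i+1}\}
\]
for $i\in\{0,1,\dots,k\}$, so that $W_0=W$, $W_k=W'$, and the transition $W_{i-1}\to W_i$ is the single swap $w_i\leftrightarrow w'_{k-i+1}$. When $w'_{k-i+1}$ already lies in $W_{i-1}$ (or similar overlaps occur), I would insert an auxiliary repair step that routes through a carefully chosen extra candidate; each such repair adds at most one swap, so the total number of committees remains bounded by $2k$.

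Third, I verify that each $W_i$ satisfies $2$-JR. Suppose for contradiction that some $W_i$ admits a $2$-cohesive group $N'$ with $N'\cap N_{W_i}=\emptyset$. Since $W_i$ contains both $\{w_{i+1},\dots,w_k\}$ and $\{w'_{k-i+1},\dots,w'_k\}$, the group $N'$ must be disjoint from the supports of both tails. Because $N'$ is $2$-large and $1$-cohesive, Lemma~\ref{lem:twoJRoverlap} applied to $W$ gives $|N'\cap N_W|>n/k$, which forces more than $n/k$ voters of $N'$ to lie in the \emph{exclusive} support of the removed segment $\{w_1,\dots,w_i\}$ (i.e., in $N_{\{w_1,\dots,w_i\}}\setminus N_{\{w_{i+1},\dots,w_k\}}$). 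Lemma~\ref{lem:greedymove} bounds this exclusive support by $i\cdot n/k$. The symmetric application of Lemma~\ref{lem:twoJRoverlap} to $W'$ and of Lemma~\ref{lem:greedymove} to the unadded tail $\{w'_1,\dots,w'_{k-i}\}$ gives a matching constraint from the other side, and combining both yields a contradiction.

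The hard part will be making the counting argument in the third step tight: the naive application of Lemma~\ref{lem:greedymove} leaves slack around $i\approx k/2$, so the contradiction must simultaneously exploit the $W$-side and $W'$-side constraints on $N'$ and use that the same voters of $N'$ cannot be fully absorbed by both the $W$-removed and the $W'$-not-yet-added candidates. A secondary obstacle is the overlap handling: the auxiliary swaps that repair overlaps must themselves preserve $2$-JR, and the accounting of repairs has to stay within the $2k$ committee budget. Once these two pieces are in place, iterating the single-swap argument produces the required path.
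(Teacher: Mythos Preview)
Your approach has a genuine gap in Step~3: the ``combined counting'' you allude to does not close. Here is a concrete instance where your construction produces a committee violating $2$-JR. Take $k=4$, $n=8$, voters $v_1,\dots,v_8$, and candidates $a_1,\dots,a_4,b_1,\dots,b_4,c,c'$, where $a_i$ and $b_i$ are each approved exactly by $\{v_{2i-1},v_{2i}\}$, and $c,c'$ are approved by $\{v_1,v_2,v_3,v_4\}$. Then $W=\{a_1,\dots,a_4\}$ and $W'=\{b_1,\dots,b_4\}$ are JR committees, and \emph{every} ordering of $W$ and of $W'$ satisfies Lemma~\ref{lem:greedymove} (each candidate covers exactly $n/k$ voters uniquely). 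With $w_i=a_i$ and $w'_i=b_i$, your committee $W_2=\{a_3,a_4,b_3,b_4\}$ covers only $\{v_5,\dots,v_8\}$, while $N'=\{v_1,\dots,v_4\}$ is $2$-cohesive (via $c,c'$) and entirely uncovered. Both one-sided inequalities you derive are satisfied with equality here ($|N'\cap N_W|=4\le i\cdot n/k=4$ and symmetrically for $W'$), so no contradiction arises; the same voters are absorbed by the $W$-removed and the $W'$-not-yet-added parts simultaneously, which is exactly what your argument assumes cannot happen.

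The deeper issue is that Lemma~\ref{lem:greedymove} controls \emph{total} coverage loss of a prefix, not coverage relative to any particular $2$-cohesive group, and the two orderings are in no way coordinated. The paper avoids this by a different route: it does not follow a fixed swap schedule between $W$ and $W'$. Instead it introduces an intermediate object, a \emph{$2$-JR greedy subcommittee} (built by repeatedly adding witnesses of $2$-JR violations, each covering $\ge 2n/k$ new voters, hence of size $\le k/2$), and shows two things: (i) any JR committee can be transformed in at most $k/2$ swaps into a committee containing such a subcommittee, by removing candidates in the Lemma~\ref{lem:greedymove} order and \emph{adaptively} inserting a witness whenever $2$-JR breaks (Lemma~\ref{lem:twoJRoverlap} guarantees each insertion gains $>n/k$ voters of $N_{W}$, forcing \#insertions $\le$ \#removals); and (ii) any two committees that each contain a $2$-JR greedy subcommittee are connected by at most $k$ swaps. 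The adaptivity in (i) is precisely what your fixed-schedule construction lacks.
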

\begin{proof}
    We say that a subcommittee $W'\subseteq C$ is \textit{$2$-JR greedy} if it satisfies $2$-JR and there is an enumeration $W' = \{d_1,\dots,d_\ell\}$ such that $\lvert N_{d_r}\setminus N_{\{d_1,\dots,d_{r-1}\}} \vert \ge 2 \cdot \frac nk$ for all $d_r\in W'$. That is, $W'$ can be obtained from the empty set by greedily adding new candidates, each covering at least $2\cdot\frac{n}{k}$ previously uncovered voters.
    
    \textbf{Proof outline:} 
    As the first step of the proof, we show that two committees each of which contains a $2$-JR greedy subcommittee are always connected through a path of at most $k$ committees satisfying $2$-JR.
    In the more involved second step, we prove that every committee satisfying JR is connected via at most $\frac k2$ swaps to some superset of a $2$-JR greedy subcommittee. Together, this implies that any two JR committees are connected via at most $2k$ committees satisfying $2$-JR.
    Both steps leverage the fact that $2$-JR greedy subcommittees always exist and contain at most $\frac k2$ candidates.

    \textbf{Step 1:} Let $W$ and $W'$ be two committees, and $W_g\subseteq W$ and $W'_g\subseteq W'$ be $2$-JR greedy subcommittees contained in them.
    By definition, $\lvert W_g\rvert,  \lvert W'_g\rvert \le \frac k2$, and every superset of $W_g$ and $W'_g$ satisfies $2$-JR. 
    Hence, starting with $W$, we can remove candidates from $W\setminus W_g$ and replace them with candidates from $W'_g$ (possibly along with candidates from $W'\setminus W'_g$) without violating $2$-JR. Call the resulting committee $W''$. Since $W'_g\subseteq W''$, we can remove all candidates from $W_g$ and replace them with the remaining candidates from $W'$ without violating $2$-JR. This concludes the proof of Step 1.

    \textbf{Step 2:} Let $W_{\mathrm{JR}}$ be any committee satisfying JR. Our goal is to connect $W_{\mathrm{JR}}$ to some superset of a $2$-JR greedy subcommittee. We first determine a suitable $2$-JR greedy subcommittee $W_g =\{d_1,\dots,d_\ell\}$ through the following process. 
    Enumerate the candidates in $W_{\mathrm{JR}} = \{c_1,\dots,c_k\}$ according to \Cref{claim1}. 
    Initialize $W' = W_{\mathrm{JR}}$ and $r = s = 1$. 
    If $s\le k$, remove $c_s$ from $W'$. 
    If a violation of $2$-JR occurs as a result of the removal, add a witness candidate to $W'$ and label it $d_r$, increment $r$, and repeat until $2$-JR is restored.
    Increment $s$ and repeat starting from the removal.
    Once the process ends, set $\ell = r-1$ and return $W'$.  

    To analyze this process, for each $r\in[\ell]$, denote by $W’_r = (W_{\mathrm{JR}}\setminus \{c_1,\dots,c_s\}) \cup \{d_1,\dots,d_{r-1}\}$ the set for which $d_r$ creates a $2$-JR violation and is added next.
    We claim that (i) for all $W’_r$ created during the process, it holds that $r\le s$; and (ii) when the process terminates, the final subcommittee $W’$ is a $2$-JR greedy subcommittee of the form  $\{d_1,\dots,d_\ell\}$.

    For (i), consider any set $W’_r = \{c_{s+1}, \dots, c_k\} \cup \{d_1,\dots,d_{r-1}\}$, and let $W’= W’_r\cup\{d_r\}$.  
    To show that $r\le s$, we perform a counting argument on $\lvert W'\cap N_{W_{\mathrm{JR}}}\rvert$. 
    In particular, we pretend that we obtain $W'$ by first removing $c_1,\dots,c_s$ from $W_{\mathrm{JR}}$ at once and then iteratively adding $d_1,\dots, d_r$. 
    Formally, let  $W_j = \{c_{s+1}, \dots, c_k\} \cup \{d_1,\dots,d_{j-1}\}$ for each $j\in[r]$. 
    By a telescoping sum argument, we have 
    \begin{align*}
        &\lvert N_{W'}\cap N_{W_{\mathrm{JR}}}\rvert \\
        &= (\lvert N_{W'}\cap N_{W_{\mathrm{JR}}}\rvert - \lvert N_{W_r}\cap N_{W_{\mathrm{JR}}}\rvert) \\
        &\quad + (\lvert N_{W_r}\cap N_{W_{\mathrm{JR}}}\rvert - \lvert N_{W_{r-1}}\cap N_{W_{\mathrm{JR}}}\rvert) \\
        &\quad + \cdots  \\
        &\quad + (\lvert N_{W_2}\cap N_{W_{\mathrm{JR}}}\rvert - \lvert N_{\{c_{s+1},\dots, c_k\}}\cap N_{W_{\mathrm{JR}}}\rvert) \\
        &\quad + \lvert N_{\{c_{s+1},\dots, c_k\}}\cap N_{W_{\mathrm{JR}}}\rvert,
    \end{align*}
    where $\{c_{s+1},\dots,c_k\} = W_1$.

    We claim that $\lvert (N_{d_j}\setminus N_{W_j}) \cap  N_{W_{\mathrm{JR}}}\rvert >\frac{n}{k}$ for each $j\in [r]$.    
    By the $2$-JR violation of $W'_{j}$, the set $N_{d_j}$ contains a $1$-cohesive subset $N^*$ of at least $2\cdot\frac nk$ voters that are not in $N_{W'_j}$. 
    Since ${W_j}\subseteq {W'_j}$, we have $N_{W_j}\subseteq N_{W'_j}$, and therefore $N^*\cap N_{W_j} = \emptyset$ too.
    By \Cref{lem:twoJRoverlap}, it holds that $\lvert N^*\cap N_{W_{\mathrm{JR}}}\rvert >\frac nk$, and so $\lvert (N_{d_j}\setminus N_{W_j}) \cap  N_{W_{\mathrm{JR}}}\rvert >\frac{n}{k}$, as claimed. 
    Further, note that $\lvert N_{ \{c_{s+1}, \dots, c_k\}}\cap N_{W_{\mathrm{JR}}}\rvert \ge\lvert N_{W_{\mathrm{JR}}} \rvert - s\cdot\frac{n}{k}$ due to Lemma \ref{claim1}.     
    Hence, by the telescoping relation above, we obtain that $\lvert N_{W'}\cap N_{W_{\mathrm{JR}}}\rvert > \lvert N_{W_{\mathrm{JR}}} \rvert - s\cdot\frac{n}{k} + r\cdot \frac{n}{k}$. 
    As $\lvert N_{W'}\cap N_{W_{\mathrm{JR}}}\rvert \le \lvert N_{W_{\mathrm{JR}}}\rvert$, it follows that $r<s$, concluding the proof of (i).

    To establish (ii), note that the final subcommittee is of the form $W' = (W_{\mathrm{JR}}\setminus \{c_1,\dots,c_k\}) \cup \{d_1,\dots,d_\ell\} = \{d_1,\dots,d_\ell\}$. Since each $d_r$ was added as a witness for a $2$-JR violation of some superset of $W_r$, we have that the candidate $d_r$ satisfies $\lvert N_{d_r}\setminus N_{\{d_1,\dots,d_{r-1}\}}\rvert\ge \lvert N_{d_r}\setminus N_{W_r}\rvert \ge 2\cdot\frac nk$.  
    Hence, the process ends with some $2$-JR greedy subcommittee, concluding the proof of (ii).    
    
    Finally, we specify the desired sequence by letting $W''_r = (W_{\mathrm{JR}}  \setminus \{c_1,\dots,c_r\}) \cup \{d_1,\dots,d_r\}$ for each $r\in[\ell]$. 
    By (ii), $W''_\ell$ is a committee containing a $2$-JR greedy subcommittee. Further, by (i), each committee $W''_r$ is a superset of some $2$-JR subcommittee created during the process. 
    Hence, $W''_r$ satisfies $2$-JR. 
    Since $d(W_{\mathrm{JR}},W''_1) = d(W''_1,W''_2) = \dots = d(W''_{\ell-1},W''_\ell) = 1$ and $\ell\le \frac k2$ by (ii), this completes the proof of Step~2, and therefore the entire proof. \qedhere
\end{proof}

Theorems \ref{thm:isolated_jr}  and \ref{2ConnectivenessTheorem} show that two committees satisfying JR may not be connected in the set of JR committees, but are always connected in the set of $2$-JR committees.  
This raises the question of whether the factor $2$ is tight.
The following proposition answers this question in the affirmative.
\begin{restatable}{proposition}{alphabettertwoimp}\label{lem:2isTight}
For any $\alpha < 2$, there exists an instance $(A,k)$ such that $\JR(A, k)$ is not connected 
in
$\alpha$-$\JR(A,k)$.
\end{restatable}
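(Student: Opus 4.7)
The plan is to exhibit, for each fixed $\alpha<2$, an explicit instance together with two JR committees $W$ and $W'$ such that every path between them in the committee graph must traverse a committee that fails $\alpha$-JR. The natural starting point is the isolation construction from Theorem~\ref{thm:isolated_jr}: there, the committee $C_1$ is isolated among JR committees because every single-swap neighbor admits a $1$-cohesive group of size exactly $n/k$ that the new committee fails to cover. I would mimic this structure while ensuring the uncovered groups are $\alpha$-cohesive (size $\geq \alpha n/k$) rather than merely $1$-cohesive, so that the violations register against $\alpha$-JR.

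Concretely, fix $\alpha<2$, set $\varepsilon=2-\alpha$, and pick $k\geq 3$ large in terms of $1/\varepsilon$. Let $n=kq$ for a large integer $q$ (tailored to $\alpha$), and split the voters into $N_1$ and $N_2$. Take $W=C_1=\{c_1,\ldots,c_k\}$ so that each $c_j$ is approved by exactly $|N_1|/k$ voters in $N_1$, giving a partition of $N_1$. Then introduce a family $C_2$ of ``alternative'' candidates, each with support of size $\lceil \alpha q\rceil$ composed of a concentrated $N_1$-block (contained in one of the $N_{c_j}$'s) and an $N_2$-block of size strictly less than $q$. The exact apportionment is chosen so that (i) the $N_2$-part of each support is too small to form a $1$-cohesive group on its own (so that $W=C_1$ remains JR), and (ii) many such candidates exist with varying $N_2$-blocks. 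A second JR committee $W'\subseteq C_2$ is built to cover $N_2$ using $k$ carefully selected alternative candidates.

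The core step is to show that every single-swap neighbor of $W$ violates $\alpha$-JR. Removing $c_j$ and inserting some $x\not\in W$ leaves at least one voter $v^{**}\in N_{c_j}$ whose only $W$-approved candidate was $c_j$; by construction we can select an $\alpha$-cohesive $C_2$-support containing $v^{**}$ whose $N_2$-part avoids $N_x\cap N_2$, witnessing an $\alpha$-JR violation. Combined with the existence of $W'$, this means $W$ lies in a distinct component of the $\alpha$-JR graph from $W'$.

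The main obstacle, as suggested by the proof of Theorem~\ref{2ConnectivenessTheorem}, is the tension between keeping $W$ JR and $\alpha$-isolating it: the large $\alpha$-cohesive supports inherently induce $1$-cohesive subsets of size $\geq n/k$, and if any such subset lies entirely in $N_2$ then $W=C_1$ fails JR. Resolving this requires the $N_1$-part of each support to be substantial (forcing $|N_1|>(\alpha-1)n$), while the $N_1$-part must also be small enough (at most $|N_{c_j}|$) so that removing $c_j$ renders some support uncoverable by $C_1\setminus\{c_j\}$. The condition $\alpha<2$ is precisely what opens a non-empty parameter window for this balance, and making this window explicit (with concrete bounds on $|N_1|, |N_2|, k, q$ as functions of $\varepsilon$) is the technical crux of the argument.
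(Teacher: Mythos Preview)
Your isolation strategy---showing that \emph{every} single-swap neighbour of $W=C_1$ fails $\alpha$-JR---cannot be pushed all the way to $\alpha<2$; it breaks down at $\alpha=3/2$. Here is the obstruction you overlook. Consider the swap that removes $c_j$ and inserts some $x\in C_2$ whose $N_1$-block lies in the \emph{same} cell $N_{c_j}$ (nothing in your construction forbids this). A witness $d\in C_2$ for an $\alpha$-JR violation must have its entire support uncovered, so $N_1^d\subseteq N_{c_j}\setminus N_1^x$. Writing $s$ for the common $N_1$-block size, this forces $\lvert N_{c_j}\rvert\ge 2s$. On the other hand, the requirement $\lvert N_2^d\rvert<q$ (needed so that $W=C_1$ itself is JR) together with $\lvert N_1^d\rvert+\lvert N_2^d\rvert\ge\alpha q$ forces $s>(\alpha-1)q$. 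Since $\lvert N_{c_j}\rvert=\lvert N_1\rvert/k< n/k=q$, you get $q>2s>2(\alpha-1)q$, i.e.\ $\alpha<3/2$. Allowing variable block sizes does not help: the adversarial $x$ can always take the largest available $N_1$-block, and the same arithmetic re-emerges. So the ``non-empty parameter window'' you invoke is actually empty for $\alpha\in[3/2,2)$.

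The paper circumvents this by replacing isolation with a \emph{barrier} argument. It allows the $N_1$-parts of $C_2$-candidates to be arbitrary $r$-subsets of $N_1$ (not confined to one cell), picks $r$ so that $\alpha_r=\tfrac{2r}{r+2}\to 2$, and shows that \emph{every} committee containing exactly $r$ candidates from $C_2$ violates $\alpha_r$-JR. The point is that after $r$ deletions from $C_1$ and $r$ insertions from $C_2$, a simple count gives at least $r$ uncovered voters in $N_1$ and at least $r$ uncovered voters in $N_2$; together these $2r$ voters form the support of some $C_2$-candidate, yielding the violation. Since $W$ contains $0$ and $W'$ contains at least $r+1$ candidates from $C_2$, any path between them must cross the barrier. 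The crucial difference from your plan is that the uncovered $N_1$-region is allowed to accumulate over $r$ swaps rather than having to appear after a single swap---this is precisely what lets the argument reach every $\alpha<2$.
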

\begin{proof}
    This proof follows a similar idea as that of \Cref{thm:isolated_jr}.  
    Our goal is to show that for each integer $r\ge 3$ and $\alpha_r = \frac{2r}{r+2}$, there exists a profile $A$ and a target size $k$ such that $\JR(A,k)$ is not connected in $\alpha_r\text{-}\JR(A,k)$. 
    Since $\alpha_r \to 2$ for $r\to\infty$, this is sufficient to prove the proposition. 
    Fix $r\ge 3$, and let $\alpha = \frac{2r}{r+2}$.
	
	\textbf{Construction of a class of profiles:}
     Let the number of voters be $n = r(r+1)^2+ r(r+1)$, and the committee size be $k = r(r+1)$. Hence, $\frac{n}{k} = r+2$. 
    Let $N = N_1 \cup N_2$ be the set of voters partitioned into two parts, with $r (r+1)^2$ voters in $N_1 = \{v_1, \dots, v_{n-r (r+1)}\}$ and $r (r+1)$ voters in $N_2 = \{v_{n-r (r+1)+1}, \dots, v_n\}$. 
    Furthermore, let $C = C_1 \cup C_2$ be the set of candidates, which we partition into a weak set ($C_1$) and a strong set $(C_2)$.
    We choose $C_1 = \{c_1,\dots, c_{k} \}$ such that $c_i$ is approved by $r+1$ consecutive voters $v_{ 1+ (i-1)(r+1) }$, $
    v_{2 + (i-1) (r+1) }$, $ \dots, v_{(r+1) + (i-1) (r+1) }$. 
    Note that the candidates in $C_1$ together cover the voters in $N_1$.
    
    Additionally, for every subset $N'_1 \subseteq N_1$ with $\lvert N'_1 \rvert = r$ and every subset $N_2'\subseteq N_2$ with $\lvert N_2'\rvert = r$, we create a candidate $d_{N_1',N_2'}$  approved by every voter in $N_1' \cup N_2'$. Formally, we set $C_2 =\{d_{N_1',N_2'} \mid N_1'\in \binom{N_1}{r},\, N_2'\in \binom{N_2}{r}\}$. 

    \textbf{Choosing two JR committees:}
    We now claim that $W = C_1$ satisfies JR.
    Let $N'\subseteq N$ be a $1$-cohesive group with all voters in $N'$ commonly approving some candidate $c\in C$. 
    Each candidate in $C_1$ has a support of fewer than $ \frac nk$ voters while each candidate in $C_2$ has a support of exactly $ 2r $ voters, so $c\in C_2$. 
    Observe that the support of each candidate in $C_2$ consists of precisely $r$ voters from $N_1$ and $r$ voters from $N_2$, where $r < r+2 = \frac{n}{k}$. 
    Since $N'\subseteq N_c$ is of size at least $\frac nk$, we have $N'\cap N_1\neq \emptyset$. 
    Since $N_W = N_1$, it holds that $N'\cap N_W\neq \emptyset$, and so $W$ satisfies JR.
    
    Next, we 
	construct
	a committee $W'$ containing at least $r+1$ candidates from $C_2$ that also satisfies JR. 
    Since $|N_2| = r (r+1)$, we can choose a partition $N_2^1, \dots, N_2^{r+1}$ of $N_2$ such that $\lvert N_2^i \rvert = r$ for each $i \in [r+1]$. 
    Additionally, choose any $N_1' \subseteq N_1$ with $\lvert N_1' \rvert = r$. 
    We claim that the subcommittee $D = \{d_{N_1',N_2^1},d_{N_1',N_2^2}, \dots, d_{N_1',N_2^{r+1}} \}$ of size $r+1$ satisfies JR. 
    As already discussed for $W$, each $1$-cohesive group contains at least $r+2$ voters, of which at least two are from $N_2$. 
    Since every voter in $N_2$ approves at least one candidate in $D$, this implies that $D$ satisfies JR. 
    We now extend $D$ with any $k-(r+1)$ candidates to a committee $W' \supseteq D$. 
    This committee $W'$ still satisfies JR and contains at least $r+1$ candidates from $C_2$.

    \textbf{A large set of committees violating $\alpha$-JR:} 
	Next, we claim that any $\alpha$-JR committee $W''$ that contains exactly $r$ candidates from $C_2$ violates $\alpha$-JR. 
	
	Let such a committee $W''$ be given.
	Recall that $\alpha = \frac {2r}{r+2}$, so we need to find a $1$-cohesive group $N'$ of size at least $\frac {2r}{r+2}\cdot(r+2) = 2r$ with $N'\cap N_{W''}=\emptyset$.
	Note that $\lvert N_{W''} \cap N_2 \rvert \leq r^2$.
    Since $|N_2| = r(r+1)$, there exists a subset of $r$ voters $N_2' \subseteq N_2$ that do not approve any candidate in $W''$.
    Moreover, it holds that $\lvert N_{W''} \cap N_1 \rvert \leq r^2 + (k-r)(r+1) = r^2 + r^2(r+1) = r^2(r+2) = r(r+1)^2-r$. 
    Therefore, there exists a subset of $r$ voters $N_1'\subseteq N_1$ that do not approve any candidate in $W''$. 
    Letting $N' = N_1' \cup N_2'$, we have $\lvert N'\rvert = 2r$. 
    By construction, all voters in $N'$ approve the candidate $d_{N_1', N_2'}$, so $W''$ indeed violates $\alpha$-JR.

	Since $W'$ contains at least $r+1$ candidates from $C_2$ while $W$ contains $0$ candidates from $C_2$, any path connecting $W$ and $W'$ must contain some committee $W''$ containing $r$ candidates from $C_2$.
    However, this committee necessarily violates $\alpha$-JR. 	
    It follows that $W$ and $W'$ cannot be connected in $\alpha$-$\JR(A,k)$ and hence concludes the proof.\qedhere
	
\end{proof}

Furthermore, we can employ a similar technique as in Theorem \ref{2ConnectivenessTheorem} to show that any two EJR committees are connected via $4$-EJR committees. To achieve this, we switch from using the number of ``represented'' voters as our metric for deleting candidates to a variant of the PAV-score. The proof idea remains the same: we greedily delete the candidates that decrease this variant of the PAV score the least and replace them by candidates witnessing a $4$-EJR violation.
In the end, this will allow us to reach a common $4$-EJR subcommittee from any EJR committee as a starting point.
\begin{restatable}{theorem}{EJRConnectedInFourEJR}\label{thm:EJRinFourEJR}
    For any instance $(A,k)$, the set $\EJR(A,k)$ is connected in $4$-$\EJR(A,k)$.
\end{restatable}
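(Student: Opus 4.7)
The plan is to adapt the two-step strategy of \Cref{Theorem:2JR} by replacing the coverage metric $|N_W|$ with a PAV-like potential $\phi(W)$ that tracks how well each voter is represented, not merely whether they are represented. A natural candidate is $\phi(W)=\sum_{v\in N} H(|A_v\cap W|)$ or a close variant; the exact choice must be calibrated so that the greedy-removal and witness-overlap bounds below match up. I would define a subcommittee to be \emph{$4$-EJR greedy} if it can be built from $\emptyset$ by iteratively appending a candidate that witnesses a $4$-EJR violation of the current subcommittee, and argue using $\phi$ that any such greedy subcommittee has size at most $k/2$, since each added candidate at level $\ell_r$ is approved by a witness group of size $\ge 4\ell_r\cdot n/k$ that was previously under-represented.

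The argument then mirrors the two steps of \Cref{Theorem:2JR}. Step~1 is essentially unchanged: any two committees that each contain a $4$-EJR greedy subcommittee are connected via $4$-EJR committees by first swapping in the other committee's greedy part before discarding the original one. Step~2, the more involved step, shows that every $W_{\text{EJR}}\in\EJR(A,k)$ can be connected through $4$-EJR committees to some committee containing a $4$-EJR greedy subcommittee. This requires an analog of \Cref{lem:greedymove} for $\phi$, stating that there is an ordering $c_1,\dots,c_k$ of $W_{\text{EJR}}$ under which removing the first $s$ candidates decreases $\phi$ by at most a controlled amount; this would follow by iteratively picking the candidate of smallest marginal contribution $\Delta(W_{\text{EJR}}\setminus\{c_1,\dots,c_{i-1}\},c_i)$ and applying a pigeonhole argument on the sum of marginals. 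It also requires an analog of \Cref{lem:twoJRoverlap}, stating that for any $4\ell$-large $\ell$-cohesive group $N'$, the EJR property of $W_{\text{EJR}}$ forces $N'$ to contribute a sizable amount to $\phi(W_{\text{EJR}})$. The latter follows because the subset $T=\{v\in N':|A_v\cap W_{\text{EJR}}|<\ell\}$ cannot be of size $\ge\ell n/k$ without witnessing an EJR violation of $W_{\text{EJR}}$ (since $T\subseteq N'$ inherits $\ell$-cohesiveness), so at least $3\ell n/k$ voters in $N'$ each have $\ge\ell$ approved candidates in $W_{\text{EJR}}$.

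A telescoping sum identical in form to the one in the proof of \Cref{Theorem:2JR} then compares the total $\phi$-loss from removing $c_1,\dots,c_s$ against the aggregate $\phi$-gain attributable to EJR-witness voters, forcing the number of witness candidates added to stay below $k/2$ and yielding a short path through $4$-EJR committees. The principal obstacle will be selecting a PAV variant $\phi$ such that the greedy-removal bound and the EJR-overlap bound have matching constants that multiply out to exactly $4$: the JR case yields factor $2$ by balancing one unit of removal against one unit of gain from a $1$-cohesive group, whereas in the EJR setting each level-$\ell$ violation involves both an $\ell$-fold larger cohesiveness threshold and an $\ell$-fold deeper representation demand, and threading these two factors simultaneously through the telescoping argument is what forces the approximation factor of $4$ rather than $2$.
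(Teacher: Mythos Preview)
Your high-level plan matches the paper's: replace coverage by a PAV-type potential, remove candidates from $W_{\mathrm{EJR}}$ greedily by smallest marginal contribution, insert $4$-EJR witnesses, and telescope. You also correctly isolate the EJR-overlap fact: in any $4\ell$-cohesive witness group, fewer than $\ell n/k$ voters can be under-represented in $W_{\mathrm{EJR}}$, so more than $3\ell n/k$ of them already have $\ge\ell$ representatives there. The paper indeed uses exactly this, via the \emph{capped} score $\PAV_W(W')=\sum_v H(\min(|A_v\cap W|,|A_v\cap W'|))$, which is the ``close variant'' you allude to; the cap is what makes each witness contribute $>3n/k$ while keeping the potential bounded above by $\PAV_W(W)$.

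The genuine gap is in the removal analysis. For any PAV variant, the marginals over a current set $W'$ sum to $|N_{W'}|\le n$, so the smallest marginal is $\le n/|W'|$, not $\le n/k$. Removing $s$ candidates therefore costs up to $n(H_k-H_{k-s})$, which is harmonic, not linear, in $s$; there is no analog of \Cref{lem:greedymove} giving a uniform $n/k$ per step. If you remove all $k$ candidates as in the $2$-JR proof, the total loss is $nH_k$, and the telescoping only yields that the number of added witnesses is $O(kH_k)$---so your claim that the final $4$-EJR greedy subcommittee has size at most $k/2$ does not follow, and Step~2 as written does not close.

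The paper sidesteps this rather than repairing it. It removes only $r=\lfloor 2k/3\rfloor$ candidates, so that each removal costs $<3n/k$ (since $k-i+1>k/3$), matching the $>3n/k$ gain per witness and keeping the intermediate sets of size $\le k$. After the process the set has size at most about $k/3+k\ln(3)/3\le 3k/4+1$, and the remaining $\lfloor k/4\rfloor$ slots are filled by a \emph{fixed} $4$-EJR subcommittee of that size, whose existence is imported from an external lemma. Both EJR committees are connected through this common anchor. So your Step~1 (connecting through arbitrary $4$-EJR greedy subcommittees) is replaced by connecting through one canonical subcommittee, and your Step~2's ``remove everything'' is replaced by ``remove two-thirds and plug in the anchor.''
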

\begin{proof}
    Let any committee $W$ satisfying EJR be given.

    \textbf{Proof idea:} \citet[Lemma~5.5]{DBW+24a} have shown that $4$-EJR can always be satisfied by a subcommittee $W_{4\text{-}\mathrm{EJR}}$ of size $\lfloor \frac k4\rfloor$.\footnote{Our notion $\alpha$-EJR is called $\frac{1}{\alpha}$-EJR in their paper.}
     Our goal is to show that by removing candidates $c_i$ from $W$ and instead adding witnesses of $4$-EJR violations $d_j$, 
     we remove $\frac{k}{4}$ more candidates than we need to add back while still maintaining $4$-EJR.
     Therefore, for the remaining $\frac{k}{4}$ slots, we can add the candidates from $W_{4\text{-}\mathrm{EJR}}$.
     For any other $\overline{W}$ satisfying EJR, we can apply the same procedure. 
     We can then make $3 \cdot\frac k4$ swaps to connect the two committees containing $W_{4\text{-}\mathrm{EJR}}$, which establishes the connectedness of $W$ and $\overline{W}$.

    \textbf{Notation:} To guarantee that our intermediate committees satisfy $4$-EJR, note that any subcommittee $W'$ with $\lvert A_v\cap W\rvert\le \lvert A_v\cap W'\rvert$ for all $v\in N$ satisfies EJR. We define the \textit{modified PAV score} as follows: 
    \[\PAV_W(W') \coloneqq \sum_{v \in N} H_{\min(\lvert A_v \cap W \rvert, \lvert A_v \cap W' \rvert )}.\]    
    Clearly, it holds for all candidate sets $W'\subseteq C$ that $\PAV_W(W')\le \PAV_W(W)$. Further, $ \PAV_W(W') = \PAV_W(W)$ implies that $W'$ satisfies EJR.
    For a subset $W'\subseteq C$ and candidate $c\in W'$, we define the \textit{marginal contribution} of $c$ to $W'$ as $\Delta_W(W',c) \coloneqq \PAV_W(W') - \PAV_W(W'\setminus\{c\}) = \sum_{v \in N_c: \lvert A_v\cap W'\rvert\le \lvert A_v\cap W\rvert} \frac{1}{\lvert A_v \cap W' \rvert}$.
    Note that if $W'\subseteq W''$, then $\Delta_W(W',c) \ge \Delta_W(W'',c)$.
    
    In the following, we first determine which candidates to remove from $W$, then determine which candidates to add. For both steps, the subcommittees are only considered for convenience of calculation, not because they truly appear in the swap sequence. Only after that will we specify the order in which we truly add and delete, and prove that each resulting committee satisfies $4$-EJR. 
    
    \textbf{Removing candidates from $W$:}
    We first determine the order of deletions of $r \coloneqq \lfloor \frac 23  k \rfloor$ candidates in $W$. 
    Note that 
    \begin{align*}
    \sum_{c\in W}\Delta_W(W,c) 
    &= \sum_{c\in W} \sum_{v \in N_c} \frac{1}{\lvert A_v \cap W \rvert} \\
    &= \sum_{v \in N_W} \sum_{c \in W\cap A_v} \frac{1}{\lvert A_v \cap W \rvert} \\
    &= \sum_{v \in N_W} 1 = |N_W| \le n. 
    \end{align*}
    Hence, there exists $c_1\in W$ with $\Delta_W(W,c_1)\le \frac nk$.
    Inductively, for any $s < k$, since $(W\setminus\{c_1,\dots, c_s\}) \subseteq W$, we obtain that there is a candidate $c_{s+1}$ with $\Delta_W(W\setminus\{c_1,\dots, c_s\}, c_{s+1})\le \frac{n}{k-s}$.
    Using telescoping, we obtain for each $i\in [r]$ that  
    \begin{align}
        &\PAV_W(W\setminus\{c_1,\dots,c_i\}) \nonumber \\
        &= \PAV_W(W) - \sum_{j\in [i]} \Delta_W(W\setminus\{c_1,\dots,c_{j-1}\}, c_j) \nonumber  \\
        &\ge \PAV_W(W) - \sum_{j\in [i]} \frac{n}{k-j+1} \nonumber \\ 
        &= \PAV_W(W) - n(H_k - H_{k - i}). \label{eq:PAV-difference}
    \end{align}
    
    \textbf{Adding candidates to $W$:} Next, we determine the candidates with which we replace the candidates $c_i$.
    For this, initialize $W' = W$ and $i,j = 1$. 
    Remove $c_i$ from $W'$. 
    As long as there exist $4$-EJR violations, we add a witness candidate $d_j$ to the subcommittee and increment $j$. 
    When there are no more $4$-EJR violations, increment $i$ and repeat the process starting from the removal.
    The process ends when we have removed $r$ candidates from $W$ and restored $4$-EJR. 
	
	\textbf{Claim 1:} For the final set $\widehat W=(W\setminus \{c_1,\dots,c_r\})\cup \{d_1,\dots,d_s\}$, it holds that $r-s \ge \frac k4 -1$. 
	
    To avoid lengthy notation, we set $W_0 = W\setminus \{c_1,\dots,c_r\}$ and $W_j = W_0 \cup \{d_1,\dots,d_j\}$ for $j>0$. 
    Then, $\Delta_W(W_j, d_j) = \PAV_W(W_j) - \PAV_W(W_{j-1})$ is the marginal contribution of $d_j$. 
    As the candidate $d_j$ is part of a $4$-EJR violation for some intermediate committee $W'$, there exists a set $N'$ of size at least $4\ell\cdot\frac{n}{k}$ for some $\ell$ witnessing such a violation for $W'$ together with a set of $\ell$ candidates containing $d_j$. 
    By definition, all voters in $N'$ approve strictly less than $\ell$ candidates from $W'$, while strictly more than $3\ell \cdot\frac nk$ voters in $N'$ approve at least $\ell$ candidates from $W$ since $W$ satisfies EJR.
    Thus, each of these latter (more than $3\ell \cdot\frac nk$) voters approves more candidates in $W$ than in $W'$, and therefore contributes at least $\frac{1}{\ell}$ to the modified PAV score.
    Thus, we obtain $\Delta_W(W'\cup\{d_j\}, d_j) > 3 \ell\cdot\frac nk\cdot \frac 1\ell = 3\cdot\frac nk$.
    Since $W_j\subseteq W'\cup\{d_j\}$, it holds that $\Delta_W(W_j,d_j) \ge \Delta_W(W'\cup\{d_j\}, d_j) > 3\cdot\frac{n}{k}$.
    Using this, we can now bound
    \begin{align*}
        \PAV_W(\widehat W) &= \PAV_W(W_s)\\
        &= \PAV_W(W_0) \\ &\quad+ [\PAV_W(W_1) - \PAV_W(W_0)] \\
        &\quad+ [\PAV_W(W_2) - \PAV_W(W_1)]+\cdots\\
        &\quad+ [\PAV_W(W_s) - \PAV_W(W_{s-1})] \\
        &= \PAV_W(W_0) + \sum_{j\in [s]} \Delta_W(W_{j},d_j)\\ 
        &\ge \PAV_W(W) - n(H_k - H_{k - r}) + 3s\cdot\frac nk. 
    \end{align*}
    
	Since $r \le \frac 23  k$, we have
    \begin{align*}
    H_k - H_{k - r} 
    &= \frac{1}{k-r+1} + \frac{1}{k-r+2} + \cdots + \frac{1}{k} \\
    &\le \int_{k-r+1}^{k+1} \frac{1}{x-1} \diff x \\
    &= \int_{k-r}^k \frac{1}{x} \diff x \\
    &= \ln(k) - \ln(k-r) \\
    &\le \ln(k) - \ln\left(k-\frac 23  k\right) = \ln(3).
    \end{align*}
It follows that 
	    \begin{align*}
	         \PAV_W(\widehat W) & \ge \PAV_W(W) - n(H_k - H_{k - r}) + 3s\cdot\frac nk\\
	         & \ge \PAV_W(W) + 3s\cdot\frac{n}{k} - \ln(3)\cdot n
	    \end{align*}
    Thus, $\PAV_W(W)\ge \PAV_W(\widehat W) \ge \PAV_W(W) -\ln(3)\cdot n + 3s\cdot\frac nk$, and therefore $ \frac{\ln(3)}{3} \ge  \frac sk$. Rearranging gives us $s \leq \frac{\ln(3)}{3}\cdot k$. 
    
    Hence, the set $\widehat W$ has size at most $k - \lfloor \frac{2}{3} k  \rfloor + \lfloor \frac{\ln(3)}{3}\cdot k \rfloor 
    \leq k - \frac{2}{3} k + 1 + \frac{\ln(3)}{3}\cdot k 
    = \frac{\ln(3) + 1}{3}\cdot k +1 \leq \frac{3}{4} k + 1$.
    This concludes the proof of Claim 1.
    
    \textbf{Claim 2:} It holds that $i-j\ge 0$ at any point during the process of adding candidates to $W$.
    That is, the set $W'$ never exceeds the committee size $k$.
    
	Let any $i\in [r]$ be given. By \eqref{eq:PAV-difference}, we have 
    \begin{align*}
    \PAV_W(W\setminus\{c_1,\dots,c_i\}) &\ge \PAV_W(W) - i \cdot\frac{n}{k-i +1} \\
    &\ge \PAV_W(W) - i \cdot\frac{n}{\frac 13 k +1} \\
    &> \PAV_W(W) - 3i\cdot\frac{n}{k}.
    \end{align*}
	Further, as argued in the proof of Claim~1, each time we add a candidate, the $\PAV_W$ score increases by more than $3\cdot \frac{n}{k}$. 
    This implies that
    \begin{align*}
    \PAV_W(W) &\geq \PAV_W((W \setminus\{c_1,\dots,c_i\}) \cup \{d_1, \dots, d_j\}) \\
    &\geq \PAV_W(W) - 3i\cdot\frac{n}{k} + 3j\cdot\frac{n}{k},
    \end{align*}
    where the first inequality follows from the fact that $\PAV_W(W) \ge \PAV_W(W')$ for any $W'$, and for the second inequality we use the fact that $\Delta_W(W',c) \ge \Delta_W(W'',c)$ whenever $W'\subseteq W''$.
    Therefore, it holds that $- 3i\cdot\frac{n}{k} + 3j \cdot\frac nk \le 0$, which means that $i\ge j$.
    This concludes the proof of Claim~2.

    \textbf{Construction of the transition committees:}
    Finally, we specify the intermediate committees that connect the committee $W$ to the subcommittee $W_{4\text{-}\mathrm{EJR}}$.
    We enumerate the candidates of our target subcommittee as $W_{4\text{-}\mathrm{EJR}} = \{d_{s+1},\dots,d_{s+\lfloor \frac k4\rfloor}\}$.
    We begin with $W^0\coloneqq W$ and set $W^i= (W\setminus \{c_1,\dots,c_i\})\cup \{d_1,\dots,d_i\}$ for all $i\le s+\lfloor \frac k4\rfloor-1$. 
    As we have shown in Claim 1, $r\ge s+\lfloor \frac k4\rfloor-1$, so the operations are well-defined since all required candidates $c_i$ exist.
    By definition, it holds that $d(W^i,W^{i+1}) = 1$ for all $i$. 
    Further, to prove that each $W^i$ satisfies $4$-EJR, note that some committee $W'= (W\setminus \{c_1,\dots,c_i\})\cup \{d_1,\dots,d_j\}$ satisfies $4$-EJR. 
    By Claim 2, we have $i\ge j$; this means that $W'\subseteq W^i$, and so $W^i$ satisfies $4$-EJR. 
    As the final step, note that for $i^* = s+\lfloor \frac k4\rfloor-1$, we still have $W_{4\text{-}\mathrm{EJR}}\not\subseteq W^{i^*}$, as $d_{i^*+1}\notin W^{i^*}$. However, we can define $W^{i^*+1}$ by replacing any candidate from $W^{i^*}\setminus W_{4\text{-}\mathrm{EJR}}$ with $d_{i^*+1}$. This concludes the proof that $W$ is connected to the subcommittee $W_{4\text{-}\mathrm{EJR}}$ in $4$-$\EJR(A,k)$. Since $W$ was an arbitrary committee satisfying EJR, it holds that $\EJR(A,k)$ is connected in $4$-$\EJR(A,k)$, proving the theorem. \qedhere 
	
\end{proof}

\section{Specific Voting Rules}
\label{sec:specific-rules}

In \Cref{thm:isolated_jr}, we showed that proportional committees can be ``far away'' from other proportional committees. 
However, the committee in that example appears to be rather suboptimal, and would not be selected by common voting rules such as PAV or MES.
In this section, we examine how well the choice sets of well-known proportional voting rules---that is, the sets of committees returned by these rules---are connected, either to other proportional committees or within themselves.

Our first observation is that connectedness within the choice set itself is not a reasonable demand for approval-based multiwinner voting rules, as the choice sets may be too sparse to allow for connections. 
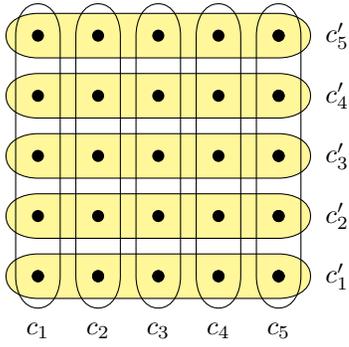
\begin{figure}
\centering
\begin{tikzpicture}[scale=0.8]
    \tikzmath{
    \height = 0.37;
    \width = 0.37;
    \cornerpoint = 0.53;
    \numberVertices = 5;
    }
    \foreach \v in {1,...,\numberVertices} {
        \foreach \w in {1,...,\numberVertices} {
            \node (v\v-\w) at (\w,\v) {};
        }
    }

    \begin{scope}[fill opacity=0.8]
    \foreach \v in {1,...,\numberVertices} {
    \filldraw[fill=yellow!50] ($(v\v-1)+(-\cornerpoint,0)$) 
        to[out=90,in=180] ($(v\v-1) + (0,\height)$)
        to[out=0,in=180] ($(v\v-\numberVertices) + (0,\height)$)
        to[out=0,in=90] ($(v\v-\numberVertices) + (\cornerpoint,0)$)
        to[out=270,in=0] ($(v\v-\numberVertices) + (0,-\height)$)
        to[out=180,in=0] ($(v\v-1) + (0,-\height)$)
        to[out=180,in=270] ($(v\v-1) + (-\cornerpoint,0)$);
        }
    \foreach \v in {1,...,\numberVertices} {
        \draw ($(v1-\v)+(0,-\cornerpoint)$) 
        to[out=0,in=270] ($(v1-\v) + (\width,0)$)
        to[out=90,in=270] ($(v\numberVertices-\v) + (\width,0)$)
        to[out=90,in=0] ($(v\numberVertices-\v) + (0,\cornerpoint)$)
        to[out=180,in=90] ($(v\numberVertices-\v) + (-\width,0)$)
        to[out=270,in=90] ($(v1-\v) + (-\width,0)$)
        to[out=270,in=180] ($(v1-\v) + (0,-\cornerpoint)$);
        }
    \end{scope}
    \foreach \v in {1,...,\numberVertices} {
        \foreach \w in {1,...,\numberVertices} {
             \fill (v\v-\w) circle (0.1);
        }
    }

    \foreach \v in {1,...,\numberVertices} {
        \fill (v1-\v) circle (0.1) node  [below = 5 mm] {$c_{\v}$};
    }
    \foreach \w in {1,...,\numberVertices} {
        \fill (v\w-\numberVertices) circle (0.1) node [right = 5 mm ] {$c'_{\w}$};
    }

\end{tikzpicture}
\caption{A visualization of Example \ref{exp:choice_set_disc}. Voters are depicted as nodes, and row and column candidates as hyperedges around the nodes. The row committee $\{c'_i\mid 1\le i\le 5\}$ is shown in yellow.}
\label{fig:disc}
\end{figure}
\begin{example}
\label{exp:choice_set_disc}
\normalfont
Consider an instance with $r^2$ voters corresponding to points in an $r \times r$ grid. There are $2r$ candidates, one for each row and one for each column.
Each voter approves the two candidates representing her column and row. 
For the committee size $k = r$, most well-known proportional voting rules (such as PAV, MES, and sequential-Phragm\'en) select only the committee consisting of all row candidates and the committee consisting of all column candidates. However, these two committees are not connected in the respective choice sets.
See \Cref{fig:disc} for a visualization.
\end{example}

In light of this observation, we instead ask whether a choice set is connected in $\JR(A,k)$ or $\EJR(A,k)$, and whether isolation can be used as an indicator of ``bad'' EJR committees.

As our first positive result, we show that a large and well-behaved set of subcommittees satisfying JR is connected. To specify this set, we recap the following 
variant of priceability \citep{PeSk20a} called \emph{affordability}, introduced by \citet{BrPe24a}. 
We state a more succinct version of it here.

\begin{definition}[\citealp{BrPe24a}]
\label{def:affordable}
    A subcommittee $W$ is \emph{affordable} if there exists a family of payment functions $(p_i \colon A_i \to [0,1])_{v_i \in N}$ such that 
    \begin{itemize}
        \item $\sum_{c \in A_i} p_i(c) \le \frac kn$ for each $v_i \in N$, and
        \item $\sum_{v_i \in N _c} p_i(c) = 1 $ for each $c \in W$.
    \end{itemize}
\end{definition}

Furthermore, we define $p$ to be a \emph{payment system} for an affordable committee $W$ if it is a family of payment functions that uphold the conditions in \Cref{def:affordable}.

We can now express our goal more precisely: we want to show that the set of all affordable subcommittees satisfying JR is connected via JR committees. 
As it turns out, this will have a favorable impact on the connectedness of several voting rules.
Before proceeding, we need two auxiliary lemmas.
Recall that any subset of an affordable subcommittee is also an affordable subcommittee, and the empty subcommittee is affordable \citep{BrPe24a}.
\begin{lemma}[{\citealp[Observation~1]{BrPe24a}}]
    Let $W_\emph{aff}$ be an affordable subcommittee. Then, for all subcommittees $X \subseteq W_\emph{aff}$, it holds that $\lvert N_X\rvert\geq |X| \cdot\frac{n}{k}$.
    \label{Lemma:SubcommitteesAlgorithm}
\end{lemma}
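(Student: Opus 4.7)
The plan is to apply a double counting argument to the payment system that certifies the affordability of $W_{\text{aff}}$, restricted to the candidates in $X$. Since $W_{\text{aff}}$ is affordable, fix a payment system $(p_i \colon A_i \to [0,1])_{v_i \in N}$ as in \Cref{def:affordable}. Because $X \subseteq W_{\text{aff}}$, this same family still assigns the candidates in $X$ total payment exactly $1$ each, while no voter pays more than $\frac{k}{n}$ in total.

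First, I would compute $\sum_{c \in X} \sum_{v_i \in N_c} p_i(c)$ in two ways. Summing candidate-by-candidate using the second bullet of \Cref{def:affordable} gives $\sum_{c \in X} 1 = |X|$. Swapping the order of summation and using that $p_i(c) = 0$ whenever $c \notin A_i$, only voters $v_i$ with $A_i \cap X \neq \emptyset$ (i.e.\ voters in $N_X$) contribute, so the double sum equals $\sum_{v_i \in N_X} \sum_{c \in X \cap A_i} p_i(c)$.

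Next, I would bound each inner sum by extending it to all of $A_i$ and applying the first bullet of \Cref{def:affordable}: $\sum_{c \in X \cap A_i} p_i(c) \le \sum_{c \in A_i} p_i(c) \le \frac{k}{n}$ for every $v_i \in N_X$. Combining the two expressions for the double sum then yields $|X| \le |N_X| \cdot \frac{k}{n}$, which rearranges to the desired inequality $|N_X| \ge |X| \cdot \frac{n}{k}$.

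There is no real obstacle here; the argument is a direct consequence of the definition of affordability. The only thing worth being careful about is the restriction-to-$X$ step, namely noting that the payment functions do not need to be modified when we pass from $W_{\text{aff}}$ to a subcommittee $X$, since the per-voter budget constraint is preserved automatically and the per-candidate constraint still holds for every $c \in X$.
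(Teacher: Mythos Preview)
Your proof is correct. The paper does not actually supply a proof of this lemma; it merely cites it as Observation~1 of \citet{BrPe24a}, so there is no approach to compare against beyond noting that your double-counting argument on the payment system is exactly the standard (and essentially only) way to derive this inequality from \Cref{def:affordable}.
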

Secondly, we show that we can connect any subcommittee $W$ that satisfies JR and is approved by at least $\frac{n}{k}\cdot(\lvert W\rvert -1)$ voters to \emph{some} affordable subcommittee satisfying JR and containing an affordable subcommittee of $W$.

\begin{restatable}{lemma}{connectlemma}
    Let $W$ be a subcommittee that satisfies JR such that $\vert N_W \vert \geq \frac{n}{k}\cdot( \vert W \vert - 1) $,  and let $W'_\emph{aff}$ be an affordable subcommittee. Further, let  $X \subseteq W\cap W'_\emph{aff}$. Then, there exists an affordable subcommittee $W^X_\emph{aff}$ satisfying JR such that $X \subseteq W^X_\emph{aff}$, and $W$ and $W^X_\emph{aff}$ are connected in $\JR(A,k)$. \label{Lemma:SubCommitteeExtension}
\end{restatable}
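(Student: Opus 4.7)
The plan is to construct $W^X_{\mathrm{aff}}$ together with the connecting path iteratively, starting from the JR subcommittee $W$ and performing a sequence of single-candidate swaps that each preserve JR, ultimately arriving at an affordable subcommittee containing $X$.

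First, I would observe that since $X \subseteq W'_{\mathrm{aff}}$ and any subset of an affordable subcommittee is itself affordable, $X$ inherits a payment system $p_X$ from $W'_{\mathrm{aff}}$, leaving each voter with some remaining budget that can fund additional payments later in the process.

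Next, I would apply \Cref{claim1} to $W$ to obtain a removal ordering of the candidates in $W \setminus X$ such that after removing the first $s$ of them, the coverage drops by at most $s \cdot \frac{n}{k}$. Combined with $\lvert N_W \rvert \ge (\lvert W \rvert - 1) \cdot \frac{n}{k}$, this keeps many voters covered throughout the process, so any JR violation created by a removal must come from a \emph{new} cohesive group whose voters have not yet been charged heavily.

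I would then iterate. At step $s$, I remove $c_s \in W \setminus X$ from the current subcommittee. If this creates a JR violation, I pick a witness candidate $d_s$ approved by a $1$-cohesive group of at least $\frac{n}{k}$ voters and add it to the subcommittee, charging $1/\lvert N(d_s) \rvert \le \frac{k}{n}$ to each voter in the group (in the spirit of GJCR). The GJCR-style accounting shows that each voter's cumulative charge never exceeds $\frac{k}{n}$, so the extended payment system is valid. If no JR violation arises, I may instead add a candidate from $W'_{\mathrm{aff}} \setminus W$, paid for via the payment function inherited from $W'_{\mathrm{aff}}$. Either way, JR is maintained and affordability is extended coherently. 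After all candidates in $W \setminus X$ have been processed, the resulting subcommittee $W^X_{\mathrm{aff}}$ contains $X$, satisfies JR, and is affordable. To obtain the committee-level path required by the definition of connectedness of subcommittees, I would interleave the subcommittee-level swaps with auxiliary swaps in the extension: since the base subcommittee satisfies JR at each step, so does every extension of it.

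The main obstacle is verifying that each step is simultaneously JR-preserving and compatible with an extension of the payment system. This is precisely where the hypothesis $\lvert N_W \rvert \ge (\lvert W\rvert - 1)\cdot \frac{n}{k}$ is essential: it bounds the ``damage'' done by a removal, so any emergent JR violation corresponds to a cohesive group large enough to fully pay for the witness $d_s$ from remaining budget. Carefully tracking both the inherited payments on $X$ and the newly introduced payments on the witnesses, and checking that the budgets never overflow, is the technical heart of the argument.
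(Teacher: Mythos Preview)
Your overall shape is right, but the core step does not go through: you cannot invoke \Cref{claim1} here. That lemma is stated for committees of size $k$, and its proof relies on $|W|=k$ to get the per-step loss bound $\frac{n}{k}$; for a subcommittee with $|W|<k$ the pigeonhole only gives a loss of $\frac{|N_W|}{|W|}$, which can be much larger. Moreover, \Cref{claim1} orders \emph{all} of $W$, whereas you need to remove only candidates from $W\setminus X$ while keeping $X$ fixed, and there is no reason the first $s$ candidates in that ordering avoid $X$. Without a per-removal bound of strictly less than $\frac{n}{k}$, the counting that keeps the running subcommittee size at most $k$ is off by one, so you cannot guarantee a valid sequence of committees.

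The paper fills this gap with a different idea: instead of ordering removals cleverly, it first extends $X$ to an \emph{inclusion-maximal affordable} subset $X^*$ with $X\subseteq X^*\subseteq W$, and removes only the candidates in $W\setminus X^*$ (in arbitrary order). Maximality is what replaces the ordering: any $c\in W\setminus X^*$ must satisfy $|N_c\setminus N_{X^*}|<\frac{n}{k}$ strictly, or else those voters could buy $c$ and $X^*\cup\{c\}$ would still be affordable. This strict bound on every removal, combined with the $\ge\frac{n}{k}$ gain from each JR-witness addition and the hypothesis $|N_W|\ge (|W|-1)\frac{n}{k}$, gives the needed inequality $s\le k-|W|+i$. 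Two smaller points: you should allow several witness additions after a single removal before moving on (one addition need not restore JR), and there is no need to pull in candidates from $W'_{\text{aff}}\setminus W$ during the process; the final set $X^*\cup\{d_1,\dots,d_r\}$ is affordable because the $X^*$-payments come from $W'_{\text{aff}}$ and each $d_j$ is paid for by its $\ge\frac{n}{k}$ uncovered supporters, who approve nothing in $X^*$ and hence have full budget.
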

\begin{proof}
We describe a procedure for reaching such a subcommittee $W^X_\text{aff}$. 
Note that $X$ is affordable, since $W'_\text{aff}$ is affordable and $X\subseteq W'_\text{aff}$.
Consider an inclusion-maximal affordable set of candidates $X^*$ such that $X\subseteq X^* \subseteq W$.
From Lemma \ref{Lemma:SubcommitteesAlgorithm}, we know that $\vert N_{X^*} \vert \geq \lvert X^*\rvert \cdot\frac nk$. 

Enumerate the candidates in $W \setminus X^*$ as  $c_1, \dots, c_r$.
Set $W^X_0 = W$, which satisfies JR by assumption, and let $i = j = 1$.
Initialize $ W^X_i$ to $W^X_{i-1} \setminus \{c_i\}$. 
If $W^X_i$ does not satisfy JR, add a witness candidate $d_j$ to $W^X_i$ and increment $j$; repeat this until there are no JR violations.
Then, increment $i$ and repeat the process. 
The process ends when we have removed $c_r$ and restored JR.

We show that for all $i \in [r]$, it holds that $W^X_i$ never exceeds the committee size $k$. 
Specifically, we claim that if $W^X_{i} = (W \setminus \{c_1, \dots,c_i\}) \cup \{d_1, \dots,d_s\}$ after JR is restored, then $s\le k-\vert W\vert +i$; this implies that $|W^X_i| = |W|-i+s\le k$. 
To prove this claim, first note that for any removed candidate $c_j$, we must have $\lvert N_{c_j} \setminus N_{X^*}  \rvert < \frac{n}{k}$; otherwise, the voters who approve $c_j$ but no candidate in $X^*$ can spend at most $\frac{k}{n}$ each on $c_j$, and $X^*\cup\{c_j\}$ would be affordable, contradicting the inclusion-maximality of $X^*$. 
Since for any candidate $c \in C$ and any sets $Y' \subseteq Y$ it holds that $\lvert N_{c} \setminus N_Y \rvert \leq \lvert N_{c} \setminus N_{Y'} \rvert$, this implies that 
\begin{align*}
&\lvert N_{\{c_1,\dots, c_i\}}\setminus N_{W\setminus\{c_1,\dots, c_i\}}\rvert \\
&= \lvert N_{c_1}\setminus N_{W\setminus\{c_1\}}\rvert \\
&\quad + \lvert N_{c_2}\setminus N_{W \setminus\{c_1, c_2\}}\rvert + \dots + \lvert N_{c_i}\setminus N_{W  \setminus\{c_1,\dots, c_i\}}\rvert \\
&\leq \lvert N_{c_1}\setminus N_{X^*}\rvert + \lvert N_{c_2}\setminus N_{X^*}\rvert + \dots + \lvert N_{c_i}\setminus N_{X^*}\rvert \\
&< i \cdot \frac{n}{k}.
\end{align*}
In particular, for the first transition, observe that if a voter belongs to $N_{\{c_1,\dots, c_i\}}\setminus N_{W\setminus\{c_1,\dots, c_i\}}$ and $c_j$ is the highest-index candidate among $c_1,\dots,c_i$ that the voter approves, then the only index $\ell\in[i]$ such that the voter belongs to $N_{c_\ell}\setminus N_{W\setminus\{c_1,\dots,c_\ell\}}$ is $\ell = j$.
It follows that $|N_W| - |N_{W\setminus\{c_1,\dots,c_i\}}| < i\cdot\frac{n}{k}$.

Next, every time we add a candidate due to a JR violation, the number of voters that approve at least one candidate in the committee increases by at least $\frac{n}{k}$.

Since for any candidate $c \in C$ and any sets $Y' \subseteq Y$ it holds that $\lvert N_{c} \setminus N_Y \rvert \leq \lvert N_{c} \setminus N_{Y'} \rvert$, we have

\begin{align*}
&\lvert N_{\{d_1,\dots, d_s\}}\setminus N_{W\setminus\{c_1,\dots, c_i\}}\rvert \\
&= \lvert N_{d_1}\setminus N_{W\setminus\{c_1,\dots, c_i\}}\rvert + \lvert N_{d_2}\setminus N_{(W \cup \{d_1\}) \setminus\{c_1,\dots, c_i\}}\rvert \\
&\quad+ \dots + \lvert N_{d_s}\setminus N_{(W \cup \{d_1,d_2,\dots, d_{s-1}\}) \setminus\{c_1,\dots, c_i\}}\rvert \\
&\geq s \cdot \frac{n}{k}.
\end{align*}

Using these facts and viewing $\lvert N_{W^X_i}\rvert = \lvert N_{W\setminus\{c_1,\dots, c_i\}}\rvert + \lvert N_{\{d_1,\dots, d_s\}}\setminus N_{W\setminus\{c_1,\dots, c_i\}}\rvert$, we obtain $n \ge \vert N_{W^X_i} \vert > (\vert N_{W}\vert - i \cdot\frac{n}{k}) + s\cdot\frac{n}{k}$.
Hence, $k > \vert N_{W}\vert \cdot\frac kn - i + s $. 
Since $\vert N_W \vert \geq \frac{n}{k}\cdot( \vert W \vert - 1) $ by assumption, we obtain $k >  ( \vert W \vert - 1) - i + s $. 
Therefore, $s < k-|W|+i+1$, which means that $s\le k-|W|+i$, as claimed.

Let $W^X_\text{aff} = W^X_r$.
In order to show that the subcommittees $W$ and $W^X_\text{aff}$ are connected, it suffices to show that all committees containing them are connected.
To this end, let $W'$ be any committee such that $W \subseteq W'$, and let $\widehat{W}$ be any committee such that $W^X_\text{aff} \subseteq  \widehat{W}$. 
We now specify a sequence of committees $W_0,W_1,\dots, W_x \in \JR(A,k)$ such that $W_0 = W'$, $W_x = \widehat{W}$, and $d(W_{i-1}, W_{i}) = 1$ for all $i\in[x]$.

First, we define the sequence of candidates that we remove.
Let $W' \setminus W = \{y_1, \dots, y_t\}$.
Then, we define the sequence of candidates $a_1, \dots, a_{t+r}$ by letting $a_i = y_i$ for $i \in [t]$ and $a_i = c_{i-t}$ for $i \in \{t+1, \dots, t+r\}$.

Next, we define the sequence of candidates that we add.
Let $\widehat{W} \setminus W^X_\text{aff} = \{z_1, \dots, z_u\}$.
Since $|W'| = |\widehat{W}| = k$, and $W' = X^* \cup \{c_1, \dots, c_r\} \cup \{y_1, \dots, y_t\}$ while $\widehat{W} = X^* \cup \{d_1, \dots, d_\ell\} \cup \{z_1, \dots, z_u\}$ for some $\ell$, where the sets in each union are disjoint, we have $\ell = t+r-u$.
Then, we define the sequence of candidates $b_1, \dots, b_{t+r}$ by letting $b_i = d_i$ for $i \in [t+r-u]$ and $b_i = z_{i-(t+r-u)}$ for $i \in \{t+r-u+1, \dots, t+r\}$. 

Finally, we define the transition committees as $W_i = (W' \cup \{b_1, \dots,b_i\}) \setminus \{a_1, \dots a_i\} $ for all $i \in \{0,\dots,t+r\}$.
Observe that $W_0 = W'$ and $W_{t+r} = \widehat{W}$.
We claim that all of the committees $W_i$ satisfy JR.
For $i\in\{0,\ldots,t\}$, since $W\subseteq W_i$ and $W$ satisfies JR, so does $W_i$.
Now, let $i\in\{t+1,\dots,t+r\}$.
We have $W_{i-t}^X = (W \cup \{d_1, \dots, d_s\}) \setminus \{c_1, \dots, c_{i-t}\}$ for some $s$.
Since $|W| = k-t$, as we showed earlier, it holds that $s\le (i-t)+t = i$.
Note that $W_i = (W\cup\{b_1,\dots,b_i\})\setminus\{c_1,\dots,c_{i-t}\}$.
Hence, $W_{i-t}^X \subseteq W_i$. 
Since $W_{i-t}^X$ satisfies JR, so does $W_i$.
Thus, $W$ and $W^X_\text{aff}$ are connected in $\JR(A,k)$.

It remains to show that  $W^X_\text{aff}$ is affordable. 
For this, the voters can first pay for the candidates in $X^*$ according to the payment system that witness the affordability of $X^*$. 
Then, each candidate $d_i$ added during the process was at the time approved by at least $\frac{n}{k}$ voters who approve no other candidate from the subcommittee and hence still had full budget of $\frac{k}{n}$ each available to pay for $d_i$. 
This concludes the proof. 
\end{proof}

We are now ready to show that any two affordable subcommittees satisfying JR are reachable from each other. 
\begin{restatable}{proposition}{affordconnect}
For any instance $(A,k)$, the set of affordable subcommittees satisfying JR is connected in $\JR(A,k)$.
\label{Thm:PayToWin}
\end{restatable}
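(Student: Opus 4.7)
The plan is to establish the proposition by induction on $|W_2 \setminus W_1|$, where $W_1$ and $W_2$ are the two affordable subcommittees satisfying JR that we wish to connect. The inductive step will apply \Cref{Lemma:SubCommitteeExtension} to grow the intersection with $W_2$ by one candidate at a time.

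For the base case $W_2 \subseteq W_1$, any committee extending $W_1$ already contains $W_2$, so starting from any such extension one reaches any other committee extending $W_2$ by swapping non-$W_2$ candidates one at a time; every intermediate committee satisfies JR because it still contains $W_2$. For the inductive step with $|W_2 \setminus W_1| \ge 1$, pick $c \in W_2 \setminus W_1$. Assuming first that $|W_1| < k$, form $\tilde W := W_1 \cup \{c\}$. By construction $\tilde W$ satisfies JR (since $W_1 \subseteq \tilde W$), and the affordability of $W_1$ combined with \Cref{Lemma:SubcommitteesAlgorithm} gives $|N_{\tilde W}| \ge |N_{W_1}| \ge |W_1| \cdot \tfrac{n}{k} = (|\tilde W| - 1) \cdot \tfrac{n}{k}$. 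Hence \Cref{Lemma:SubCommitteeExtension} applies with $W := \tilde W$, $W'_\text{aff} := W_2$, and $X := (W_1 \cap W_2) \cup \{c\}$, producing an affordable JR subcommittee $W^* \supseteq X$ that is connected to $\tilde W$ in $\JR(A, k)$. A short direct swap argument also connects $W_1$ and $\tilde W$ as subcommittees: in any committee extending $W_1$ but not containing $c$, swapping any non-$W_1$ candidate for $c$ preserves $W_1$, and therefore JR, throughout. Composing these connections, $W_1$ is connected to $W^*$ in $\JR(A,k)$, and since $|W^* \cap W_2| \ge |W_1 \cap W_2| + 1$, the induction hypothesis applied to $(W^*, W_2)$ completes the argument.

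The main obstacle I anticipate is the boundary case $|W_1| = k$ (and symmetrically $|W_2| = k$), in which $\tilde W = W_1 \cup \{c\}$ would exceed the allowed subcommittee size. If $|W_2| < k$, the issue is sidestepped by running the same argument with the roles of $W_1$ and $W_2$ swapped. When both $|W_1| = |W_2| = k$, a more delicate argument is required: one must exhibit a JR-preserving path of single-candidate swaps between two size-$k$ affordable committees, leveraging that $W_2 \in \JR(A,k)$ forces each $1$-cohesive group uncovered after removing some $d \in W_1 \setminus W_2$ to be covered by a candidate in $W_2 \setminus W_1$, and then applying \Cref{Lemma:SubCommitteeExtension} to a carefully chosen intermediate size-$k$ JR committee $(W_1 \setminus \{d\}) \cup \{c'\}$ (whose $|N_W| \ge (k-1) \tfrac nk$ bound holds because it contains the affordable subcommittee $W_1 \setminus \{d\}$ of size $k-1$) so as to absorb any leftover witnesses, strictly increase the intersection with $W_2$, and then recurse.
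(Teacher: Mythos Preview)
Your overall strategy---induction that grows the intersection with $W_2$ by one using \Cref{Lemma:SubCommitteeExtension}---is exactly the paper's approach, and your treatment of the base case and of the case $|W_1|<k$ matches the paper's argument essentially line by line.

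The genuine gap is in the boundary case $|W_1|=|W_2|=k$. You correctly identify that this is the obstacle, but your sketch does not establish the crucial claim that $(W_1\setminus\{d\})\cup\{c'\}$ satisfies JR for some choice of $d$ and $c'$, which is a hypothesis of \Cref{Lemma:SubCommitteeExtension}. Your observation that every $1$-cohesive group left uncovered by $W_1\setminus\{d\}$ is covered by \emph{some} candidate of $W_2\setminus W_1$ is true, but different uncovered groups may require different candidates, so it does not yield a single $c'$ that restores JR. The paper closes this gap by a pigeonhole argument that works in the opposite order: first fix any $c'\in W_2\setminus W_1$, then use affordability of $X\cup\{c'\}\subseteq W_2$ (where $X=W_1\cap W_2$) via \Cref{Lemma:SubcommitteesAlgorithm} to bound $|N_{W_1}\setminus N_{X\cup\{c'\}}|\le (k-|X|-1)\tfrac{n}{k}$; since $W_1\setminus X$ has $k-|X|$ candidates, pigeonhole gives some $d\in W_1\setminus X$ that is uniquely responsible for fewer than $\tfrac{n}{k}$ of those voters, and hence $(W_1\setminus\{d\})\cup\{c'\}$ is approved by more than $(k-1)\tfrac{n}{k}$ voters and therefore satisfies JR. This is the missing ingredient in your sketch.
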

\begin{proof}
We proceed by strong induction on the intersection size of the affordable committees. 
Fix any $s\in\{0,1,\dots, k\}$. 
Assume that all pairs of affordable subcommittees satisfying JR and intersecting in strictly more than $s$ candidates are connected in $\JR(A,k)$. 
We show that the same holds for all pairs of affordable subcommittees satisfying JR with intersection size $s$ (we will not treat the ``base case'' $s = k$ separately). 
Let $W_\text{aff}, W_\text{aff}'$ be affordable subcommittees satisfying JR with $\lvert W_\text{aff}\cap W_\text{aff}' \rvert = s$.
Denote the sizes of $W_\text{aff}$ and $W'_\text{aff}$ by $s+r$ and $s+t$, respectively, where $s,r,t\ge 0$. 

First, we consider the main case where $r,t>0$.
Let $X = W_\text{aff}\cap W'_\text{aff}$, and take any candidate $c'\in W'_\text{aff}\setminus X$.
Our goal is to add $c'$ to $W_\text{aff}$ and apply \Cref{Lemma:SubCommitteeExtension}.

If $|W_\text{aff}| \leq k-1$, then set $W = W_\text{aff}\cup \{c'\}$. 
Observe that $W$ meets the requirements of \Cref{Lemma:SubCommitteeExtension}, as $\lvert N_{W} \rvert \geq \lvert N_{W_\text{aff}} \rvert \geq (\lvert W
\rvert -1 )\frac{n}{k}$ by \Cref{Lemma:SubcommitteesAlgorithm}, and $W$ satisfies JR as a superset of $W_\text{aff}$.

Otherwise, $|W_\text{aff}| = k$, and so $\lvert N_{W_\text{aff}} \rvert = n$ by \Cref{Lemma:SubcommitteesAlgorithm}.
As $X \cup \{c'\} \subseteq W'_\text{aff}$ is an affordable subcommittee, we know from \Cref{Lemma:SubcommitteesAlgorithm} that $\lvert N_{X\cup\{c'\}}\rvert \ge (s+1)\cdot\frac{n}{k}$. 
Therefore, $\lvert N_{W_\text{aff}} \setminus N_{X\cup\{c'\}} \rvert \le n - (s+1)\cdot\frac{n}{k} = (k - s - 1)\cdot\frac{n}{k}$. 
As $W_\text{aff} \setminus X$ contains $k - s$ candidates, by the pigeonhole principle, there must exist a candidate $c \in W_\text{aff} \setminus X$ who is the only approved candidate in $W_\text{aff} \setminus X$ for fewer than $\frac{n}{k}$ voters from $N_{W_\text{aff}} \setminus N_{X\cup\{c'\}}$. 
By definition, every voter in $N_{X\cup\{c'\}}$ approves some candidate in $X\cup\{c'\}\subseteq (W_\text{aff}\cup\{c'\})\setminus\{c\}$.
Hence, the committee $ (W_\text{aff} \cup \{c'\}) \setminus \{c\}$ is approved by more than $(k-1)\cdot\frac{n}{k}$ voters.
By setting $W = (W_\text{aff}\cup \{c'\})\setminus \{c\}$, we get 
$\lvert N_{W} \rvert> \frac{n}{k}\cdot(k-1) = \frac{n}{k}\cdot(\lvert W \rvert - 1)$. 
This inequality also implies that fewer than $\frac{n}{k}$ voters are not represented by $W$, which means that $W$ satisfies JR. 
Hence, $W$ meets the requirements of \Cref{Lemma:SubCommitteeExtension} in this case too.
 
Applying \Cref{Lemma:SubCommitteeExtension} with $W$, $W'_\text{aff}$, and $X\cup \{c'\}$ yields an affordable subcommittee $W^{X\cup \{c'\}}_\text{aff}$ satisfying JR such that $X\cup \{c'\} \subseteq W^{X\cup \{c'\}}_\text{aff}$ and, moreover, $W$ and $W^{X\cup \{c'\}}_\text{aff}$ are connected in $\JR(A,k)$. 
Extending the chain of connections, $W^{X\cup \{c'\}}_\text{aff}$ and $W'_\text{aff}$ are connected in $\JR(A,k)$, since these two affordable subcommittees share all $s+1$ candidates in ${X\cup \{c'\}}$, so we can apply our induction assumption. 
Clearly, $W_\text{aff}$ and $W$ are connected in $\JR(A,k)$ by construction of $W$, as the two subcommittees are ``neighboring''. 
In summary, there is a chain of connections from $W_\text{aff}$ to $W$ to $W^{X\cup \{c'\}}_\text{aff}$ to $W_\text{aff}'$, which concludes the main case of the proof. 

Finally, we handle the edge case where at least one of $r$ and $t$ is zero.
If both $r$ and $t$ are zero, then $W_\text{aff} = W_\text{aff}'$, so the two committees are trivially connected. 
If only one of them is equal to zero, assume without loss of generality that $r = 0$.
Then, any superset of $W_\text{aff}$ satisfies JR, so we can replace the unneeded candidates one by one with candidates from $W_\text{aff}'\setminus W_\text{aff}$ without violating JR. \qedhere

\end{proof}

Several well-known proportional rules only select supersets of affordable subcommittees satisfying JR. 
Hence, we immediately obtain from \Cref{Thm:PayToWin} that the choice sets of such rules are always connected via JR committees.  Further, the choice sets of different rules are interconnected, e.g., we can transition from any MES committee to any seqCCAV committee without violating JR. 
While PAV and CCAV do not necessarily select supersets of affordable committees satisfying JR, we can nevertheless show that their choice sets are connected to a superset of an affordable subcommittee satisfying JR. 

\begin{restatable}{theorem}{corrules}\label{Cor:JRAllRulesConnected}
    Consider the set of rules
    \begin{align*}
        R = \{\text{MES, } &\text{seqCCAV, CCAV, PAV,}\\ &\text{GJCR, GreedyEJR, seqPhragm\'en}\}. 
    \end{align*}
    For any rules $f,f' \in R$ (possibly $f=f'$) and any instance $(A,k)$, any committee from $f(A,k)$ is connected to any other committee from $f'(A,k)$ in $\JR(A,k)$.
\end{restatable}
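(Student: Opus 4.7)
The plan is to reduce the claim to Proposition~\ref{Thm:PayToWin}, which states that the set of affordable subcommittees satisfying JR is connected in $\JR(A,k)$. Because connectedness of two subcommittees propagates to every pair of committees containing them, it suffices to establish the following intermediate claim: for every $f \in R$ and every $W \in f(A,k)$, the committee $W$ is connected in $\JR(A,k)$ to some committee that is a superset of an affordable subcommittee satisfying JR. Splicing such paths through the network provided by Proposition~\ref{Thm:PayToWin} then yields the desired connection between arbitrary $W \in f(A,k)$ and $W' \in f'(A,k)$.

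For $f \in \{\text{MES, GJCR, GreedyEJR, seqPhragm\'en, seqCCAV}\}$ I would argue that the output $W$ already contains an affordable JR subcommittee, so no preliminary transition is needed. For MES and GJCR, the payment function $p$ that the algorithm explicitly tracks directly witnesses affordability (each voter spends at most $k/n$ in total and each selected candidate has total payment $1$), and JR is inherited from EJR+. For seqPhragm\'en, the voter loads normalize into a payment system satisfying Definition~\ref{def:affordable}. For GreedyEJR, each $\ell$-block added for an $\ell$-cohesive group $N'$ can be paid for by spreading cost $1/|N'|$ per candidate over $N'$; since $|N'|\ge \ell\cdot n/k$ and voters are removed from the instance after being assigned to a block, no voter exceeds their budget, and JR is inherited from EJR. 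For seqCCAV, the prefix of candidates selected while the greedy step still covers at least $n/k$ new voters is affordable (each newly covered voter pays the full $k/n$ for the candidate that first covers them), and the final committee satisfies JR because any surviving $1$-cohesive group would allow seqCCAV's next step to cover $\ge n/k$ new voters---contradicting the fact that the affordable prefix has already ended.

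For $f \in \{\text{PAV, CCAV}\}$ no payment system is intrinsic to the rule, so instead I would invoke Lemma~\ref{Lemma:SubCommitteeExtension} on the output $W$ with $X = \emptyset$ (pairing with any affordable subcommittee, e.g., $W'_\text{aff} = \emptyset$). Both PAV and CCAV satisfy JR, so the only hypothesis to verify is the coverage bound $|N_W|\ge (|W|-1)\cdot n/k$. If it holds, the lemma directly produces an affordable JR subcommittee $W^{\emptyset}_{\text{aff}}$ connected to $W$ in $\JR(A,k)$. Otherwise, strictly more than $n/k$ voters are uncovered by $W$, and JR forbids any single candidate from being approved by $n/k$ of these uncovered voters; this slack lets one perform preliminary single-candidate swaps that raise the coverage of $W$ past the threshold while preserving JR, after which the lemma applies. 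The main technical obstacle is precisely this low-coverage case for PAV and CCAV---certifying that the preliminary coverage-raising swaps exist and preserve JR at every step---while the other five rules reduce to essentially routine verification that their algorithmic payment structures meet Definition~\ref{def:affordable}. With the intermediate claim in hand for every $f\in R$, Proposition~\ref{Thm:PayToWin} stitches the various supersets of affordable JR subcommittees together, completing the proof.
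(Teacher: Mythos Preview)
Your overall architecture matches the paper exactly: reduce to Proposition~\ref{Thm:PayToWin} by showing each output committee is connected in $\JR(A,k)$ to a superset of an affordable JR subcommittee, and for MES, GJCR, GreedyEJR, seqPhragm\'en, and seqCCAV the output already \emph{contains} such a subcommittee. Two small tightenings: for seqPhragm\'en the full output need not be affordable (a voter may spend more than $k/n$ if the rule runs past time $k/n$); the paper takes the subcommittee purchased by time $t=k/n$ and argues that \emph{that} prefix is affordable and satisfies JR. For seqCCAV your wording ``the final committee satisfies JR'' should read ``the prefix subcommittee satisfies JR'', though your argument already proves the right thing.

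The genuine gap is your ``else'' branch for PAV and CCAV. Your proposed fix---``preliminary single-candidate swaps that raise the coverage of $W$''---cannot work as stated for CCAV: since $W$ maximizes $|N_W|$ over all size-$k$ committees, \emph{no} swap can increase coverage, so you can never reach the threshold $|N_W|\ge (k-1)\tfrac{n}{k}$ this way. The paper's resolution is different in kind: rather than swapping at size $k$, it \emph{removes} candidates (those with minimum unique coverage for CCAV, minimum marginal PAV contribution for PAV) and \emph{adds back} JR-violation witnesses. The crucial step, which your outline does not supply, is that the \emph{optimality} of the rule bounds the number of additions strictly below the number of removals---for CCAV because additions would otherwise push $|N_{W'}|$ above $|N_W|$, and for PAV because each removal costs less than $\tfrac{n}{k}$ in PAV score (Lemma~\ref{PAVLemma}) while each JR-witness addition gains at least $\tfrac{n}{k}$. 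The process thus terminates at a strict subcommittee that is already affordable (CCAV) or satisfies the coverage hypothesis of Lemma~\ref{Lemma:SubCommitteeExtension} (PAV), and the removal/addition interleaving gives the connecting JR path.
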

We divide the proof of this theorem into two parts: we first show the statement for all rules except PAV and then give a proof for PAV separately.

\begin{lemma}
    For any rule $f \in R\setminus\{\text{PAV}\}$ and any instance $(A,k)$, any committee from $f(A,k)$ is connected to some affordable subcommittee satisfying JR in $\JR(A,k)$.
    \label{lemma:everythingIsAffordable}
\end{lemma}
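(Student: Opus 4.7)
The plan is to prove the lemma rule by rule, in each case exhibiting an affordable subcommittee $W_{\text{aff}}$ that satisfies JR and showing the rule's output $W$ is connected to $W_{\text{aff}}$ in $\JR(A,k)$. A recurring observation is that if $W_{\text{aff}}\subseteq W$, then every size-$k$ superset of $W_{\text{aff}}$ is a JR committee, so $W$ is connected to $W_{\text{aff}}$ via single-candidate swaps that preserve $W_{\text{aff}}$.

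For MES, GJCR, and GreedyEJR, each rule's execution directly produces an affordable JR subcommittee. MES and GJCR come with explicit payment functions in Algorithms \ref{alg:mes} and \ref{alg:gjcr_w_prices}; GreedyEJR handles each $\ell$-cohesive group $N'$ of size at least $\ell\cdot n/k$ by letting each member pay $\ell/|N'|\le k/n$ total for the $\ell$ added candidates and then removing $N'$ from the instance. In every case the selected subcommittee (before any arbitrary extension) is affordable, and it satisfies JR via EJR+ (MES, GJCR) or EJR (GreedyEJR). For seqPhragm\'en I would let $W_{\text{aff}}$ consist of the candidates selected by time $k/n$; affordability follows since every voter has accumulated at most $k/n$ budget by then, and JR follows because an uncovered $1$-cohesive group $N'$ with common candidate $c$ would have voters that never spent anything, so the combined budget of $N_c\supseteq N'$ would reach $1$ no later than $1/|N'|\le k/n$, forcing $c$ to be selected by time $k/n$. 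For seqCCAV I would let $W_{\text{aff}}$ be the prefix of selections whose marginal coverage is at least $n/k$; any uncovered $1$-cohesive group $N'$ on $c^*$ would then have $|N_{c^*}\setminus N_{W_{\text{aff}}}|\ge |N'|\ge n/k$, violating the stopping condition. In all these cases $W_{\text{aff}}\subseteq W$, so the recurring observation closes the argument.

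The main obstacle is CCAV, which is defined optimally rather than greedily and need not contain an affordable JR subcommittee inside its output. I would split on the coverage of $W$. If $|N_W|\ge (k-1)\cdot n/k$, then $W$ meets the hypotheses of Lemma \ref{Lemma:SubCommitteeExtension}, and applying that lemma with $W'_{\text{aff}} = \emptyset$ and $X = \emptyset$ directly produces an affordable JR subcommittee connected to $W$. Otherwise $|N_W|<(k-1)\cdot n/k$, and by Lemma \ref{Lemma:SubcommitteesAlgorithm} every affordable subcommittee has size strictly less than $k$; here I build $W_{\text{aff}}$ in two phases, starting with a seqCCAV-style greedy inside $W$ to produce $W_{\text{aff}}^{(1)}\subseteq W$ and then, for each remaining JR violation, adding a witness $c^{(j)}_*$ (which necessarily lies outside $W$) affordably paid by the voters of the violating cohesive group, which are disjoint from voters paid previously. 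To connect $W$ to $W_{\text{aff}}$, I pair each Phase-$2$ witness $c^{(j)}_*$ with a candidate $c'_j\in W\setminus W_{\text{aff}}^{(1)}$ currently representing the $j$-th cohesive group in $W$ and swap them one pair at a time; checking that each intermediate committee remains JR---i.e.\ that no $c'_j$ is the sole representative of some other $1$-cohesive group---is the delicate part, which I expect to rely on CCAV's maximum-coverage optimality to bound how indispensable any single candidate in $W$ can be.
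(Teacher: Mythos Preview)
For MES, GJCR, GreedyEJR, seqPhragm\'en, and seqCCAV your sketch matches the paper's proof: in each case the rule's execution (or a prefix of it) produces an affordable JR subcommittee contained in the output committee, and your recurring observation finishes the argument.

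The gap is in CCAV. Your Case~1 ($|N_W|\ge (k-1)\tfrac nk$) is fine---indeed it is a cleaner shortcut than what the paper does, since Lemma~\ref{Lemma:SubCommitteeExtension} with $W'_\text{aff}=X=\emptyset$ applies directly. But Case~2 is incomplete in exactly the place you flag. Building $W_{\text{aff}}=W_{\text{aff}}^{(1)}\cup\{c_*^{(1)},\dots,c_*^{(p)}\}$ is unproblematic (it is affordable and JR), but the swap path you propose need not stay inside $\JR(A,k)$. First, the candidates $c'_1,\dots,c'_p$ ``representing the $j$-th cohesive group in $W$'' may coincide for different~$j$, so the pairing is not obviously well-defined. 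Second, even if they can be chosen distinct, removing $c'_j$ can destroy the only representation in the current committee of some \emph{other} $1$-cohesive group---one whose witness in $W_{\text{aff}}$ is $c_*^{(j')}$ with $j'>j$, and whose coverage in $W$ happens to lie entirely inside $\{c'_1,\dots,c'_j\}$. Nothing you have said rules this out, and ``CCAV optimality'' does not obviously do so: optimality controls total coverage, not which specific voters a single candidate of $W$ is indispensable for.

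The paper handles CCAV by reversing the order of construction and path-building. It orders $W=\{c_1,\dots,c_k\}$ so that $c_s$ minimises $|N_{c}\setminus N_{W\setminus\{c_1,\dots,c_{s-1},c\}}|$ over the remaining candidates, lets $s$ be the largest index with this quantity below $\tfrac nk$, and then removes $c_1,\dots,c_s$ one by one, inserting a JR-witness $d_j$ whenever JR breaks. The key counting step is that after $i$ removals at most $i-1$ additions are needed: each removed $c_j$ costs strictly less than $\tfrac nk$ coverage while each added $d_j$ gains at least $\tfrac nk$, so $i$ removals together with $i$ additions would yield a size-$k$ committee with coverage strictly exceeding $|N_W|$, contradicting CCAV. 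This simultaneously produces the affordable JR subcommittee $\{c_{s+1},\dots,c_k,d_1,\dots,d_r\}$ and the JR path to it. Your construction has the same ingredients but lacks the specific removal order and the interleaving, which are exactly what make the counting---and hence the JR-preservation of every intermediate committee---go through.
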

\begin{proof}
    For MES, GJCR, and GreedyEJR, it was already discussed by \citet{BrPe24a} that these rules always return supersets of affordable subcommittees satisfying JR.\footnote{\citet{BrPe24a} used the term ``committee'' for what we refer to as ``subcommittee'', that is, a committee in their paper may have size less than $k$.} 
    
    For a committee $W$ chosen by seqPhragm\'en, consider the subcommittee $W' \subseteq W$ chosen at time $t = \frac kn$; since the total budget across all voters up to this time is $\frac{k}{n}\cdot n = k$ and each candidate costs $1$, the rule has not terminated before this time. 
    As the family of payment functions $(p_i)_{v_i\in N}$, we take all the payments made up to this time.
    By definition of the rule, $(p_i)_{v_i\in N}$ satisfies all payment vector constraints: each voter has spent at most $\frac kn = t$ of her budget, and each bought candidate receives a total payment of $1$ from the voters.
    In addition, $W'$ satisfies JR, since each $1$-cohesive group has a total budget of $1$ available, so at least one voter in the group must be involved in a purchase. 
    This voter therefore approves at least one candidate in $W'$.
    
    Next, we show that seqCCAV always selects a superset of an affordable subcommittee that satisfies JR. 
    Let $W = \{c_1, \dots, c_k\}$ be a committee chosen by seqCCAV in this order of candidates. 
    By definition of seqCCAV, it holds that $\lvert N_{\{c_1, \dots, c_i \}} \setminus N_{\{c_1, \dots, c_{i-1} \}} \rvert \geq  \lvert N_{\{c_1, \dots, c_j \}} \setminus N_{\{c_1, \dots, c_{j-1} \}} \rvert$   whenever $1\le i < j \le k$.
    We consider two cases.
    
    \textbf{Case 1:} $\lvert N_{\{c_1, \dots, c_k \}} \setminus N_{\{c_1, \dots, c_{k-1} \}} \rvert = \frac{n}{k}$. 
    This implies that $N_W = N$. 
    Then, $W$ is affordable, since for every candidate $c \in W$ there are exactly $\frac{n}{k}$ voters that only approve $c$ in the committee $W$.
    
    \textbf{Case 2:} $\lvert N_{\{c_1, \dots, c_k \}} \setminus N_{\{c_1, \dots, c_{k-1} \}} \rvert < \frac{n}{k}$. 
    Then, there exists $s \in [k]$ such that $\lvert N_{\{c_1, \dots, c_i \}} \setminus N_{\{c_1, \dots, c_{i-1} \}} \rvert \geq \frac{n}{k}$ for all $i \in [s-1]$ and $\lvert N_{\{c_1, \dots, c_i \}} \setminus N_{\{c_1, \dots, c_{i-1} \}} \rvert < \frac{n}{k}$ for all $i \geq s$.
    We claim that the subcommittee $W' \coloneqq \{c_1, \dots, c_{s-1}\}$ is affordable and satisfies JR. 
    First, $W'$ is affordable because whenever a candidate $c_i$ is added to $W'$, there exist at least $\frac{n}{k}$ voters that approve this candidate but no candidate in $\{c_1, \dots, c_{i-1}\}$. 
    Those voters can pay for $c_i$, since they have not paid for any candidate in  $\{c_1, \dots, c_{i-1}\}$.
    Furthermore, $W'$ satisfies JR because there does not exist any candidate $c \in C$ such that $\lvert N_{W' \cup \{c\}} \setminus N_{W'} \rvert \geq \frac{n}{k}$.
    That is, for any $1$-cohesive group, at least one voter in the group approves some candidate in $W'$.

\begin{table}[tbh]
    \centering
    \begin{tabular}{@{}lccccccc@{}}
    \toprule
      & $v_1$ & $v_2$ & $v_3$ & $v_4$ & $v_5$ & $v_6$ & $v_7$ \\ \midrule
    $c_1$ & $\times$ & $\times$ &  &        &        &        \\
    $c_2$ &  &  & $\times$ & $\times$       &        &        \\
    $c_3$ &  & &  &  & $\times$ & $\times$ \\
    $c_4$ & $\times$ &  & $\times$ &  & $\times$ &  \\
\bottomrule
\end{tabular}
\caption{The instance $(A,k)$ described above with $k = 3$ is an example where $\CCAV(A,k) =  \{c_1,c_2,c_3\}$ does not contain an affordable subcommittee that satisfies JR. 
Indeed, since $v_1,v_3,v_5$ form a cohesive group (with $c_4$), a subcommittee satisfying JR must contain at least one candidate. 
However, every candidate in $\{c_1,c_2,c_3\}$ is approved by only two voters and therefore not affordable, since each voter only has a budget of $\frac{3}{7}$.}
\label{Example:CCAV}
\end{table}

    CCAV does not always select a superset of an affordable subcommittee that satisfies JR (see Table \ref{Example:CCAV}). 
    However, we can still show that it is connected to such a subcommittee. 
    Let $W$ be a committee returned by CCAV. 
    Consider an ordering $c_1, \dots, c_k$ of the candidates in $W$  such that $c_s \in \argmin_{c \in W \setminus  \{c_1, \dots, c_{s-1}\}} \lvert N_{c} \setminus  N_{W \setminus \{c_1, \dots, c_{s-1},c \}} \rvert $ for every $s \in [k]$, that is, $c_s$ is a candidate in $W\setminus\{c_1,\dots,c_{s-1}\}$ that is uniquely approved by the least number of voters.
    If $\lvert N_{c_1} \setminus  N_{W\setminus\{c_1\}} \rvert \geq \frac{n}{k}$, then every candidate in $W$ is uniquely approved by $\frac{n}{k}$ voters, so the committee $W$ itself is affordable. 
    Therefore, assume in the following that $\lvert N_{c_1} \setminus  N_{W\setminus\{c_1\}} \rvert < \frac{n}{k}$.
    Let $s\in [k]$ be the largest index such that $\lvert N_{c_s} \setminus N_{W \setminus \{c_1, \dots, c_{s}\}} \rvert  < \frac{n}{k}$. 
    By definition of $s$, each of the candidates $c_{s+1}, \dots, c_k$ is uniquely approved by at least $\frac{n}{k}$ voters in the subcommittee $\{c_{s+1}, \dots, c_k\}$. 
    (Note that if $s =k$, this set is empty.)
    We remove the candidates $c_1,\dots,c_s$ from $W$ in this order.
    After each removal, if a JR violation occurs, we add a candidate witnessing the violation into the subcommittee, and repeat until the subcommittee satisfies JR. 

    For each $i\in [s]$, we claim that if $i$ candidates have been removed, then at most $i-1$ candidates have been added.
    Assume for contradiction that the candidates $c_1, \dots, c_i$ are removed and the candidates $d_1, \dots, d_i$ are added.
    By definition of $s$, it holds that $\lvert N_{c_j} \setminus N_{W \setminus \{c_1,\dots, c_j\}}\rvert < \frac{n}{k}$ for each $j \in [i]$.
    Moreover, each candidate $d_j$ is approved by at least $\frac nk$ voters that approve no other candidate at the time $d_j$ is added, and the subcommittee at that time is a superset of the subcommittee $(W\cup\{d_1,\dots,d_{j-1}\})\setminus\{c_1,\dots,c_i\}$.
    Hence, we have

\begin{align*}
&\lvert N_{(W \cup \{d_1, \dots, d_i\})\setminus \{c_1,\dots, c_i\}} \rvert \\
&= \lvert N_{\{d_1,\dots, d_i\}}\setminus N_{W\setminus\{c_1,\dots, c_i\}}\rvert +  \lvert N_{W\setminus\{c_1,\dots, c_i\}} \rvert \\
&= (\lvert N_{d_1}\setminus N_{W\setminus\{c_1,\dots, c_i\}}\rvert + \lvert N_{d_2}\setminus N_{(W \cup \{d_1\}) \setminus\{c_1,\dots, c_i\}}\rvert \\
&\quad\quad+ \dots + \lvert N_{d_i}\setminus N_{(W \cup \{d_1,d_2,\dots, d_{i-1}\}) \setminus\{c_1,\dots, c_i\}}\rvert) \\
&\quad+ \lvert N_{W\setminus\{c_1,\dots, c_i\}} \rvert \\
&= (\lvert N_{d_1}\setminus N_{W\setminus\{c_1,\dots, c_i\}}\rvert + \lvert N_{d_2}\setminus N_{(W \cup \{d_1\}) \setminus\{c_1,\dots, c_i\}}\rvert \\
&\quad\quad+ \dots + \lvert N_{d_i}\setminus N_{(W \cup \{d_1,d_2,\dots, d_{i-1}\}) \setminus\{c_1,\dots, c_i\}}\rvert) \\
&\quad+ (\lvert N_W \rvert - \lvert N_{c_1} \setminus N_{W\setminus\{c_1\}}  \rvert - \lvert N_{c_2} \setminus N_{W\setminus\{c_1,c_2\}} \rvert \\
&\quad\quad- \dots - \lvert N_{c_i} \setminus N_{W\setminus\{c_1, \dots, c_i\}} \rvert) \\
&> i \cdot \frac{n}{k} + \left(\lvert N_W \rvert - i \cdot \frac{n}{k} \right)  \\
&= \lvert N_W \rvert.
\end{align*}
This contradicts the assumption that $W$ is returned by CCAV and establishes the claim.

    At the end of this process, we are left with the candidates $d_1,\dots,d_r$ that we added (for some $r < s$), along with $c_{s+1}, \dots, c_k$.
    We show that the resulting subcommittee is affordable.
    To see this, recall that by definition of $s$, each of the candidates $c_{s+1}, \dots,c_k$ is uniquely approved by at least $\frac{n}{k}$ voters with respect to the subcommittee $\{c_{s+1}, \dots,c_k\}$; these voters have enough budget to pay for these candidates. 
    Moreover, every candidate $d_j$ added during the process is approved by at least $\frac{n}{k}$ voters who approve no other candidate from the subcommittee at the time, and this subcommittee is a superset of $\{c_{s+1},\dots,c_k,d_1,\dots,d_{j-1}\}$. 
    Hence, these voters have enough budget to pay for $d_j$. 
    Therefore, the subcommittee $W_{\text{aff}} \coloneqq \{c_{s+1}, \dots, c_k\} \cup \{d_1, \dots, d_r\}$ is affordable. 
    Furthermore, $W_{\text{aff}}$ satisfies JR, as we have added candidates $d_j$ until no more JR violations exist.
    
    It remains to show that $W$ is connected to  $W_{\text{aff}}$.
    Note that $W_{\text{aff}}$ has size $k - s + r$.
    Let $W' \supseteq W_{\text{aff}}$ be any committee with $W' \setminus W_{\text{aff}} = \{y_1, \dots, y_{s-r}\}$. 
    We define a sequence of transitions $W_0, \dots, W_s$ with $W_0 = W$, $W_i = (W \cup \{d_1, \dots, d_i\}) \setminus \{c_1, \dots, c_i\}$ for $i \in [r]$, 
    and $W_i = (W \cup \{d_1, \dots, d_r\} \cup \{y_1, \dots, y_{i-r}\}) \setminus \{c_1, \dots, c_i\}$ for $r < i \leq s$. 
    As we showed earlier, for each $i \in [r]$, after removing $c_1, \dots, c_i$, it is sufficient to add $d_1, \dots, d_{i-1}$ in order to satisfy JR. 
    Hence, for every $i \in [r]$, the committee $W_i$ satisfies JR.
    Furthermore, for $ r < i \leq s$, since $W_{\text{aff}}$ satisfies JR and 
    \begin{align*}
    W_{\text{aff}} 
    &= \{c_{s+1}, \dots,c_k\} \cup \{d_1, \dots, d_r\} \\
    &\subseteq \{c_{i+1}, \dots,c_k\} \cup \{d_1, \dots, d_r\} \cup \{y_1, \dots, y_{i-r} \} \\
    &= W_i,
    \end{align*}
    we can conclude that $W_i$ satisfies JR.
    Therefore, $W$ and $W_{\text{aff}}$ are connected in $\JR(A,k)$.
\end{proof}

We next prove an analogous statement for PAV. 
To this end, we first formulate a lemma which guarantees that we can always remove some candidate with a low PAV score contribution. 
Recall that $\Delta(W,c)$ denotes the marginal contribution of $c$ to the PAV score of $W$, that is, $\PAV(W) - \PAV(W\setminus\{c\})$.

\begin{lemma}
        For any subcommittee $W$, if $\lvert N_{W} \rvert < \frac{n}{k}\cdot(\lvert W \rvert -1)$, then a candidate $c\in W$ that has the minimum marginal contribution to the PAV score of $W$ satisfies  $\Delta(W,c) <  \frac{n}{k}$.
        \label{PAVLemma}
    \end{lemma}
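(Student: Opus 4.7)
The plan is a standard double-counting/averaging argument. First I would compute, for each $c \in W$, the marginal contribution explicitly as
\[
\Delta(W,c) = \PAV(W) - \PAV(W\setminus\{c\}) = \sum_{v \in N_c} \frac{1}{\lvert A_v \cap W\rvert},
\]
since removing $c$ decreases each term $H(\lvert A_v \cap W\rvert)$ for $v\in N_c$ by exactly $\frac{1}{\lvert A_v \cap W\rvert}$ and leaves the remaining terms unchanged.

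Next I would sum over all $c\in W$ and swap the order of summation, exactly as in the proof of \Cref{thm:EJRinFourEJR}:
\[
\sum_{c\in W}\Delta(W,c) = \sum_{c\in W}\sum_{v\in N_c}\frac{1}{\lvert A_v\cap W\rvert} = \sum_{v\in N_W}\sum_{c\in W\cap A_v}\frac{1}{\lvert A_v\cap W\rvert} = \sum_{v\in N_W} 1 = \lvert N_W\rvert.
\]
Hence the minimum of $\Delta(W,c)$ over $c\in W$ is at most the average $\frac{\lvert N_W\rvert}{\lvert W\rvert}$.

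Finally, I plug in the hypothesis $\lvert N_W\rvert < \frac{n}{k}\cdot(\lvert W\rvert - 1)$ to get
\[
\min_{c\in W}\Delta(W,c) \le \frac{\lvert N_W\rvert}{\lvert W\rvert} < \frac{n}{k}\cdot\frac{\lvert W\rvert -1}{\lvert W\rvert} < \frac{n}{k},
\]
which is exactly the desired bound. There is no real obstacle here: the only thing to be careful about is making the inequality strict, but this is automatic since $\frac{\lvert W\rvert - 1}{\lvert W\rvert} < 1$ (and the assumed inequality is already strict).
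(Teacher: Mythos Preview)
Your proposal is correct and is essentially identical to the paper's own proof: both compute $\sum_{c\in W}\Delta(W,c)=\lvert N_W\rvert$ by swapping the order of summation, then apply the pigeonhole/averaging bound together with the hypothesis to obtain $\min_{c\in W}\Delta(W,c)<\frac{n}{k}\cdot\frac{\lvert W\rvert-1}{\lvert W\rvert}<\frac{n}{k}$. The only cosmetic difference is that the paper explicitly remarks that $W$ cannot be empty (which your division by $\lvert W\rvert$ implicitly uses, and which follows immediately from the hypothesis).
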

\begin{proof}
    Assume that $\lvert N_{W} \rvert < \frac{n}{k}\cdot(\lvert W \rvert -1)$, and note that $W$ cannot be empty.
    We have
    \begin{align*}
    \sum_{c\in W}\Delta(W,c) 
    &= \sum_{c\in W} \sum_{v \in N_c} \frac{1}{\lvert A_v \cap W \rvert} \\
    &= \sum_{v \in N_W} \sum_{c \in W\cap A_v} \frac{1}{\lvert A_v \cap W \rvert} \\
    &= \sum_{v \in N_W} 1 
    = \lvert N_W \rvert 
    < \frac{n}{k} \cdot (\lvert W \rvert -1).
    \end{align*} 
    By the pigeonhole principle, there exists a candidate $c \in W$ such that  $\Delta(W,c) < \frac{n}{k} \cdot\frac{\lvert W \rvert -1}{\lvert W\rvert}< \frac{n}{k}$.
\end{proof} 

We now establish the statement for PAV.

\begin{lemma}
    For any instance $(A,k)$, any committee that is returned by PAV is connected to some affordable subcommittee satisfying JR in $\JR(A,k)$.
    \label{Lemma:PavAff}
\end{lemma}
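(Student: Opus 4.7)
The plan is to follow the template used for CCAV in \Cref{lemma:everythingIsAffordable}, replacing the ``unique coverage'' quantity with the marginal PAV contribution $\Delta$. Let $W$ be a PAV-optimal committee. I first order $c_1,\ldots,c_k \in W$ greedily so that each $c_i$ has minimum marginal contribution in $W \setminus \{c_1,\ldots,c_{i-1}\}$, and let $s$ be the largest index with $\Delta(W\setminus\{c_1,\ldots,c_{i-1}\}, c_i) < n/k$. Since $\Delta(W',c)$ is non-increasing in the containing subcommittee $W'$, this minimum-$\Delta$ value is non-decreasing in $i$, so every $i>s$ satisfies $\Delta \ge n/k$. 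The case $s=0$ is immediate: every $c \in W$ has $\Delta(W,c) \ge n/k$, monotonicity gives $\Delta(X,c) \ge n/k$ for any $X \subseteq W$ with $c \in X$, and hence $\lvert N_X\rvert = \sum_{c\in X} \Delta(X,c) \ge \lvert X\rvert \cdot n/k$ for all $X \subseteq W$, so $W$ itself is affordable by the Hall-type characterization.

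For $s \ge 1$, I would run a thought experiment that successively removes $c_1,\ldots,c_s$, each time restoring JR (if broken) by adding JR-witness candidates $d_1, d_2, \ldots$; let $r$ be the total number of witnesses and $r_i$ the count after removing $c_i$. The key invariant is $r_i \le i-1$, established by PAV-optimality: if $r_i = i$, then the size-$k$ subcommittee $U = (W \cup \{d_1,\ldots,d_i\}) \setminus \{c_1,\ldots,c_i\}$ would satisfy $\PAV(U) > \PAV(W)$ by a telescoping calculation, since each removed candidate contributes strictly less than $n/k$ (by the ordering and $\Delta$-monotonicity, as the intermediate subcommittee at removal time is a superset of $W\setminus\{c_1,\ldots,c_{j-1}\}$) while each added witness contributes at least $n/k$ (its $\ge n/k$ uncovered approvers each add $1$ to the PAV sum). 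The transition path then follows the CCAV template: $W_i = (W \cup \{d_1,\ldots,d_i\}) \setminus \{c_1,\ldots,c_i\}$ for $i \le r$, with arbitrary fillers replacing the missing $d$'s for $r < i \le s$; since $r_i \le i-1$, each $W_i$ contains the JR-satisfying subcommittee reached after round $i$ of the thought experiment, and therefore also satisfies JR.

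The main obstacle is showing that $W_{\mathrm{fin}} = V \cup \{d_1,\ldots,d_r\}$, where $V := \{c_{s+1},\ldots,c_k\}$, is affordable. I would verify the Hall-type condition $\lvert N_X\rvert \ge \lvert X\rvert \cdot n/k$ for all $X \subseteq W_{\mathrm{fin}}$ by decomposing $X = V_X \cup D_X$. For the $V$-part, taking $j^* = \min\{j : c_j \in V_X\}$, every $c \in V_X$ satisfies $\Delta(V_X, c) \ge \Delta(W\setminus\{c_1,\ldots,c_{j^*-1}\}, c) \ge n/k$ (by monotonicity, and since $c_{j^*}$ attains the minimum $\Delta$ in this superset), giving $\lvert N_{V_X}\rvert \ge \lvert V_X\rvert \cdot n/k$. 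For the $D_X$-part, each $d_j$'s set of uncovered approvers at its time of addition (size $\ge n/k$) is disjoint from $N_V$ (since $V$ remained in the current subcommittee throughout) and from the analogous sets of earlier $d_l$'s, so $\lvert N_{D_X}\setminus N_V\rvert \ge \lvert D_X\rvert \cdot n/k$. Since $N_{V_X} \subseteq N_V$, combining yields $\lvert N_X\rvert \ge \lvert V_X\rvert \cdot n/k + \lvert D_X\rvert \cdot n/k = \lvert X\rvert \cdot n/k$, as required.
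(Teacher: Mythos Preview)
Your argument is correct and follows the same removal-plus-JR-repair template as the paper, but the final step differs substantively. The paper does \emph{not} directly establish affordability of the terminal subcommittee: it only shows the weaker bound $\lvert N_{W'}\rvert \ge \frac{n}{k}(\lvert W'\rvert-1)$ (using a stopping criterion phrased in terms of $\lvert N_{W\setminus\{c_1,\ldots,c_i\}}\rvert$ rather than the marginal $\delta_i$), and then invokes \Cref{Lemma:SubCommitteeExtension} with $X=\emptyset$ as a black box to obtain an affordable JR subcommittee. Your route is more direct: by tracking $\delta_i$ and exploiting its monotonicity, you show that $V=\{c_{s+1},\ldots,c_k\}$ satisfies the Hall condition $\lvert N_{V_X}\rvert\ge \lvert V_X\rvert\cdot\frac{n}{k}$ for all $V_X\subseteq V$ via the identity $\lvert N_{V_X}\rvert=\sum_{c\in V_X}\Delta(V_X,c)$, and then combine this with the disjoint witness sets for the $d_j$'s to verify the Hall condition on all of $W_{\mathrm{fin}}$. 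This avoids \Cref{Lemma:SubCommitteeExtension} entirely, at the cost of invoking the converse of \Cref{Lemma:SubcommitteesAlgorithm} (the Hall/max-flow characterization of affordability), which the paper states only in the forward direction; that converse is standard but you should cite or state it explicitly. Your non-decreasing-$\delta_i$ observation (which the paper does not make) is what makes the $V_X$ half of the Hall argument work and is the genuinely new ingredient.
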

\begin{proof}
    Let $W_{\PAV}$ be any committee returned by PAV. Our goal is to connect $W_{\PAV}$ to some subcommittee $W'$ with $\lvert N_{W'} \rvert \geq \frac{n}{k}(\lvert W' \rvert -1)$ that satisfies JR. 
    Then, we can apply \Cref{Lemma:SubCommitteeExtension} by choosing $W'_{\text{aff}}$ and $X$ to be the empty set---the lemma guarantees the existence of an affordable subcommittee $W_{\text{aff}}^X$ satisfying JR that is connected to $W$ in $\JR(A,k)$.

    If $W_{\PAV}$ itself satisfies $\lvert N_{W_{\PAV}} \rvert \geq \frac{n}{k}(\lvert W_{\PAV} \rvert -1)$, we are done since $W_{\PAV}$ also satisfies JR. 
    Therefore, assume in the following that $\lvert N_{W_{\PAV}} \rvert < \frac{n}{k}(\lvert W_{\PAV} \rvert -1) = \frac{n}{k}(k-1)$.   
    We will determine a suitable subcommittee $W' = (W \setminus \{c_1,\dots, c_{s}\}) \cup \{d_1,\dots,d_r\}$ through the following process.
    
    \textbf{Candidate removals:}
    We initialize $W' = W_{\PAV}$ and
    iteratively remove candidates $c_1,c_2,\dots$ from $W$ in such a way that, for each $i$, the marginal contribution of $c_i$ to the PAV score of $W_{\PAV} \setminus \{c_1,\dots,c_{i-1}\}$ is minimal. 
    We stop the removal process after the smallest index $i$ such that  
    $\lvert N_{W_{\PAV} \setminus \{c_1,\dots,c_i\}} \rvert \geq (k-i-1)\cdot\frac{n}{k}$, and call this smallest index $s$. 
    Note that for $i = k$, the inequality trivially holds, so $s$ is well-defined. 
    Moreover, $s \geq 1$, as we assumed that $W_{\PAV}$ itself does not satisfy this condition. 

    \textbf{Candidate additions:}
    After each removal, if a JR violation occurs, we add a candidate witnessing the violation into the subcommittee, and repeat until the subcommittee satisfies JR. 
    We claim that for any set $W' = (W_{\PAV}\setminus \{c_1,\dots,c_i\}) \cup \{d_1,\dots,d_j\}$ created during the process such that JR is restored in $W'$, it holds that $i > j$. To prove this claim, we first show by induction on $i$ that, for every $i\in[s]$, it holds that $\PAV(W_{\PAV} \setminus \{c_1, \dots, c_i\})   > \PAV(W_{\PAV}) - i\cdot\frac{n}{k}$.

    For the base case $i = 1$, since $\lvert N_{W_{\PAV}} \rvert < \frac{n}{k}(\lvert W_{\PAV} \rvert  - 1)$, \Cref{PAVLemma} implies that the marginal contribution of $c_1$ to the PAV score of $W_{\PAV}$ is less than $\frac{n}{k}$.
    Hence, $\PAV(W_{\PAV} \setminus \{c_1\})   > \PAV(W_{\PAV}) - \frac{n}{k}$.
    
    For the induction step, assume that the statement holds for some $i\in [s-1]$.
    By definition of $s$, it must hold that $\lvert N_{W_{\PAV}\setminus \{c_1,\dots,c_i\}} \rvert < \frac{n}{k}(k-i-1) = \frac{n}{k}(\lvert W_{\PAV}\setminus \{c_1,\dots,c_i\} \rvert -1)$. 
    Therefore, \Cref{PAVLemma} implies that the marginal contribution of $c_{i+1}$ to the PAV score of $W_{\PAV}\setminus \{c_1,\dots,c_i\}$ is less than $\frac{n}{k}$. 
    Hence, we have 
    \begin{align*}
    &\PAV(W_{\PAV} \setminus \{c_1, \dots, c_i,c_{i+1}\}) \\
    &> \PAV(W_{\PAV} \setminus \{c_1, \dots, c_i\}) - \frac{n}{k} \\
    &> \PAV(W_{\PAV}) - (i+1)\cdot\frac{n}{k},
    \end{align*}
    where the last inequality follows from the induction hypothesis.
    This concludes the induction.

    Next, we claim that for each $i\in[s]$, after removing a total of $i$ candidates, we need to add a total of fewer than $i$ candidates in order to restore JR.
    Assume for contradiction that, after removing $c_1,\dots,c_i$ and adding $d_1,\dots,d_{i-1}$, JR is not yet restored.
    Observe that for each $j\in [i-1]$, just before $d_j$ is added, the subcommittee is a superset of $\widehat{W} \coloneqq (W_{\PAV} \setminus \{c_1, \dots, c_i\}) \cup \{d_1,\dots,d_{j-1}\}$.
    Since $d_j$ is a witness of a JR violation, there exist $\frac{n}{k}$ voters who do not approve any candidate in $\widehat{W}$, so adding $d_j$ to $\widehat{W}$ increases its PAV score by at least $\frac{n}{k}$.
    This implies that
    \begin{align*}
    &\PAV((W_{\PAV} \setminus \{c_1, \dots, c_i\}) \cup \{d_1,\dots,d_{i-1}\}) \\
    & \ge \PAV(W_{\PAV} \setminus \{c_1, \dots, c_i\}) + (i-1)\cdot\frac{n}{k} \\
    &> \PAV(W_{\PAV}) - i\cdot\frac{n}{k} + (i-1)\cdot\frac{n}{k} \\
    &= \PAV(W_{\PAV}) - \frac{n}{k}.
    \end{align*}
    Now, since $(W_{\PAV} \setminus \{c_1, \dots, c_i\}) \cup \{d_1,\dots,d_{i-1}\}$ does not satisfy JR, adding a candidate that witnesses the JR violation increases the PAV score by at least $\frac{n}{k}$, thereby making it higher than the score of $W_{\PAV}$.
    This contradicts the assumption that $W_{\PAV}$ is chosen by PAV.

    \textbf{Construction of the transition committees:}
    Let $W' = (W_{\PAV} \setminus \{c_1,\dots, c_{s}\}) \cup \{d_1,\dots,d_r\}$ be the final subcommittee created by the procedure, and $W'' = W' \cup \{y_1, \dots, y_{s-r}\}$ be any committee containing $W'$.
    We now define a sequence $W_0, \dots, W_s$ with $W_0 = W_{\PAV}$, $W_s = W''$,  $W_i = (W_{\PAV}  \setminus \{c_1,\dots,c_i\}) \cup \{d_1,\dots,d_{i}\}$ for $i\in [r]$, and $W_i = (W_{\PAV}  \setminus \{c_1,\dots,c_i\}) \cup \{d_1,\dots,d_{r}\} \cup \{y_1,\dots,y_{i-r}\}$ for $r < i < s$. 
   We have shown above that for $i\in [r]$, after removing $c_1, \dots, c_i$, it is sufficient to add $d_1, \dots, d_{i-1}$ in order to restore JR. 
   Therefore, for $i \leq r$, the committee $W_i$ satisfies JR.
    For $ r < i \leq s$, since $W'$ satisfies JR and
    \begin{align*}
    W' &= \{c_{s+1}, \dots,c_k\} \cup \{d_1, \dots, d_r\} \\
    &\subseteq \{c_{i+1}, \dots,c_k\} \cup \{d_1, \dots, d_r\} \cup \{y_1, \dots, y_{i-r} \} \\
    &= W_i,
    \end{align*}
    we can conclude that $W_i$ again satisfies JR.
    Therefore, $W_{\PAV}$ and $W'$ are connected in $\JR(A,k)$.

    It remains to show that the subcommittee $W'$ satisfies $\lvert N_{W'}\rvert \geq \frac{n}{k}\cdot(\lvert W' \rvert -1)$. 
    We have 
    
     \begin{align*}
    &\lvert N_{W'} \rvert \\
    &= \lvert N_{\{d_1, \dots,d_r \}} \setminus N_{W_{\PAV}  \setminus \{c_1,\dots,c_s\}} \rvert + \lvert N_{W_{\PAV}  \setminus \{c_1,\dots,c_s\}} \rvert  \\
    &= (\lvert N_{d_1}\setminus N_{W_{\PAV}\setminus\{c_1,\dots, c_s\}}\rvert \\
    &\quad+ \lvert N_{d_2}\setminus N_{(W_{\PAV} \cup \{d_1\}) \setminus\{c_1,\dots, c_s\}}\rvert \\
&\quad+ \dots + \lvert N_{d_r}\setminus N_{(W_{\PAV} \cup \{d_1,d_2,\dots, d_{r-1}\}) \setminus
\{c_1,\dots, c_s\}}\rvert) \\
&\quad+\lvert N_{W_{\PAV}  \setminus \{c_1,\dots,c_s\}} \rvert \\
    & \geq r\cdot\frac{n}{k} + \lvert N_{W_{\PAV}  \setminus \{c_1,\dots,c_s\}} \rvert \\
    & \geq r\cdot\frac{n}{k} + (k - s  -1) \cdot\frac{n}{k} \\
 &= (r+k-s-1)\cdot \frac{n}{k} \\
 &= (\lvert W' \rvert - 1)   \cdot\frac{n}{k},
\end{align*} 
where we use the definition of $s$ for the last inequality.
This completes the proof.
\end{proof}

Finally, we have all the ingredients necessary for proving \Cref{Cor:JRAllRulesConnected}.

\begin{proof}[Proof of \Cref{Cor:JRAllRulesConnected}] 
    Let $W \in f(A,k)$ and $W' \in f'(A,k)$ be two committees returned by $f,f' \in R$, respectively. 
    By Lemmas \ref{lemma:everythingIsAffordable} and \ref{Lemma:PavAff}, there exists an affordable subcommittee $W_a$ satisfying JR such that $W$ and $W_a$ are connected in $\JR(A,k)$; similarly, there exists an affordable subcommittee $W'_a$ satisfying JR such that $W'$ and $W'_a$ are connected in $\JR(A,k)$. 
    Furthermore, by Proposition~\ref{Thm:PayToWin}, $W'_a$ and $W_a$ are connected in $\JR(A,k)$. 
    It follows that $W$ and $W'$ are also connected in $\JR(A,k)$.
\end{proof}

Hence, while the set of JR committees can be rather disconnected (\Cref{thm:isolated_jr}), the committees chosen by several important rules all lie in the same connected component within the set of committees satisfying JR.

For EJR and EJR+, we will establish a weaker statement that all committees chosen by 
PAV, MES, and GJCR are not isolated in EJR+ and EJR.
We first prove two lemmas to handle special cases in which a committee is trivially not isolated. 

\begin{lemma}
    For any instance $(A,k)$  and any $\Pi \in \{\EJR, \EJRP\}$, let $W$ be a committee that satisfies $\Pi$. Further, suppose that there exists a candidate $c \in W$ such that $W \setminus \{c\}$ also satisfies $\Pi$. 
    Then, $W$ is not isolated in $\Pi(A,k)$.
    \label{Lemma:StrictSubcommitteeSatisfiesX}
\end{lemma}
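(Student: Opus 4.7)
The plan is to exploit a monotonicity property shared by both $\EJR$ and $\EJRP$: if a subcommittee $X$ satisfies one of these axioms and $X \subseteq Y$ with $\lvert Y\rvert \le k$, then $Y$ also satisfies it. Indeed, $\lvert A_v \cap Y\rvert \ge \lvert A_v \cap X\rvert$ holds for every voter $v$, so any witness voter for $X$ remains a witness for $Y$; and for $\EJRP$ the first disjunct $\bigcap_{v \in N'} A_v \subseteq X$ immediately lifts to $\bigcap_{v \in N'} A_v \subseteq Y$. This monotonicity is the main engine of the argument.

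First, I would dispose of the degenerate case $m = k$: here $C$ is the only committee of size $k$, so $\lvert \Pi(A,k)\rvert \le 1$, and the precondition $\lvert\mathcal{S}\rvert \ge 2$ in the definition of isolation fails. Hence $W$ is vacuously not isolated in $\Pi(A,k)$, and there is nothing more to show.

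In the main case $m > k$, I would pick an arbitrary candidate $c' \in C \setminus W$ and form $W' \coloneqq (W \setminus \{c\}) \cup \{c'\}$. By construction, $\lvert W'\rvert = k$ and $d(W, W') = 1$. Since $W \setminus \{c\}$ satisfies $\Pi$ by hypothesis and $W \setminus \{c\} \subseteq W'$, the monotonicity observation immediately gives $W' \in \Pi(A,k)$. This exhibits a $\Pi$-committee at distance exactly one from $W$, so $W$ is not isolated.

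There is no real obstacle: the proof reduces to a one-line application of the monotonicity observation. The only point worth flagging is the trivial case $m = k$, where no swap candidate $c'$ is available, but in that situation $\Pi(A,k)$ is simply too small to admit any isolated committee by definition.
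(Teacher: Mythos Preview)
Your proof is correct and follows essentially the same approach as the paper: handle the degenerate case $C=W$ (equivalently $m=k$) by noting that $\Pi(A,k)$ is then a singleton, and otherwise swap $c$ for any $c'\in C\setminus W$ and invoke monotonicity of $\Pi$ under taking supersets. The only difference is cosmetic---you spell out the monotonicity step for $\EJR$ and $\EJRP$ explicitly, whereas the paper leaves it implicit.
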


\begin{proof}
    If $C = W$, then $W$ is the only committee, so it is not isolated by definition.  
    Otherwise, there exists a candidate $c' \in C \setminus W$.
    Since we assume that $W \setminus \{c\}$ satisfies $\Pi$ for some $c \in W$, it holds that $(W \cup \{c'\}) \setminus \{c\}$ also satisfies $\Pi$. 
    The claim then follows from the observation that $W'$ can be reached from $W$ by replacing a single candidate. 
\end{proof}

\begin{lemma}
     For any instance $(A,k)$  and any $\Pi \in \{\EJR, \EJRP\}$, let $W$ be a committee that satisfies $\Pi$. Furthermore, let $C_a\subseteq C$ be the set of candidates that is approved by at least one voter. 
     If $\lvert C_a\rvert \leq k$, then $W$ is not isolated in $\Pi(A,k)$.
    \label{Lemma:atMostKRealCandidates}
\end{lemma}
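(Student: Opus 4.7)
The plan is to show, whenever $|\Pi(A,k)|\ge 2$, that a committee $W'\ne W$ in $\Pi(A,k)$ exists at distance $1$ from $W$; the case $|\Pi(A,k)|=1$ already yields non-isolation by definition. I will use the following monotonicity observation throughout: replacing any candidate $c\notin C_a$ in a $\Pi$-committee by another candidate preserves $\Pi$. No voter's approval count can decrease, which handles the EJR/EJR+ lower-bound requirement; and for the EJR+ inclusion requirement, for every non-empty $N'\subseteq N$ we have $\bigcap_{v\in N'}A_v\subseteq C_a$, so the common-approval set can never contain the removed $c$.

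I will split the argument into two cases based on whether every approved candidate lies in $W$. \textbf{Case 1: $C_a\not\subseteq W$.} Since $|W\cap C_a|\le |C_a|-1\le k-1$, a fully unapproved candidate $c\in W\setminus C_a$ exists, and by assumption there is some $c^\star\in C_a\setminus W$ as well. I will then take $W'=(W\setminus\{c\})\cup\{c^\star\}$, which is a $\Pi$-neighbor of $W$ by the monotonicity observation applied to the removed candidate $c\notin C_a$.

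\textbf{Case 2: $C_a\subseteq W$.} Assuming $|\Pi(A,k)|\ge 2$, I will fix some $W''\in\Pi(A,k)$ with $W''\ne W$ and iteratively transform $W''$ towards $W$: while the current committee $W^{(i)}$ still differs from $W$, pick any $s_2\in W^{(i)}\setminus W$ and any $s_1\in W\setminus W^{(i)}$, and set $W^{(i+1)}=(W^{(i)}\setminus\{s_2\})\cup\{s_1\}$. The assumption $C_a\subseteq W$ forces $s_2\in C\setminus C_a$, so the monotonicity observation preserves $\Pi$ at each step; meanwhile $d(W^{(i+1)},W)=d(W^{(i)},W)-1$, so stopping as soon as $d(W^{(i)},W)=1$ produces the desired $\Pi$-neighbor.

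The hard part will be the subcase $W=C_a$ inside Case 2, where every candidate in $W$ is approved and so any direct swap out of $W$ strictly reduces some voter's approval count, potentially breaking $\Pi$. Importing a second $\Pi$-committee $W''$ as a stepping stone circumvents this, since the hypothesis $C_a\subseteq W$ guarantees that the surplus $W''\setminus W$ consists entirely of unapproved candidates, which can be peeled off one at a time and exchanged for missing members of $W$ without ever decreasing any approval count.
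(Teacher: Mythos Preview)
Your proof is correct and follows essentially the same approach as the paper: both exploit that swapping out an unapproved candidate preserves $\Pi$, and in the hard case where $C_a\subseteq W$ (the paper states this as $C_a=W$ after first disposing of the easy case where $W$ contains an unapproved candidate) both import a second $\Pi$-committee and use that its surplus over $W$ consists entirely of unapproved candidates. The only cosmetic difference is that the paper makes a single swap---removing some $c\in W\setminus W''$ and inserting some $c'\in W''\setminus W$, then arguing that the result dominates $W''$ in every voter's approval count---whereas you walk from $W''$ toward $W$ step by step; both rest on the same monotonicity observation.
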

\begin{proof}
    Assume that $\lvert C_a\rvert\leq k$. 
    If $C = W$, then $W$ is the only committee, so it is not isolated by definition. 
    Suppose therefore that $\lvert C\rvert > k$.    
    If there exists a candidate $c\in W$ that is not approved by any voter, take any candidate $c' \in C \setminus  W$. 
    Then, $(W \cup \{c'\})\setminus \{c\}$ satisfies $\Pi$, and so $W$ is not isolated. 
    
    Assume from now on that every candidate in $W$ is approved by at least one voter.
    This means that $W\subseteq C_a$.
    Since $|W| = k \ge |C_a|$, it must be that $C_a = W$.
    If there is no committee other than $W$ that satisfies $\Pi$, then $W$ is not isolated by definition.
    Hence, let $W' \neq W$ be a committee satisfying $\Pi$. 
    Since $W'\neq W$, there exists a candidate $c \in W \setminus W'$ and a candidate $c' \in W' \setminus W$. 
    We claim that $(W \cup \{c'\}) \setminus \{c\}$ satisfies $\Pi$. To this end, observe that $W' \cap C_a \subseteq W \setminus \{c\}$. 
    This implies that for each voter, the set of candidates that the voter approves in $W \setminus \{c\}$ is a superset of the corresponding set for $W'$. 
    Since $W'$ satisfies $\Pi$, so does $(W \cup \{c'\}) \setminus \{c\}$. It follows that $W$ is not isolated.
\end{proof}

We now proceed to the proof of non-isolation for PAV.

\begin{lemma}
    Let $W$ be a committee returned by PAV such that for each $c\in W$, the subcommittee $W \setminus \{c\}$ violates EJR+.
    Further, suppose that there exists a candidate in $C \setminus W$ that is approved by at least one voter. 
    Then, there exists a committee $W'$ satisfying EJR+ such that $d(W,W') = 1$. 
    \label{prop:PAVnotIsolatedEJRPlus}
\end{lemma}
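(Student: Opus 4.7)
The plan is to argue by contradiction: suppose that for every $c \in W$ and every $c'' \in C \setminus W$ approved by at least one voter, the committee $(W \setminus \{c\}) \cup \{c''\}$ violates EJR+. Fix any such $c' \in C \setminus W$ approved by some voter (guaranteed by hypothesis), and for each $c \in W$ select a witness $(\tilde c_c, \tilde N_c, \tilde \ell_c)$ of the EJR+ violation of $(W \setminus \{c\}) \cup \{c'\}$. Since $\tilde c_c$ must lie outside this committee, one of two cases holds: (A) $\tilde c_c = c$, or (B) $\tilde c_c \in C \setminus (W \cup \{c'\})$.

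The core of the proof is to establish, in both cases, the uniform bound
\[
\mathrm{ml}(c) \;:=\; \PAV(W) - \PAV(W \setminus \{c\}) \;=\; \sum_{v \in N_c} \frac{1}{|A_v \cap W|} \;\ge\; \frac{n}{k}.
\]
In Case A, since $\tilde N_c \subseteq N_c$, the violation inequality rearranges to $|A_v \cap W| \le \tilde\ell_c - \mathds{1}[c' \in A_v]$ for every $v \in \tilde N_c$, giving $\mathrm{ml}(c) \ge |\tilde N_c|/\tilde\ell_c \ge n/k$ directly. In Case B, observe that $W \cup \{c'\}$ still satisfies EJR+ (EJR+ is preserved under adding candidates), and combining this with the violation forces $|A_v \cap (W\setminus\{c\})| + 1 \le \tilde\ell_c$ for every $v \in \tilde N_c$; thus $\mathrm{mc}(\tilde c_c, W \setminus \{c\}) := \PAV((W \setminus \{c\}) \cup \{\tilde c_c\}) - \PAV(W \setminus \{c\}) \ge n/k$, and PAV-optimality of $W$ delivers $\mathrm{ml}(c) \ge \mathrm{mc}(\tilde c_c, W \setminus \{c\}) \ge n/k$.

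Summing over $c \in W$ and invoking the identity $\sum_{c \in W} \mathrm{ml}(c) = |N_W| \le n$ forces equality throughout: $|N_W| = n$ and $\mathrm{ml}(c) = n/k$ for every $c \in W$. From this tightness the contradiction drops out on two fronts. For any $c$ in Case B, the equality $\mathrm{ml}(c) = \mathrm{mc}(\tilde c_c, W \setminus \{c\})$ means $(W \setminus \{c\}) \cup \{\tilde c_c\}$ attains the optimal PAV score, hence satisfies EJR+ (since PAV satisfies EJR+), and it sits at distance one from $W$ --- contradicting our standing assumption. Therefore every $c$ must fall in Case A. But tightness in Case A forces $\tilde N_c = N_c$, $|A_v \cap W| = \tilde\ell_c$ for every $v \in N_c$, and consequently $c' \notin A_v$ (otherwise $|A_v \cap W| \le \tilde\ell_c - 1$). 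Ranging over all $c \in W$ yields $N_{c'} \cap N_W = \emptyset$; combined with $N_W = N$ we conclude that no voter approves $c'$, contradicting the choice of $c'$.

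The main obstacle is the equality-condition bookkeeping in Case A: the bound carries only a one-approval ``slack'' encoded by the indicator $\mathds{1}[c' \in A_v]$, and it is precisely this term that, under full tightness, rules out $c' \in A_v$ across the support of every $c \in W$. The other delicate point is recognizing that tightness in Case B invokes exactly the PAV-satisfies-EJR+ theorem rather than requiring a separate verification, which is what lets one rule out Case B without any further case analysis.
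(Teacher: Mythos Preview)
Your proof is correct and shares the same core engine as the paper's: establish $\Delta(W,c)\ge n/k$ for every $c\in W$, sum to get $|N_W|=n$ and $\Delta(W,c)=n/k$ identically, then exploit the resulting tightness. The organization differs, however. The paper argues constructively: it first derives $\Delta(W,c)\ge n/k$ directly from the \emph{lemma hypothesis} that $W\setminus\{c\}$ violates EJR+ (not from your contradiction hypothesis on $(W\setminus\{c\})\cup\{c'\}$), then picks a specific $c^*\in C\setminus W$ approved by some $v^*$ and the specific $c\in W\cap A_{v^*}$ (which exists since $N_W=N$), and case-splits on whether $W\setminus\{c\}$ has a witness other than $c$. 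Your Case~B matches their ``other witness'' case; your Case~A with the indicator $\mathds{1}[c'\in A_v]$ and the global tightness-across-all-$c$ argument replaces their more local observation that when $c$ is the only witness, $v^*$ itself (who approves both $c$ and $c^*$) breaks the potential violation of $(W\setminus\{c\})\cup\{c^*\}$. Your route is a bit more uniform; theirs gives an explicit neighbor.

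One minor point: in Case~B your remark that ``$W\cup\{c'\}$ still satisfies EJR+'' is never actually used. The inequality $|A_v\cap(W\setminus\{c\})|+1\le\tilde\ell_c$ follows immediately from the violation for $(W\setminus\{c\})\cup\{c'\}$ alone, since $|A_v\cap(W\setminus\{c\})|\le |A_v\cap((W\setminus\{c\})\cup\{c'\})|<\tilde\ell_c$. You can drop that sentence without loss.
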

\begin{proof}
    Let $c^*\in C\setminus W$ be a candidate approved by some voter $v^*$.
    Recall that $\Delta(W,c) \coloneqq \PAV(W) - \PAV(W\setminus\{c\})$.
    We first show that for every $c\in W$, it holds that $\Delta(W,c) \ge \frac nk$. 
    Consider any $c\in W$.
    Since $W\setminus \{c\}$ violates EJR+, for some $\ell$, there exists a set $N'\subseteq N$ of at least $\ell \cdot\frac nk$ voters such that every voter $v\in N'$ has $\lvert A_v\cap (W\setminus \{c\})\rvert < \ell$ and all of these voters approve the same candidate $c' \notin W\setminus\{c\}$.
    This means that 
    \begin{align*}
    \Delta((W \cup \{c'\}) \setminus\{c\}, c') \ge \frac 1 \ell \cdot \ell \cdot\frac nk = \frac nk.
    \end{align*}
    Hence, the PAV score of the committee $(W \cup\{c'\})\setminus\{c\} $ is at least $\frac{n}{k}$ larger than that of the subcommittee $W\setminus\{c\}$. 
    Since $W$ maximizes the PAV score among all committees, we must have
    \begin{align*}
    \Delta(W,c) + \PAV(W \setminus \{c\}) 
    &= \PAV(W) \\
    &\geq \PAV((W \cup \{c'\}) \setminus \{c\})\\ 
    &\ge \frac{n}{k} + \PAV(W \setminus \{c\}), 
    \end{align*}
which implies that $\Delta(W,c) \ge \frac{n}{k}$.

Now, it holds that
    \begin{align*}
    n &= \sum_{c\in W}\frac{n}{k} \\
    &\le\sum_{c\in W}\Delta(W,c) \\
    &= \sum_{c\in W} \sum_{v \in N_c} \frac{1}{\lvert A_v \cap W \rvert} \\
    &= \sum_{v \in N_W} \sum_{c \in W\cap A_v} \frac{1}{\lvert A_v \cap W \rvert} 
    = \sum_{v \in N_W} 1 
    = \lvert N_W \rvert \le n.
    \end{align*} 
It follows that $N_W = N$ and $\Delta(W,c) = \frac{n}{k}$ for all $c\in W$.

    Let $c\in W$ be such that $c \in A_{v^*}$; such a candidate $c$ exists since $N_W = N$.
    Assume that the removal of $c$ from $W$ causes a violation of EJR+ due to some $\ell$-large set of voters $N'$, with all voters in $N'$ approving some $c'\not\in W\setminus\{c\}$. 
    We will show that the committee $W_{c'} \coloneqq (W\cup \{c'\})\setminus \{c\}$ is chosen by PAV and hence satisfies EJR+.
    As in earlier calculations, the marginal contribution of $c'$ to the PAV score of $W_{c'}$ is at least $\frac nk$. 
    Hence, the PAV score of $W_{c'}$ is at least that of $W$.
    Since $W$ is chosen by PAV, the scores of both committees must be equal, and $W_{c'}$ is chosen by PAV as well.
    If there exists such a candidate $c'\ne c$, then $W_{c'}$ is a committee that satisfies EJR+ and has distance $1$ from $W$, and we are done.
    
    Suppose in the following that there does not exist a candidate $c' \neq c$ such that the subcommittee $W \setminus \{c\}$ witnesses an EJR+ violation due to $c'$, so the violation is only due to $c$ itself.
    Then, for some $\ell$, the set of voters approving $c$ must be of size exactly $\ell\cdot\frac{n}{k}$, and each of these voters must approve exactly $\ell-1$ candidates from $W\setminus\{c\}$---otherwise, by similar calculations as earlier, the marginal contribution of $c$ to the PAV score of $W$ would be larger than $\frac{n}{k}$.
    Recall that voter $v^*$ approves both $c\in W$ and $c^*\in C\setminus W$.
    It follows that $W_{c^*} \coloneqq (W\cup \{c^*\})\setminus \{c\}$ satisfies EJR+, since the only potential EJR+ violation is due to the set of $\ell\cdot\frac{n}{k}$ voters approving $c$, and because of $v^*$, this set does not constitute an EJR+ violation. 
    Since $d(W,W_{c^*}) =1$, this concludes the proof.
\end{proof}

\begin{theorem}
\label{thm:PAV-non-isolation}
     For any instance $(A,k)$  and any $\Pi \in \{\EJR, \EJRP\}$, each committee returned by PAV is not isolated in $\Pi(A,k)$.
    \label{Corollary:PAVCorollary}
    \end{theorem}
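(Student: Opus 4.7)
The plan is a straightforward sequential case analysis that orchestrates the three preceding lemmas. Fix a committee $W$ returned by PAV and an axiom $\Pi\in\{\EJR,\EJRP\}$.

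First I would dispose of the easy case: if there exists some $c\in W$ such that $W\setminus\{c\}$ already satisfies $\Pi$, then Lemma~\ref{Lemma:StrictSubcommitteeSatisfiesX} immediately yields that $W$ is not isolated in $\Pi(A,k)$. So assume from now on that $W\setminus\{c\}$ violates $\Pi$ for every $c\in W$. The \textbf{key observation} here is the monotonicity EJR+$\Rightarrow$EJR, which implies that a violation of EJR is automatically a violation of EJR+. Consequently, under the standing assumption, $W\setminus\{c\}$ violates EJR+ for every $c\in W$, regardless of whether $\Pi$ is EJR or EJR+. This \emph{uniformization} is precisely what allows Lemma~\ref{prop:PAVnotIsolatedEJRPlus} to be invoked in both regimes.

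Next I would branch on the number of approved candidates. Let $C_a\subseteq C$ denote the set of candidates approved by at least one voter. If $|C_a|\le k$, then Lemma~\ref{Lemma:atMostKRealCandidates} immediately gives non-isolation. Otherwise $|C_a|>k=|W|$, so there exists a candidate $c^*\in C_a\setminus W$, that is, a candidate outside $W$ approved by at least one voter. Together with the preceding paragraph, this places us squarely in the hypotheses of Lemma~\ref{prop:PAVnotIsolatedEJRPlus}, which produces a committee $W'$ with $d(W,W')=1$ that satisfies EJR+. Using the implication EJR+$\Rightarrow$EJR once more (in the other direction of application), we conclude that $W'\in\Pi(A,k)$ in both cases, so $W$ is not isolated.

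There is no substantial technical obstacle remaining; the proof is essentially the assembly of the three prepared lemmas. The only subtlety worth flagging in the write-up is that the same one-directional implication EJR+$\Rightarrow$EJR is used twice---once to upgrade the hypothesis ``every $W\setminus\{c\}$ violates $\Pi$'' into the stronger statement ``every $W\setminus\{c\}$ violates EJR+'' so that Lemma~\ref{prop:PAVnotIsolatedEJRPlus} applies, and once to downgrade the EJR+ guarantee on $W'$ back to a $\Pi$-guarantee when $\Pi=\EJR$.
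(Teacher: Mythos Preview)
Your proposal is correct and follows essentially the same approach as the paper: dispose of the easy cases via Lemmas~\ref{Lemma:StrictSubcommitteeSatisfiesX} and~\ref{Lemma:atMostKRealCandidates}, then invoke Lemma~\ref{prop:PAVnotIsolatedEJRPlus}, using the implication EJR+$\Rightarrow$EJR both to upgrade the violation hypothesis and to downgrade the conclusion when $\Pi=\EJR$. The only cosmetic difference is the order in which you handle the two easy cases, which is immaterial.
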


\begin{proof}
    Let $W$ be any committee chosen by PAV.
    We may assume that for all $c \in W$ the subcommittee $W \setminus \{c\}$ violates $\Pi$, and there exists a candidate in $C\setminus W$ that is approved by at least one voter.    
    Indeed, otherwise we can apply \Cref{Lemma:StrictSubcommitteeSatisfiesX} or \ref{Lemma:atMostKRealCandidates} to conclude that $W$ is not isolated in $\Pi(A,k)$. 

    Suppose first that $\Pi = \EJRP$.
    By \Cref{prop:PAVnotIsolatedEJRPlus}, there exists a committee $W'$ satisfying EJR+ such that $d(W,W') = 1$. 
    Hence, $W$ is not isolated in $\EJRP(A,k)$.

    Suppose now that $\Pi = \EJR$.
    This means that for all $c \in W$, the subcommittee $W \setminus \{c\}$ violates EJR. 
    Since a violation of EJR is also a violation of EJR+, by \Cref{prop:PAVnotIsolatedEJRPlus} again, there exists a committee $W'$ satisfying EJR+ such that $d(W,W') = 1$. 
    Then, $W'$ also satisfies EJR, and therefore $W$ is not isolated in $\EJR(A,k)$.
\end{proof}

Next, we show the analogous statement for both MES and GJCR. 
Note that if a subcommittee $W'$ of size strictly less than $k$ satisfies EJR+ and is contained in some committee $W$, then $W$ is trivially not isolated in $\EJRP(A,k)$ and $\EJR(A,k)$. 
Thus, if GJCR or MES produces a subcommittee of size strictly less than $k$ and completes it arbitrarily, the non-isolation trivially holds.
Assume therefore that the rule produces a committee without the arbitrary completion phase.
We enumerate the candidates $c_1,\dots,c_k\in W$ according the order in which they are added to $W$.

For both rules, each voter is endowed with a budget of $\frac{k}{n}$, and each candidate added to the committee incurs a cost of~$1$ to the voters. 
Therefore, in order to pay for $k$ candidates in total, every voter has to spend all of her budget.
For a voter $v \in N$, let $W(v)$ denote the set of candidates that $v$ pays for.
Similarly, for a candidate $c$, let $N(c)$ denote the set of voters that pay for $c$.

To prove the non-isolation, we need one additional lemma.

\begin{lemma}\label{MESEJRPViolationLemma}
    Let $(c, N', \ell)$ be any witness of an EJR+ violation of a subcommittee $\{c_1, \dots, c_{x}\}$ during the execution of MES or GJCR, for some $x\in [k-1]$.
 Then, every voter in $N'$ still has a budget of at least $\frac{k}{\ell \cdot n}$ after the purchase of $c_x$. 
\end{lemma}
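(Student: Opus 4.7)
The plan is to prove the lemma by exploiting one common structural observation and then specializing to the two rules. Since each voter $v \in N'$ approves fewer than $\ell$ candidates in $W_x \coloneqq \{c_1,\dots,c_x\}$, and both MES and GJCR only charge voters for approved candidates, $v$ has paid for at most $\ell - 1$ of $c_1, \dots, c_x$. Hence it suffices to show that \emph{every} single payment made by $v$ during the first $x$ rounds is at most $\frac{k}{\ell n}$, since then the total paid is at most $(\ell-1)\cdot\frac{k}{\ell n}$ and the remaining budget is $\frac{k}{n}-(\ell-1)\cdot\frac{k}{\ell n} = \frac{k}{\ell n}$, as required.

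For MES, I would prove the per-payment bound by induction on $i\in\{0,\dots,x\}$, maintaining the invariant that $b_i(v)\ge \frac{k}{n}-\lvert A_v\cap W_i\rvert\cdot\frac{k}{\ell n}$ for every $v\in N'$. The inductive step uses that $\lvert A_v\cap W_i\rvert\le\lvert A_v\cap W_x\rvert<\ell$, so the invariant at step $i$ implies $b_i(v)\ge\frac{k}{\ell n}$; combined with $\lvert N'\rvert\ge \ell\cdot\frac{n}{k}$ and $c\notin W_i$, the candidate $c$ becomes affordable at price $\frac{k}{\ell n}$, forcing $\widehat q(c_{i+1})\le\frac{k}{\ell n}$ by the MES min-max rule. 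The payment any $v\in N'$ makes in round $i+1$ is then at most $\widehat q(c_{i+1})\le\frac{k}{\ell n}$, which closes the induction.

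For GJCR, the per-payment bound follows once I establish the structural fact that the outer-iteration parameter has not dropped below $\ell$ by time $x$. I would prove this by contradiction: if iteration $\ell$ had finished at some earlier step $x'\le x$ with subcommittee $W_{x'}\subseteq W_x$, the termination condition for iteration $\ell$ gives $\lvert\{v\in N_c:\lvert A_v\cap W_{x'}\rvert<\ell\}\rvert<\ell\cdot\frac{n}{k}$; but since $W_{x'}\subseteq W_x$, this set contains $\{v\in N_c:\lvert A_v\cap W_x\rvert<\ell\}\supseteq N'$, contradicting $\lvert N'\rvert\ge\ell\cdot\frac{n}{k}$. Since the outer iterations are processed in decreasing order, every $c_i$ with $i\le x$ was added during an iteration with parameter $\ell'_i\ge\ell$, so $\lvert N(c_i)\rvert\ge\ell'_i\cdot\frac{n}{k}\ge\ell\cdot\frac{n}{k}$, and each payment $\frac{1}{\lvert N(c_i)\rvert}$ is at most $\frac{k}{\ell n}$.

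I expect the GJCR direction to be the main obstacle: unlike MES's min-max pricing, GJCR uses uniform per-candidate payments, so a single round could in principle charge a voter more than $\frac{k}{\ell n}$ if the purchased candidate had small support. The key workaround is the contrapositive monotonicity argument sketched above, which relies only on the growth of $W$ over time and on the assumed witness $N'$ providing a lower bound on the uncovered support of $c$ that is preserved as the subcommittee grows.
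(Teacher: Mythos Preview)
Your proposal is correct and essentially mirrors the paper's proof: both argue that each voter in $N'$ pays for at most $\ell-1$ candidates among $c_1,\dots,c_x$ and that every individual payment is at most $\frac{k}{\ell n}$, handling MES via the affordability of $c$ (you phrase it as an induction, the paper as a first-violation contradiction) and GJCR via the observation that the outer-loop parameter cannot have dropped below $\ell$ while $(c,N',\ell)$ remains a witness. Your GJCR argument is in fact spelled out a bit more carefully than the paper's, but the underlying idea is the same.
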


\begin{proof}
    Assume for the sake of contradiction that the claim does not hold, that is, some voter in~$N'$ spent more than $\frac{\ell-1}{\ell}\cdot\frac{k}{n}$.
    Note that every voter in $N'$ paid for at most  $\ell -1$ candidates in  $C_x \coloneqq \{c_1, \dots, c_{x}\}$. 
    
    For MES, we know that a voter in~$N'$ paid more than $\frac{k}{\ell \cdot n}$ for some candidate in $C_x$. 
    Let $c' \in C_x$ be the first such candidate across all such voters.  
    Before $c'$ is bought, every voter in $N'$ still had a budget of at least $\frac{k}{\ell \cdot n}$ to buy $c$.
    Hence, $c$ should have been bought instead of $c'$, a contradiction. 

    Similarly, for GJCR, we know that $(c, N', \ell)$ is also a witness of an EJR+ violation for any prefix of $C_x$. 
    This means that up to the point after the purchase of $c_x$, each voter paid at most $\frac{k}{\ell \cdot n}$ per candidate.
    It follows that every voter in $N'$ paid at most $\frac{\ell-1}{\ell}\cdot \frac{k}{n}$, a contradiction.
\end{proof}

We are now ready to show the non-isolation of MES and GJCR.

\begin{theorem}
     For any instance $(A,k)$  and any $\Pi \in \{\EJR, \EJRP\}$, each committee returned by MES and GJCR is not isolated in $\Pi(A,k)$.
    \label{Corollary:MESCorollary}
\end{theorem}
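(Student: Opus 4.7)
The plan is to mimic the approach used for PAV in Theorem~\ref{thm:PAV-non-isolation}: reduce to a structured setting via trivial lemmas, identify a good candidate swap, and verify that the swap preserves EJR+ (hence EJR).

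I would begin by applying Lemmas~\ref{Lemma:StrictSubcommitteeSatisfiesX} and~\ref{Lemma:atMostKRealCandidates} to reduce to the case in which every subcommittee $W\setminus\{c\}$ violates $\Pi$ and some candidate in $C\setminus W$ is approved by at least one voter. I would also note that if MES or GJCR returned a subcommittee of size strictly less than $k$ which was then completed arbitrarily, the subcommittee itself already satisfies EJR+, so swapping the arbitrary completion yields a distinct EJR+ committee at distance $1$; hence I may assume the rule genuinely fills all $k$ slots, which in turn forces every voter to exhaust her budget of $\tfrac{k}{n}$ (since total budget equals total cost $k$).

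The key structural observation is that for $S := W\setminus\{c_k\}$ with $c_k$ the last-added candidate, every witness $(c', N', \ell)$ of an EJR+ violation of $S$ satisfies $N' \subseteq N_{c_k}$. Indeed, Lemma~\ref{MESEJRPViolationLemma} with $x = k-1$ gives every $v \in N'$ a remaining budget of at least $\tfrac{k}{\ell n}$ just after $c_{k-1}$; since $c_k$ is the only subsequent purchase and all budgets must be exhausted, each $v \in N'$ spends the residual budget on $c_k$, so $c_k \in A_v$.

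I would then pick $c' \in C\setminus W$ to be a witness of an EJR+ violation of $S$ with maximal parameter $\ell$ (handling the degenerate case where only $c_k$ realises this maximum with a separate ad hoc argument) and claim that $W' := S\cup\{c'\}$, at distance $1$ from $W$, satisfies $\Pi$. Supposing not, a residual witness $(c'', N'', \ell')$ of a violation of $W'$ is automatically also a violation of $S$ (since $S \subseteq W'$), so by the structural observation $N''\subseteq N_{c_k}$ and by maximality $\ell' \leq \ell$. The proof then splits into two sub-cases according to whether $c'' = c_k$ or $c'' \neq c_k$, and in each sub-case I would derive a contradiction by combining $\Pi$-satisfaction of $W$ with the rule's actual selection of $c_k$: for MES, this uses $\widehat{q}(c_k) \leq \widehat{q}(c'')$ together with the budget-identity-forced equality $\widehat{q}(c_k) = \widehat{q}(c') = \tfrac{k}{\ell n}$; for GJCR, this uses $|N(c_k)| \geq |N(c'')|$ in iteration $\ell_k$. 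The main technical obstacle is the sub-case $c'' = c_k$, where the residual violation is ``phantom'' in the sense that voters in $N''$ lose coverage through the removed $c_k$ rather than forming a group uncovered by $W$; resolving this requires carefully bookkeeping voter budgets on $N''$ and handling MES tie-breaking (or GJCR's iteration structure) so that the contradiction with the choice of $c_k$ becomes airtight.
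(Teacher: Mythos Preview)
Your reductions via Lemmas~\ref{Lemma:StrictSubcommitteeSatisfiesX} and~\ref{Lemma:atMostKRealCandidates} and the budget-exhaustion observation are fine, and your use of Lemma~\ref{MESEJRPViolationLemma} to conclude $N'\subseteq N_{c_k}$ for any witness of $W\setminus\{c_k\}$ is correct. However, the remainder of the argument has a genuine gap.

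First, your maximality argument for $\ell'\le\ell$ only applies when $c''\in C\setminus W$, since that is the set over which you maximised; when $c''=c_k$ it gives nothing. More seriously, the sub-case $c''=c_k$ is precisely where the difficulty lies, and you acknowledge it is unresolved. In fact your scheme of always removing $c_k$ can break down earlier: if every voter in $N(c_k)$ approves only candidates in $W$, then the sole witness for $W\setminus\{c_k\}$ is $c_k$ itself, and there is no $c'\in C\setminus W$ to pick at all. Your ``separate ad hoc argument'' would then have to produce a swap candidate from scratch, and it is not clear how to show EJR+ for the result without additional structure.

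The paper's proof sidesteps both issues by a different choice of removed candidate and a different contradiction mechanism. It removes $c_t$ where $t$ is the \emph{largest} index such that some voter $v\in N(c_t)$ approves a candidate $c\in C\setminus W$; this guarantees a swap-in candidate by construction, and crucially forces every $c_s$ with $s>t$ to be the unique witness of the EJR+ violation of $W\setminus\{c_s\}$ (since voters in $N(c_s)$ approve nothing outside $W$). The paper then swaps in this arbitrary $c$ (no maximal-$\ell$ selection), and to rule out a violation of $(W\cup\{c\})\setminus\{c_t\}$ it does not case-split on the witness but instead builds an \emph{alternative payment system}: keep the rule's payments for $c_1,\dots,c_t$, and for each $c_s$ with $s>t$ let the witness group $N'_s$ pay uniformly. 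One checks this is a valid payment system, so total spending equals $k$; but any voter in the hypothetical violating group $N'$ is shown to pay strictly less than $\tfrac{k}{n}$, contradicting full exhaustion. This accounting argument is the key idea your proposal is missing.
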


\begin{proof}
Let $W$ be any committee chosen by MES or GJCR. We may assume that there exists a candidate in $C\setminus W$ that is approved by at least one voter---indeed, otherwise we can apply \Cref{Lemma:atMostKRealCandidates} to conclude that $W$ is not isolated in $\Pi(A,k)$.     
We may also assume that for every $c\in W$, the subcommittee $W \setminus \{c\}$ suffers from an EJR+ violation, since otherwise we can apply \Cref{Lemma:StrictSubcommitteeSatisfiesX}.
Furthermore, as argued after the proof of \Cref{thm:PAV-non-isolation}, we may assume that every voter spends her entire budget $\frac{k}{n}$ on~$W$. 

Let $t\in[k]$ be the largest index for which there exists a candidate $c \in C \setminus W $ and a voter $v$ such that $c \in A_v $ and $v \in N(c_t)$. 
In the following, we will show that the
committee $W' \coloneqq  (W \cup \{c\}) \setminus \{c_t\}$, which has distance $1$ to $W$, satisfies EJR+, thereby establishing the non-isolation of $W$. 
Assume for the sake of contradiction that $W'$ violates EJR+, and let $(c', N', \ell)$ be a witness of this violation. 
As $c'$ is also a witness of an EJR+ violation for $\{c_1, \dots, c_{t-1}\}$, we know by \Cref{MESEJRPViolationLemma} that every voter in $N'$ must have a budget of at least $\frac{k}{\ell\cdot n}$ left before $c_t$ was bought. 
By definition of $t$, voter~$v$ did not buy any candidate in $\{c_{t+1}, \dots, c_k\}$. 
By definition of GJCR and MES, $v$ spent at most $\frac{k}{\ell\cdot n}$ on any candidate before $c_t$.
Moreover, $v$ spent all of her remaining money on $c_t$.
This amount is again at most $\frac{k}{\ell \cdot n}$, since $c'$ could be bought for a price of at most $\frac{k}{\ell \cdot n}$ at the time when $c_t$ got purchased.
Since $v$ also approves $c$, we have that $v$ approves at least $\ell$ candidates in $W'$. 
Therefore, $v\not\in N'$.

We claim that every voter in $N'$ pays for $c_t$. 
Assume for contradiction that this is not the case.
This means that $c' \neq c_t$, and there exists a voter $v' \in N'$ that has money left after the purchase of $c_t$. 
In order for $v'$ to spend her entire budget, there must exist a  candidate $c_r$ with $r > t$ that $v'$ buys. 
However, this contradicts the maximality of $t$, as $v'$ approves $c' \in C \setminus W$ and buys $c_r$. 
Hence, every voter in $N'$ pays for $c_t$.
Now, recall that each member of $N'$ has a budget of at least $\frac{k}{\ell \cdot n}$ left before $c_t$ was bought, and $v$ spent positively on $c_t$. 
Thus, each member of $N'$ spent less than $\frac{k}{\ell \cdot n}$ on $c_t$, and therefore still has budget left after buying $c_t$.
For each candidate $c_s \in \{c_{t+1},  \dots, c_k\}$, we know  by assumption that removing the candidate from~$W$ causes an EJR+ violation (with candidate $c_s$ itself, as the voters buying $c_s$ approve no candidates outside of $W$).  
Hence, for every such $c_s$, there exists a witness $(c_s, N'_s, \ell_s)$ of an EJR+ violation of $W \setminus \{c_s\}$. 

Recall the definition of a payment system (\Cref{def:affordable}).
Instead of considering the payment system of MES or GJCR for the candidates in $\{c_{t+1}, \dots, c_k\}$, for each $s\in\{t+1,\dots,k\}$, assume that each $v_i \in N'_s$ pays $\frac{1}{\lvert N'_s \rvert} \leq \frac{k}{\ell_s \cdot n}$ for $c_s$. 
For the candidates in $\{c_1, \dots, c_t\}$, we keep the payments as done by MES and GJCR.
We claim that this leads to a valid payment system for $W$, i.e., every voter pays at most $\frac{k}{n}$ in total and only for candidates that she approves, and each candidate costs exactly $1$ in total.
The fact that each candidate costs $1$ follows directly from our construction of the payment system.

A voter that does not pay for any candidate in $\{c_{t+1}, \dots, c_k\}$ pays at most $\frac{k}{n}$ in total, due to the validity of the payment systems of GJCR and MES.
Let
$v_i \in N$ be an arbitrary voter that pays for at least one candidate in $\{c_{t+1}, \dots, c_k\}$---let $c_r$ be one such candidate---and denote by
$q_i$ be the number of candidates in $W$ that voter $v_i$ pays for. 
For every candidate $c_s \in \{c_{t+1}, \dots,c_k\}$ that voter $v_i$ pays for in this payment system, she pays at most
$\frac{k}{q_i \cdot n}$. 
Indeed, for $s \in \{{t+1}, \dots, k\}$, in order for $v_i \in N'_s$ to hold, it must be that $\ell_s \geq q_i$: voter $v_i$ approves at least $q_i$ candidates in $W$, since she approves every candidate she pays for. 
Furthermore, in both MES and GJCR, she does not pay more than $\frac{k}{q_i \cdot n}$ for any candidate in $\{c_1, \dots, c_t\}$. 
To see this, recall that there exists an EJR+ violation for $W \setminus \{c_r\}$ due to some witness $(c_r,N'_r,\ell_r)$. This violation is therefore also an EJR+ violation for the subset $\{c_1, \dots, c_t\}$.
By Lemma \ref{MESEJRPViolationLemma}, every voter still had a budget of at least $\frac{k}{\ell_r \cdot n}$ after $c_t$ was purchased. 
Thus, the candidate $c_r$ could have been bought at the purchase of any candidate in $\{c_1, \dots, c_t\}$ for the maximum price of at most $\frac{k}{\ell_r \cdot n} \leq \frac{k}{q_i \cdot n}$, where the inequality $\ell_r \ge q_i$ holds for similar reasons as $\ell_s\ge q_i$ above.
Therefore, voter $v_i$ paid at most $\frac{k}{q_i \cdot n}$ for any candidate in $\{c_1, \dots, c_t\}$. 
Since voter $v_i$ does not pay more than $\frac{k}{q_i \cdot n}$ for any candidate, we conclude that $v_i$ pays at most $\frac{k}{n}$ in this payment system, and the payment system is valid.

Now, since there are $k$ candidates and each candidate costs~$1$ in this payment system, every voter spent her entire budget of $\frac{k}{n}$ on the candidates in $W$. 
Consider any voter $v_i \in N'$ paying for $q_i$ candidates in $W$. 
As observed earlier, $v_i$ still has some budget left after $c_t$ gets bought. 
Therefore, $v_i$ has to buy at least one candidate in $\{c_{t+1}, \dots,c_k\}$. 
From the previous paragraph, $v_i$ pays at most $\frac{k}{q_i \cdot n}$ for any candidate she purchases. 
Further, recall that for candidate $c_t$, she pays strictly less than $\frac{k}{\ell \cdot n}$.
Since $v_i \in N'$, it holds that $\ell \geq q_i$. 
Therefore, $v_i$ pays strictly less than  $\frac{k}{q_i\cdot n}$ for candidate $c_t$ and at most $\frac{k}{q_i \cdot n}$ for the other $q_i-1$ candidates she purchases.
It follows that $v_i$ did not spend her entire budget.
This yields the desired contradiction.    
\end{proof}

This leads us to our open question for this section: can we extend the non-isolation result to show that these committees are also connected to each other? 
\begin{openq}
    Are the committees returned by PAV, MES, and GJCR connected 
    in
    $\EJR(A,k)$ (resp., $\EJRP(A,k)$) for all instances $(A,k)$?    
\end{openq}

\section{Restricted Preference Domains}
\label{sec:restricted-domains}

Finally, we turn our attention to \textit{restricted domains}, 
in which the approval profiles are assumed to satisfy some structural constraints.
Specifically, we study two common preference domains, \emph{candidate interval} (CI) and \emph{voter interval} (VI) \citep{ElLa15a}. Both are frequently used in the approval-based multiwinner voting literature and often allow circumventing impossibilities that arise without domain restrictions \citep[see, e.g.,][]{PiSk22c, BIMP24a, DBW+24a}.

We start with the CI domain. An instance $(A, k)$ is in the CI domain if there exists an order $c_1, \dots, c_m$ of the candidates such that for every voter $v \in N$, the set $\{j \in [m] \mid c_j \in A_v\}$ forms an interval of consecutive integers. 
If an instance belongs to this domain, we show that the set of JR committees is connected in a ``shortest-path'' manner.
\begin{theorem}\label{thm:JR_CI_direct_connection}
    On the CI domain, the set of JR committees is connected.
    Moreover, any two JR committees $W,W'$ are connected by a path of JR committees of length $d(W,W')$. 
\end{theorem}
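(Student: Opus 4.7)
I would prove this by induction on $d = d(W, W')$. The base case $d = 0$ is immediate. For the inductive step with $d \geq 1$, the plan is to exhibit a single swap $W \to W''$ such that $W''$ is JR and $d(W'', W') = d - 1$; applying the inductive hypothesis to $W''$ and $W'$ then produces the desired path of length $d$.

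To choose the swap, I would enumerate the symmetric difference $(W \setminus W') \cup (W' \setminus W)$ in the CI order as $e_1 < e_2 < \dots < e_{2d}$ and label each $e_r$ as $W$ or $W'$ according to which side it lies on; since each label appears exactly $d$ times, some adjacent pair must have opposite labels. My choice of pair depends on the endpoint labels: (i) if $e_1 \in W \setminus W'$, take the leftmost pair labeled $(W, W')$; (ii) if $e_{2d} \in W \setminus W'$, take the rightmost pair labeled $(W', W)$; (iii) if $e_1, e_{2d} \in W' \setminus W$, a counting argument (there are $d+1$ maximal $W'$-blocks with total length $d$, so the $W$-blocks cannot all have length $1$) forces some $W$-block to have length at least $2$, and I take the left boundary $(e_{p-1}, e_p)$ of the leftmost such block. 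In each case, I remove the $W$-labeled member $c$ and add the $W'$-labeled member $c'$ to form $W''$.

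To verify that $W''$ satisfies JR, I argue by contradiction: suppose a $1$-cohesive group $N'$ with common candidate $c^*$ witnesses a violation. The JR property of $W$ yields some $v_0 \in N'$ with $A_{v_0} \cap W = \{c\}$; since $A_{v_0}$ is a contiguous interval in the CI order that excludes $c'$ and the next $W$-candidate above $c$ (which lies in $W''$), $c^*$ is pinned into a short interval on one side of the swap. The JR property of $W'$ yields some $v_1 \in N'$ with $A_{v_1} \cap W' \neq \emptyset$; this intersection cannot contain any candidate of $W \cap W' \subseteq W''$, so it contains some $e_s \in (W' \setminus W) \setminus \{c'\}$. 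The case-specific choice of the swap ensures that $s$ sits on the wrong side of the pair, so the CI interval $A_{v_1}$ spanning both $e_s$ and $c^*$ must cross a candidate in $W''$, contradicting $A_{v_1} \cap W'' = \emptyset$.

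The main technical obstacle I expect is case (iii): the witness $e_s$ can lie either to the left of the swap pair or to the right, and both placements must be separately ruled out. Here the length-$\geq 2$ property of the chosen $W$-block is essential: the candidate $e_{p+1}$ immediately to the right of the pair lies in $W \setminus \{c\} \subseteq W''$, so any $A_{v_1}$-interval containing $c^*$ and some $e_s$ with $s \geq p + 2$ must cross $e_{p+1}$; symmetrically, the added candidate $e_{p-1} = c' \in W''$ blocks any $A_{v_1}$-interval containing $c^*$ and an $e_s$ with $s \leq p - 2$.
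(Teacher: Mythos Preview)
Your proof is correct, but it takes a more circuitous route than the paper's. The paper avoids your three-way case split entirely: it simply lets $d_x$ be the leftmost element of $W\setminus W'$ and $e_y$ the leftmost element of $W'\setminus W$ in the CI order, assumes WLOG that $d_x$ precedes $e_y$, and swaps $d_x$ out for $e_y$. The JR verification then hinges on one clean structural fact: because $e_y$ is leftmost in $W'\setminus W$, every $W'$-candidate to its left already lies in $W\cap W'\subseteq W^*$, so the witness $v'$ from $W'$ must approve some $e_z$ with $e_z\ge e_y$; a short CI argument using the shared candidate of the cohesive group then forces either $v$ or $v'$ to approve $e_y$. Your approach, by contrast, always modifies $W$ (never $W'$), which is why you need the endpoint-label case analysis and, in case~(iii), the counting argument producing a $W$-block of length at least~$2$ to guarantee a blocker $e_{p+1}\in W''$ on the right. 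Both arguments exploit the same underlying interval-crossing idea; the paper's ``take both leftmost elements'' choice just packages it more uniformly, at the cost of sometimes stepping from $W'$ toward $W$ rather than from $W$ toward $W'$.
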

\begin{proof}
Let $W = \{d_1,\dots, d_k\}$ and $W' = \{e_1,\dots, e_k\}$ be two distinct JR committees, where the enumeration is according to the CI ordering.
Let $x,y\in[k]$ be the smallest indices such that $d_x\notin W'$ and $e_y\notin W$, and assume without loss of generality that $d_x$ comes before $e_y$ in the CI ordering.

We claim that the committee $W^* = (W\cup\{e_y\})\setminus\{d_x\}$ satisfies JR.
To see why this claim is true, let $N'\subseteq N$ be any $1$-cohesive group of voters.
Our goal is to show that $A_v\cap W^* \ne \emptyset$ for some $v\in N'$.
Since $W$ satisfies JR, there exists $v\in N'$ such that $A_v\cap W \neq \emptyset$.
If $A_v\cap W \ne \{d_x\}$, then $A_v\cap W^* \ne \emptyset$.
Assume therefore that $A_v\cap W = \{d_x\}$.
Similarly, since $W'$ satisfies JR, there exists $v'\in N'$ such that $A_{v'}\cap W' \neq \emptyset$. 
If $A_{v'} \cap W \cap W' \neq \emptyset$, then since $d_x \not \in W'$, it holds that $A_{v'} \cap W^* \neq \emptyset$. 
Assume therefore that $A_{v'} \cap W \cap W' = \emptyset$. 
Since $e_1,\dots,e_{y-1}\in W$, we have $\emptyset\neq A_{v'}\cap W\subseteq \{e_y,\dots,e_k\}$.
If $e_y\in A_{v'}\cap W$, then $A_{v'}\cap W^* \ne \emptyset$ and we are done.
Else, $A_{v'}\cap W\subseteq \{e_{y+1},\dots,e_k\}$; let $e_z\in A_{v'}\cap W$, and note that $d_x,e_y,e_z$ are arranged in this order.
Since $v$ and $v'$ belong to the $1$-cohesive group $N'$, they share an approved candidate.
If this candidate $c$ comes before $e_y$ in the ordering, then $v'$ approves both $e_z$ and $c$, so by CI, $v'$ also approves $e_y$.
Likewise, if $c$ comes after $e_y$ in the ordering, then $v$ approves both $d_x$ and $c$, so by CI, $v$ also approves $e_y$.
Hence, in both cases, some voter in $N'$ approves the candidate $e_y\in W^*$, establishing the claim.

We now proceed to prove the theorem.
Let $W$ and $W'$ be two JR committees.
Using the claim, we can either replace a candidate from $W\setminus W'$ with one from $W'\setminus W$ or vice versa in such a way that JR is maintained.
By repeating this process, we can construct a path between $W$ and $W'$ of length $d(W,W')$.
\end{proof}

Next, we consider the VI domain.
An instance is in the VI domain if there exists an ordering $v_1, \dots, v_n$ of the voters such that for each candidate $c$, the support set $N_c$ forms an interval with respect to the ordering---more formally, $\{i \in [n] \mid c \in A_i\} = \{i \in [n] \mid i^* \le i\le j^*\}$ for some $i^*, j^*$. 
We say that a candidate $c$ is \emph{Pareto-dominated} by a candidate $c'$ if $N_c$ is a strict subset of $N_{c'}$. 
A candidate is \emph{Pareto-optimal} if it is not Pareto-dominated by any other candidate.

\begin{restatable}{theorem}{corvi}
On the VI domain, the set of JR committees is connected. 
Moreover, if two JR committees $W,W'$ contain no Pareto-dominated candidates, then they can be connected by a path of JR committees of length $d(W,W')$.
\end{restatable}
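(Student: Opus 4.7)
The plan is to prove the theorem by working from the stronger, shortest-path claim down to the general connectivity claim.

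For the shortest-path case, the key structural observation is that when all candidates in $W \cup W'$ are Pareto-optimal, their support intervals $[l_c, r_c]$ pairwise do not nest, so the ordering of these candidates by $l_c$ coincides (up to tiebreaking) with the ordering by $r_c$. I would sort $W = \{d_1, \dots, d_k\}$ and $W' = \{e_1, \dots, e_k\}$ by this common order, choose the smallest $x$ with $d_x \notin W'$ and the smallest $y$ with $e_y \notin W$, and by symmetry assume $d_x$ precedes $e_y$. The entire claim then reduces to showing that $W^\star := (W \setminus \{d_x\}) \cup \{e_y\}$ satisfies JR, since iterating this swap yields a path of length $d(W, W')$.

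To establish the swap claim, I would suppose for contradiction that some candidate $c^*$ witnesses a JR violation of $W^\star$, i.e.\ $|N_{c^*} \setminus N_{W^\star}| \geq n/k$. JR of $W$ produces a voter $v \in N_{c^*} \cap (N_W \setminus N_{W^\star})$ that must approve $d_x$ but nothing else in $W \cup \{e_y\}$; JR of $W'$ symmetrically produces $v' \in N_{c^*} \cap (N_{W'} \setminus N_{W^\star})$ approving some $e_j \in W' \setminus W$ with $j > y$ (so $l_{e_j} > l_{e_y}$ and $r_{e_j} > r_{e_y}$), but neither $e_y$ nor anything in $W \setminus \{d_x\}$. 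The VI interval structure then pins $v \leq r_{d_x} \leq r_{e_y}$ combined with $v \notin [l_{e_y}, r_{e_y}]$ to give $v < l_{e_y}$, and symmetrically $v' > r_{e_y}$. Since $v, v' \in N_{c^*} = [l_{c^*}, r_{c^*}]$, sandwiching gives $N_{e_y} \subseteq N_{c^*}$, and Pareto-optimality of $e_y$ upgrades this to equality. But $e_y \in W^\star$ yields $N_{c^*} = N_{e_y} \subseteq N_{W^\star}$, contradicting $|N_{c^*} \setminus N_{W^\star}| > 0$.

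For the general connectivity claim, I would reduce to the above by transforming each committee into a JR committee of only Pareto-optimal candidates via JR-preserving single swaps. The core move: if $d \in W$ is Pareto-dominated by a Pareto-optimal $d^*$ with $d^* \notin W$, swap $d$ for $d^*$ — this only enlarges $N_W$ and hence preserves JR; if instead $d^* \in W$ already, then $N_{W \setminus \{d\}} = N_W$, so $W \setminus \{d\}$ still satisfies JR and $d$ can be swapped for any unused candidate. Iterating drives $W$ to a Pareto-optimal JR committee $\tilde W$ and likewise $W' \to \tilde W'$, after which the shortest-path argument applies. The edge case in which every Pareto-optimal candidate in $C$ already lies in $W$ is handled by the observation that any committee containing this entire ``core'' of Pareto-optimals is automatically JR (since each Pareto-dominated candidate's support is contained in some core candidate's), so one can freely shuffle the non-core slots to manoeuvre into a committee one swap away from $W'$, with the final swap justified by $|N_p \setminus N_{W'}| < n/k$ for every missing Pareto-optimal $p$. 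The main obstacle is the interval-geometry argument in the Pareto-optimal case, which crucially relies on Pareto-optimality to tie the $l$-order and $r$-order of candidates in $W \cup W'$ together; without this, the positions of $v$ and $v'$ cannot be forced into the configuration $v < l_{e_y} \leq r_{e_y} < v'$ required to sandwich $N_{e_y}$ inside $N_{c^*}$, which is exactly why the Pareto hypothesis appears in the second sentence of the statement.
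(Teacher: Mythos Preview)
Your shortest-path argument for the Pareto-optimal case is correct and takes a genuinely different route from the paper. The paper does not argue directly on VI: it instead strips all Pareto-dominated candidates from the instance, invokes the fact (Delemazure et al.) that a VI instance with only Pareto-optimal candidates is already a CI instance, and then applies the CI shortest-path result (Theorem~\ref{thm:JR_CI_direct_connection}). Your interval-geometry argument is essentially a self-contained VI analogue of that CI proof---same first-difference swap, same two witnesses $v,v'$ drawn from JR of $W$ and $W'$, and the non-nesting of Pareto-optimal intervals plays the role that the CI candidate ordering plays in the paper's proof. Your version avoids the external citation at the cost of redoing the combinatorics; the paper's version is shorter given the CI theorem is already available.

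For general connectivity your reduction matches the paper's, but your edge-case handling (fewer than $k$ Pareto-optimal candidates) is muddled. You appear to try to go from $\tilde W$ (which now contains the entire Pareto-optimal core) straight to the \emph{original} $W'$, with a ``final swap justified by $|N_p\setminus N_{W'}|<n/k$''. That inequality is true, but if $W'$ is missing several Pareto-optimal candidates you need several such swaps, and the intermediate committees are neither supersets of the core nor supersets of $W'$, so nothing you have written certifies their JR status. The clean fix---which both the paper and your own earlier sentence ``likewise $W'\to\tilde W'$'' point to---is simply to transform \emph{both} endpoints until each contains every Pareto-optimal candidate; then $\tilde W$ and $\tilde W'$ differ only in non-core slots, every committee containing the full core is automatically JR, and a straight shuffle connects them. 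Drop the ``final swap'' paragraph and the argument is complete.
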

\begin{proof}
    First, observe that if for two committees $W_1$ and $W_2$ it holds that $N_{W_1} \subseteq N_{W_2}$ and $W_1$ satisfy JR, then $W_2$ also satisfies JR. 
    Therefore, given a committee that satisfies JR, if we replace a Pareto-dominated candidate $c$ with a candidate that Pareto-dominates $c$, the resulting committee still satisfies JR. 
    The same is true if we replace a Pareto-dominated candidate $c$ with an arbitrary candidate, provided that there is a candidate that Pareto-dominates $c$ in the committee.
    We perform a case distinction based on the number of candidates that are Pareto-optimal.
    
    \textbf{Case 1:}  There are at least $k$ Pareto-optimal candidates in the instance.
    We apply the following procedure for $W$ (and analogously for $W'$). 
    For every candidate $c \in W$ that is Pareto-dominated, we replace $c$ with a Pareto-optimal candidate $c' \not \in  W$ that Pareto-dominates $c$. 
    If every candidate that Pareto-dominates $c$ is either in $W$ or also Pareto-dominated (by some candidate in $W$), we replace $c$ with any Pareto-optimal candidate from $C \setminus W$; there always exists such a candidate since we assume that there are at least $k$ Pareto-optimal candidates. 
    During this procedure, both $W$ and $W'$ still satisfy JR due to the observations above.

    After applying this procedure, remove all Pareto-dominated candidates from the instance.
    Observe that if a $1$-cohesive group witnesses a JR violation together with a Pareto-dominated candidate $c$, it also witnesses the violation together with a candidate $c'$ that Pareto-dominates $c$. Therefore, each committee that satisfies JR in the instance without Pareto-dominated candidates also satisfies JR in the instance with those candidates; the converse also holds if the committee does not contain Pareto-dominated candidates.
    \citet[][Proposition 4.3]{DBW+24a} have shown that a VI instance without Pareto-dominated candidates is a CI instance. 
    Therefore, we can apply Theorem \ref{thm:JR_CI_direct_connection} to conclude that $W$ and $W'$ are connected.
    Moreover, if $W$ and $W'$ contain no Pareto-dominated candidates to begin with, the procedure in the previous paragraph is vacuous, and so $W$ and $W'$ are connected by a path of length $d(W,W')$.

    \textbf{Case 2:}
    There are fewer than $k$ Pareto-optimal candidates in the instance.
    Note that in this case, $W$ and $W'$ must contain Pareto-dominated candidates.
    We apply the same procedure as in Case~1, except that if there does not exist another Pareto-optimal candidate that we can replace a Pareto-dominated candidate $c\in W$ with, we terminate the procedure. 
    After this procedure, both $W$ and $W'$ are transformed into committees $W_O$ and $W'_O$ that contain all Pareto-optimal candidates in the instance. 
    Trivially, every committee that contains a subcommittee with all Pareto-optimal candidates in the instance satisfies JR. 
    Therefore, one by one, we can replace all Pareto-dominated candidates in $W_O$ with the Pareto-dominated candidates in $W'_O$, which means that $W_O$ and $W'_O$ are connected.    
\end{proof}

By contrast, shortest-path connections may not be possible if the committees contain Pareto-dominated candidates.

\begin{restatable}{proposition}{ExampleJRNotDirectlyConnectedOnVI}
    On the VI domain, there exist JR committees $W,W'$ that cannot be connected by a path of JR committees of length $d(W,W')$.
        \label{prop:VI-non-directly}
\end{restatable}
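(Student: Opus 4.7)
The plan is to exhibit an explicit VI instance together with two JR committees $W, W'$ whose pairwise distance is $d := d(W,W')$ such that every sequence of exactly $d$ swaps connecting them passes through a non-JR committee, while a longer path of JR committees still exists. By the preceding theorem, whenever neither committee contains a Pareto-dominated candidate a length-$d$ JR path exists; hence the construction must place Pareto-dominated candidates in both $W$ and $W'$. In the construction I have in mind, the voters are $v_1, \dots, v_n$ arranged in the VI order, and the candidates consist of a Pareto-optimal ``cover-all'' candidate $c^*$ with support $\{v_1, \dots, v_n\}$ together with several Pareto-dominated candidates whose supports are narrow intervals; both $W$ and $W'$ consist entirely of dominated candidates (so that $c^*$ is available only as a detour).

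To prove the claim I would proceed in four steps. First, verify the VI property by exhibiting each candidate's support as an interval in the voter order. Second, verify that $W$ and $W'$ each satisfy JR by enumerating the $1$-cohesive groups---which, by VI, are precisely the subsets of size at least $n/k$ of some interval support---and checking that each such group contains a voter whose approval set meets $W$ (resp.\ $W'$). Third, show that every first-step swap $(W \setminus \{a\}) \cup \{b\}$ with $a \in W\setminus W'$ and $b \in W'\setminus W$ produces a committee that leaves some $1$-cohesive group uncovered: intuitively, $a$ is the unique member of $W$ covering a particular cohesive group, and the replacement $b$ lives in a disjoint interval and therefore cannot take over that coverage. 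Since any shortest path begins with such a swap, every shortest path must visit a non-JR committee at its very first step. Fourth, exhibit a concrete detour of length $d+2$: first replace one dominated candidate of $W$ by $c^*$, then perform all remaining swaps between the dominated candidates of $W$ and $W'$ (each intermediate then contains $c^*$, which alone covers every voter and hence every cohesive group, so JR is trivial), and finally swap $c^*$ out for the last dominated candidate of $W'$. This establishes connectedness in $\JR(A,k)$ via a path strictly longer than $d$.

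The main obstacle is Step 3 combined with Step 2: VI forces every cohesive group to sit inside some interval support, and the narrow intervals that make the mixed intermediates fail JR must not simultaneously create cohesive groups uncovered by $W$ or $W'$. To resolve this tension, the essential cohesive groups will need to straddle the boundary between the ``$W$-half'' and ``$W'$-half'' of the voter order: each such group must be reached by exactly one member of $W$ and exactly one member of $W'$, while remaining unreached by any candidate that appears in a mixed intermediate. Engineering the supports of the dominated candidates to realize this delicate balance (while keeping VI intact and $c^*$ as the only Pareto-optimal option) is the crux of the argument.
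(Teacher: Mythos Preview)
Your outline identifies the right starting point---Pareto-dominated candidates must appear in $W$ and $W'$---but the concrete design choice of a \emph{single} cover-all candidate $c^*$ is where the plan runs into trouble. With $N_{c^*}=N$, \emph{every} set of at least $n/k$ voters is $1$-cohesive, so a committee $X$ satisfies JR if and only if $|N_X|>n(1-1/k)$. For $k=2$ this coverage condition becomes: $|N_{a_1}\cup N_{a_2}|>n/2$, $|N_{b_1}\cup N_{b_2}|>n/2$, yet $|N_{a_i}\cup N_{b_j}|\le n/2$ for all mixed pairs. If the $a$'s are a left and a right interval, the last four inequalities force each $N_{b_j}$ to be an interval disjoint from one of $N_{a_1},N_{a_2}$ and hence of size at most $n/2-|N_{a_i}|$, which makes $|N_{b_1}\cup N_{b_2}|\le n/2$---a contradiction. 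More general interval placements run into the same wall: the mixed-pair constraints cap every $|N_{b_j}|$ so tightly that $W'$ cannot reach the required coverage. So the ``$W$-half / $W'$-half'' picture you describe does not close, at least not for small $k$, and you have not supplied a working instance.

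The paper avoids this by \emph{not} using a universal candidate. It takes $n=6$, $k=2$ and two overlapping ``large'' candidates $c_5$ (support $\{v_1,\dots,v_4\}$) and $c_6$ (support $\{v_3,\dots,v_6\}$); the only $1$-cohesive groups are then $3$-subsets of one of these two supports. This is the key: it lets $W'=\{c_3,c_4\}$ cover just the overlap $\{v_3,v_4\}$ and still satisfy JR (any $3$-subset of either support hits $v_3$ or $v_4$), something impossible under a cover-all. With $W=\{c_1,c_2\}$ covering $\{v_1,v_2,v_5,v_6\}$, all four mixed committees leave a cohesive triple uncovered, and the proof is a short case check. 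Your Step~4 is also unnecessary: the proposition only asserts that no length-$d(W,W')$ JR path exists, and connectedness via a longer path is already guaranteed by the preceding theorem.
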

\begin{proof}
    Consider the instance with voters $N = \{v_1, \dots, v_6\}$, candidates $C = \{c_1,\dots,c_6\}$, and $k = 2$ (so $\frac{n}{k} = 3$). Furthermore, let $c_1 \in A_v$ for $v \in \{v_1,v_2\}$, $c_2 \in A_v$ for $v \in \{v_5,v_6\}$,
    $c_3 \in A_v$ for $v \in \{v_3\}$, 
    $c_4 \in A_v$ for $v \in \{v_4\}$,
    $c_5 \in A_v$ for $v \in \{v_1, \dots, v_4\}$, and
    $c_6 \in A_v$ for $v \in \{v_3, \dots,v_6\}$. A visualization of this profile can be seen in \Cref{tab:VI-notdirectly}.
    Clearly, this profile is in VI.
    Since only $c_5,c_6$ are approved by at least  $\frac{n}{k} = 3$ voters, for a committee $W_{\mathrm{JR}}$ to satisfy JR, it is sufficient that at least two voters in $\{v_1,v_2,v_3,v_4\}$ and at least two voters in $\{v_3,v_4,v_5,v_6\}$ approve at least one candidate in $W_{\mathrm{JR}}$. 
    Therefore,  both $W = \{c_1,c_2\}$ and $W' = \{c_3,c_4\}$ satisfy JR.
    On the other hand $W_1 = \{c_1,c_3\}$ and $W_2 = \{c_1,c_4\}$ do not satisfy JR---the witnesses for a JR violation are $\{v_4,v_5,v_6\}$ and $c_6$ for $W_1$, and $\{v_3,v_5,v_6\}$ and $c_6$ for $W_2$. 
    Similarly,  $W_3 = \{c_2,c_3\}$ and $W_4 = \{c_2,c_4\}$ do not satisfy JR.
    Therefore, $W$ and $W'$ cannot be connected by a path of JR committees of length $d(W,W') = 2$.
\end{proof}

\begin{table}[tbh]
    \centering
    \begin{tabular}{@{}lcccccc@{}}
    \toprule
      & $v_1$ & $v_2$ & $v_3$ & $v_4$ & $v_5$ & $v_6$ \\ \midrule
    $c_1$ & $\times$ & $\times$ &  &        &        &        \\
    $c_2$ &  &  &  &        &      $\times$  & $\times$       \\
    $c_3$ &  &  & $\times$ &  &  &  \\
    $c_4$ &  &  &  & $\times$ &  &  \\
    $c_5$ & $\times$ & $\times$ & $\times$ & $\times$ &  &  \\
    $c_6$ &        &        & $\times$       & $\times$ & $\times$ & $\times$ \\
\bottomrule
\end{tabular}
\caption{The profile constructed for the proof of \Cref{prop:VI-non-directly}. A mark in entry $(i,j)$ indicates that $c_i$ is approved by $v_j$.}\label{tab:VI-notdirectly}
\end{table}

For EJR, we show that the set of committees satisfying the notion may not be connected through shortest paths, even for instances that simultaneously satisfy VI and CI.

\begin{restatable}{proposition}{EJRNotDirectlyConnectedOnVICI}
\label{EJR+_Not_Direct_On_VI_Cap_CI}
On the CI and VI domains, there exist EJR committees $W,W'$ that cannot be connected by a path of EJR committees of length $d(W,W')$.
\end{restatable}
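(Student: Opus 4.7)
I would prove the claim by exhibiting an explicit instance that simultaneously satisfies CI and VI, together with two EJR committees $W, W'$ with $d(W,W') \geq 2$, in which every length-$d(W,W')$ swap sequence from $W$ to $W'$ passes through a committee violating EJR. Since \Cref{thm:JR_CI_direct_connection} already guarantees JR shortest-path connectivity under CI alone, any obstruction for EJR must be witnessed by some $\ell$-cohesive group with $\ell \geq 2$; otherwise the JR argument would transfer directly.

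The high-level plan is to: (i) pick parameters with $n/k$ small enough---say $n/k = 2$---that 2-cohesive groups of four voters with two common candidates arise naturally; (ii) place voters and candidates on a common line and design the approval profile so that each voter's approvals and each candidate's supporters are contiguous intervals in this ordering (this gives CI and VI for free); and (iii) introduce several overlapping 2-cohesive voter blocks whose common-candidate pairs are tightly coupled, so that $W$ and $W'$ each satisfy every 2-cohesive group via a genuinely different combination of candidates. The final step is to verify that every single-candidate swap from $W$ toward $W'$ destroys at least one 2-cohesive witness---no voter in the relevant block would any longer approve two of the new committee's candidates.

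The main obstacle lies in step (iii), which requires a delicate calibration. If the common-candidate set of some 2-cohesive group is too large, the EJR condition becomes loose and one can always perform an \emph{aligned} swap---replacing a committee member with an alternative that the same voter also approves---while preserving EJR. If the common set is too small, EJR pins down so many candidates that the set of EJR committees collapses to a single choice. The construction must therefore carefully interlock the blocks so that each candidate in $W \setminus W'$ is the sole witness for some 2-cohesive group, while no candidate in $W' \setminus W$ is able to restore that witness on its own, all while respecting the rigid doubly-interval geometry imposed by CI$+$VI. Unlike the VI-only case (\Cref{prop:VI-non-directly}), Pareto-dominated candidates do not drive the obstruction here; rather, it stems from the interaction of two or more 2-cohesive constraints whose alternative-satisfaction patterns cross, so that a single swap either breaks a left-block constraint or breaks a right-block constraint but cannot avoid both.
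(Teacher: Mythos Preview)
Your high-level setup is sound and matches the paper's choices: build a single instance in CI$\cap$VI, take $n/k = 2$, and exploit an $\ell \ge 2$ cohesive group as the obstruction. Where you diverge is in the mechanism. You plan to interlock \emph{several} $2$-cohesive blocks so that every single swap breaks one of them; the paper instead uses one $\ell = k = 3$ cohesive group---namely the entire electorate. Concretely, with $n=6$, $k=3$, voters $v_1,v_2,v_3$ approving $\{c_1,\dots,c_5\}$ and $v_4,v_5,v_6$ approving $\{c_3,\dots,c_7\}$, every voter shares $\{c_3,c_4,c_5\}$, so EJR forces \emph{some} voter to approve all three committee members. The committees $\{c_1,c_2,c_3\}$ and $\{c_3,c_6,c_7\}$ both satisfy this, but any single swap between them yields a committee mixing a ``left'' candidate ($c_1$ or $c_2$) with a ``right'' one ($c_6$ or $c_7$), and no voter approves both sides.

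This single $k$-cohesive constraint is a cleaner lever than what you outline: there is only one inequality to track, and the failure after a swap is immediate. Your interlocking-$2$-cohesive plan might be workable, but you have not exhibited the instance, and the ``delicate calibration'' you anticipate is real---in the paper's instance, for example, all the natural $2$-cohesive groups remain satisfied after any swap, so the obstruction there genuinely lives at $\ell = 3$, not $\ell = 2$. If you pursue your route, you would need a different profile and would have to check every intermediate committee against every $2$-cohesive block; the paper sidesteps all of that with one global constraint.
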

\begin{proof}
Consider the instance with voters $N = \{v_1, \dots, v_6\}$, candidates $C = \{c_1, \dots, c_7\}$, $k = 3$ (so $\frac nk = 2$), and the following approval ballots: 
    $A_1 = A_2 = A_3 = \{c_1,c_2, c_3,c_4,c_5\}$ and $A_4 = A_5 = A_6 = \{c_3,c_4, c_5,c_6,c_7\}$.
A visualization of this profile can be seen in \Cref{tab:EJR+_Not_Direct_On_VI_Cap_CI}.    
It is easy to see that this instance is in both CI and VI.

First, we claim that the committee $\{c_1,c_2,c_3\}$ satisfies EJR. For this, observe that the first three voters are fully satisfied, while the last three voters can only claim a single candidate. 
    As the latter voters approve $c_3$, they do not constitute an EJR violation. 
    Due to the symmetry of the instance, the committee $\{c_3, c_6, c_7\}$ also satisfies EJR. 
    Now, note that EJR in this instance requires that some voter to approve all three candidates in the committee, since all voters together form a $3$-cohesive group. 
    However, after replacing $c_1$ or $c_2$ with $c_6$ or $c_7$ in $\{c_1,c_2,c_3\}$, no voter approves three candidates in the committee.
\end{proof}

\begin{table}[tbh]
     \centering
     \begin{tabular}{@{}lcccccc@{}}
     \toprule
       & $v_1$ & $v_2$ & $v_3$ & $v_4$ & $v_5$ & $v_6$ \\ \midrule
     $c_1$ & $\times$ & $\times$ & $\times$ &        &        &        \\
     $c_2$ & $\times$ & $\times$ & $\times$ &        &        &        \\
     $c_3$ & $\times$ & $\times$ & $\times$ & $\times$ & $\times$ & $\times$ \\
     $c_4$ & $\times$ & $\times$ & $\times$ & $\times$ & $\times$ & $\times$ \\
     $c_5$ & $\times$ & $\times$ & $\times$ & $\times$ & $\times$ & $\times$ \\
     $c_6$ &        &        &        & $\times$ & $\times$ & $\times$ \\
     $c_7$ &        &        &        & $\times$ & $\times$ & $\times$ \\ \bottomrule
 \end{tabular}
 \caption{The profile constructed for the proof of \Cref{EJR+_Not_Direct_On_VI_Cap_CI}. A mark in entry $(i,j)$ indicates that $c_i$ is approved by $v_j$.}
 \label{tab:EJR+_Not_Direct_On_VI_Cap_CI}
 \end{table}

This leads us to the following open question.
\begin{openq}
    Is the set of EJR committees connected on the CI or VI domain?
\end{openq}

\section{Conclusion and Future Directions}

We have studied the problem of reconfiguring proportional committees. 
In particular, we ask whether, for any two given proportional committees, there is a reconfiguration path consisting only of proportional committees such that each transition corresponds to swapping a pair of candidates. 
We demonstrate that committees satisfying justified representation (JR) or extended justified representation (EJR) do not always exhibit this type of connectivity.
In spite of this, we obtain positive results in three different directions.
Firstly, any two JR committees can be connected via a transition path consisting only of $2$-JR committees, while any two EJR committees can be connected via $4$-EJR committees. 
Secondly, the negative result for JR does not apply to several popular voting rules: the committees produced by these rules belong to the same connected component within the set of JR committees. 
Thirdly, for two important restricted domains, the set of JR committees is guaranteed to be connected. 

Besides the open questions already mentioned, our work leaves several intriguing directions for future research. 
For example, the connectedness of EJR committees is still far less understood than that of JR committees.
In particular, it remains open whether it is always possible to connect two EJR committees via a path consisting only of $2$-EJR committees, or whether the set of EJR committees is guaranteed to be connected in restricted domains.
Moreover, our result that several well-known voting rules produce committees that can be connected by JR committees raises the following high-level question: 
All of these voting rules appear to select stronger committees than necessitated by the proportionality axioms. 
Indeed, the committee in the proof of \Cref{thm:isolated_jr} intuitively contains weak candidates and is not selected by common voting rules. 
Despite the perceived lack of quality, this committee satisfies all known proportionality axioms. 
Is there a principled way to explain this phenomenon, and what are good methods for distinguishing between ``strong'' and ``weak'' proportional committees? 

\section*{Acknowledgments}

This work was partially supported by the Deutsche Forschungsgemeinschaft under grants BR 2312/11-2 and BR 2312/12-1, by the Singapore Ministry of Education under grant number MOE-T2EP20221-0001, and by an NUS Start-up Grant.

\bibliographystyle{named}
\bibliography{group, algo, temp}
\clearpage
\newpage 
\newpage

\end{document}